\newcommand{\stkout}[1]{\ifmmode\text{\sout{\ensuremath{#1}}}\else\sout{#1}\fi}
\newcommand{\Lwt}[1]{w_{\tn{Lee}}\left(#1\right)} 
\newcommand{\notransversal}{\mathbin{\hbox{$\cap$ \kern-.45cm
\raise.3ex\hbox{$\top$}\kern-.26cm \raise.3ex\hbox{$/$}}}}
\newcommand{\oplusq}[1]{\stackrel{\scaleto{#1}{4pt}}{\oplus}} 
\newcommand{\odotq}[1]{\stackrel{\scaleto{#1}{4pt}}{\odot}} 
\newcommand{\gray}{\mathcal{g}} 
\newcommand{\cmark}{\ding{51}}%
\newcommand{\xmark}{\ding{55}}
\newtheorem{theorem}{\mytheoremname}
\newtheorem{lemma}[theorem]{\mylemmaname}
\newtheorem{corollary}[theorem]{\mycorollaryname}
\newtheorem{proposition}[theorem]{\mypropositionname}
\newtheorem{definition}[theorem]{\mydefinitionname}
\newtheorem{remark}[theorem]{\myremarkname}
\newtheorem{example}[theorem]{\myexamplename}
\begin{document}

\title[Linearity of $\Integers_{2^L}$-Linear Codes via Schur Product]{Linearity of $\Integers_{2^L}$-Linear Codes via Schur Product}


\author[1]{\fnm{Gustavo} \sur{T. Bastos}}\email{gtbastos@ufsj.edu.br}
\equalcont{These authors contributed equally to this work.}

\author*[2]{\fnm{Maiara F.} \sur{Bollauf}}\email{maiara.bollauf@ut.ee}

\author[3]{\fnm{Agnaldo J.} \sur{Ferrari}}\email{agnaldo.ferrari@unesp.br}
\equalcont{These authors contributed equally to this work.}

\author[4]{\fnm{\O yvind} \sur{Ytrehus}}\email{oyvind.ytrehus@uib.no}
\equalcont{These authors contributed equally to this work.}

\affil[1]{\orgdiv{Department of Mathematics and Statistics}, \orgname{Federal University of São João del-Rei}, \state{MG}, \country{Brazil}}

\affil[2]{\orgdiv{Institute of Computer Science}, \orgname{University of Tartu}, \city{Tartu}, \country{Estonia}}

\affil[3]{\orgdiv{Department of Mathematics, School of Sciences}, \orgname{S\~ao Paulo State University -- Campus Bauru}, \state{SP}, \country{Brazil}}

\affil[4]{\orgdiv{Department of Informatics}, \orgname{University of Bergen}, \city{Bergen}, \country{Norway}}


\abstract{We propose an innovative approach to investigating the linearity of $\Integers_{2^L}$-linear codes derived from $\mathbb{Z}_{2^L}$-additive codes using the generalized Gray map. To achieve this, we define two related binary codes: the associated and the decomposition codes. By considering 
the Schur product between codewords, we can determine the linearity of the respective $\mathbb{Z}_{2^L}$-linear code. As a result, we establish a connection between the linearity of the $\Integers_{2^L}$-linear codes with the linearity of the decomposition code for $\Integers_4$ and $\Integers_8$-additive codes. Furthermore, we construct $\mathbb{Z}_{2^L}$-additive codes from \emph{nested} binary codes, resulting in linear $\Integers_{2^L}$-linear codes. This construction involves multiple layers of binary codes, where a code in one layer is the \emph{square} of the code in the previous layer. 
We also present a sufficient 
condition that allows checking nonlinearity of the $\mathbb{Z}_{2^L}$-linear codes by simple binary operations in their respective associated codes. Finally, we employ our arguments to verify the linearity of well-known  $\mathbb{Z}_{2^L}$-linear code constructions, including the Hadamard, simplex, and MacDonald codes.}

\keywords{Generalized Gray map, $\Integers_{2^L}$-linear codes, Schur product, Squares of codes.}

\maketitle

{\def\thefootnote{}\footnotetext{This paper was partially published in the proceedings of the IEEE International Symposium on Information Theory (ISIT'24), Athens, Greece, 2024~\cite{Bastos-Bollauf-Ferrari-Ytrehus-24}.}}

\section{Introduction}

\IEEEPARstart{T}{he} Gray code was patented by Frank Gray~\cite{Gray53} in 1953. 
The code allows traversing the set of binary representations of integers between $0$ and $2^L-1$ in such a way that 
two consecutive representations differ in a single 
bit position. Hammons \emph{et.al.}~\cite{Hammons_etal_94} recognized that the Gray code for $L=2$ possesses an interesting extra property: consider the ring of integers modulo 4, $\Integers_4=\{0,1,2,3\}$, then the standard Gray map $\gray: \Integers_4 \to \Integers_2^2$ is a bijection defined as
\begin{IEEEeqnarray*}{c}
0 \mapsto 00,~ 1 \mapsto 01,~ 2 \mapsto 11,~ 3 \mapsto 10.  
\end{IEEEeqnarray*}
This map can be naturally extended to a vector in $\Integers_4^n$. Indeed $\gray$ is a distance-preserving map, i.e., the Lee distance between two vectors $\vect{u}, \vect{v} \in \Integers_4^n$ is the same as the Hamming distance of their respective Gray maps $\gray(\vect{u}), \gray(\vect{v}) \in \Integers_2^{2n}$ (for more details on weight and distance properties, see~\cite{Hammons_etal_94}).

Several authors generalized the Gray map. Carlet~\cite{carlet} extended it to $\mathbb{Z}_{2^L}$, for $L \geq 2$. Greferath and Schmidt~\cite{GreferathSchmidt_99} studied the Gray isometry in the context of finite chain rings. More recently, Heng and Yue~\cite{HengYue_15} generalized Carlet's result to $\Integers_{p^L}$ for $p$ prime, and Gupta~\cite{Gupta22} introduced yet a different generalization using the theory of modular rings. In this paper we consider Carlet's generalized Gray map $\gray$ applied to a $\Integers_{2^L}$-additive code\footnote{A  $\mathbb{Z}_{2^L}$-additive code is simply an additive subgroup of $\mathbb{Z}_{2^L}^n$.} $\code{C}$, where the image $\gray(\code{C})$ is referred to as \textit{$\Integers_{2^L}$-linear code}. 
Observe that $\Integers_{2^L}$-linear codes are \emph{binary codes that may or may not be linear.}

A comprehensive list of outstanding linear and nonlinear $\Integers_{2^L}$-linear codes obtained via the Gray map can be found in~\cite{AlbertFaist}. Furthermore, the theoretical problem of determining whether a $\Integers_{2^L}$-linear code is linear or nonlinear has been addressed in several influential papers~\cite{carlet, nonlinearity, tapiavega, Eyvazi22} in the area of algebraic coding theory. In 
Tapia-Recillas and Vega's paper~\cite{tapiavega}, the authors provide a criterion to verify the linearity of a $\mathbb{Z}_{2^L}$-linear code $\code{C}$ based on the structure of the $\Integers_{2^L}$-additive code.
Notably, for $\mathbb{Z}_{8}$-linear codes, a significant reduction in the number of pairs of codewords to be analyzed was presented in~\cite{Eyvazi22}. The main result in~\cite{Eyvazi22} states that the linearity of a $\Integers_{8}$-linear code can be verified via a subset obtained from some rows of its generator matrix $\mat{G}_{\code{C}}$. In addition, the authors analyzed conditions on generator polynomials of some cyclic codes so that their respective $\Integers_{8}$-linear codes are also linear. 

The linearity of the Gray map of other families of $\Integers_{2^L}$-additive codes, including  $\mathbb{Z}_{2^L}$-additive simplex codes of types $\alpha$ and $\beta$ as well as $\Integers_{2^L}$-additive Hadamard codes, was extensively explored in the literature~\cite{nonlinearity, CordobaVelaVillanueva19, FernandezCVelaVillanueva_20}.
A partial classification of $\Integers_{2^L}$-linear codes from $\mathbb{Z}_{2^L}$-additive Hadamard codes based on the study of the kernel of a binary
 code, i.e., $K(\code{C}):=\left\{\vect{x} \in \mathbb{F}_2^n 
 : \vect{x}+ \code{C} =  \code{C} \right\}$ was carried out in~\cite{CordobaVelaVillanueva19,FernandezCVelaVillanueva_20}.
 
 Conditions on the parameters of $\mathbb{Z}_{2^L}$-additive simplex codes were presented to state the linearity of the respective $\Integers_{2^L}$-linear codes~\cite{nonlinearity}. The generalized Hamming weights of such $\Integers_{2^L}$-linear codes were discussed in~\cite{GuptaBhandariLal_05}.
 
 An extension to another classical family, the MacDonald codes over $\mathbb{Z}_{2^L}$, is also addressed in~\cite{nonlinearity}. The generator matrices of $\mathbb{Z}_{2^L}$-additive MacDonald codes of types $\alpha$ and $\beta$ are derived from the generator matrices of $\mathbb{Z}_{2^L}$-additive simplex codes of types $\alpha$ and $\beta$, respectively. Thus, the linearity of $\Integers_{2^L}$-linear codes from this family can be analyzed through methods similar to those used for $\mathbb{Z}_{2^L}$-additive Hadamard codes. 

 In this paper, we further explore the linearity properties of binary codes obtained from the generalized Gray map. 
 Our main contributions are summarized as follows:
 \begin{enumerate}
    \item[i)] Given a $\Integers_{2^L}$-additive code $\code{C}$, we introduce (Definitions~
 \ref{def:associated_codes} and \ref{def:concatenated_code}) 
 two auxiliary binary codes induced by $\code{C}$: the \emph{associated} and \emph{decomposition} codes. 
    \item[ii)]  
We introduce conditions (Theorem~\ref{thm:iff_condition_concatenated}) to guarantee 
the linearity of $\Integers_{2^L}$-linear codes based on their binary decomposition together with the \emph{Schur} (coordinate-wise or Hadamard) product of the respective binary codewords. 
As an application, we use our conditions to revisit the linearity 
of $\Integers_{2^L}$-linear codes obtained from $\mathbb{Z}_{2^L}$-additive Hadamard, simplex, and MacDonald families. 
    \item[iii)] For $\Integers_4$ and $\Integers_8$-additive codes, we explore the relationship between the linearity of their respective $\Integers_{2^L}$-linear codes ($L=2,3$) and the linearity of the decomposition code. We also provide a sufficient condition (Lemma~\ref{lem:simple-condition})  for the linearity of $\Integers_4$-linear codes based solely on its generator matrix.
    \item[iv)] 
    We state a sufficient condition (Corollary~\ref{coro:princ})for a 
    $\Integers_{2^L}$-linear code to be \emph{non}linear that can be checked simply  by  performing binary 
    multiplications on the codewords. 
    \item[v)] As noted above, for $L=2$ the Lee distance distribution of an $\Integers_{2^L}$-linear code is equivalent to the Hamming distance distribution of its Gray map  code, but this is not true  in general for $L>2$. We discuss relations between Lee and Hamming weights also for the generalized Gray map (Proposition~\ref{prop:lee-hammimng-weight} and Corollary~\ref{cor:distpreserved}).
    \item[vi)] 
We propose the \textit{nested construction} of 
$\mathbb{Z}_{2^L}$-additive codes. The corresponding 
$\Integers_{2^L}$-linear codes are linear (Theorem~\ref{thm:concatenated-cartesian-product}). 

    \item[vii)] The linearity (and nonlinearity) of $\Integers_{2^L}$-linear codes obtained from simplex, Hadamard, and MacDonald codes is reassessed via our conditions. 
\end{enumerate}

The paper is organized as follows: Sec.~\ref{sec:shur_construction} introduces fundamental notions of linear codes, Schur product, and the generalized Gray map. Sec.~\ref{sec:linearity_z2l} presents conditions to verify the linearity of $\Integers_{2^L}$-linear codes, specialized results for $\Integers_4$ and $\Integers_8$-linear codes, and distance properties. The nested construction of linear $\Integers_{2^L}$-linear codes 
is described in Sec.~\ref{sec:linear-gray-codes}, as well as applications of this construction to Reed-Muller and cyclic codes. 
Sec.~\ref{sec:families_gray_codes} revisits the linearity of known families of $\Integers_{2^L}$-linear codes via our techniques. Finally, Sec.~\ref{sec:conclusion} concludes the paper and indicates perspectives of future work.


\section{Preliminaries}
\label{sec:shur_construction}

\subsection{Notation}
\label{sec:notation}

The ring of integers modulo $q$ is $\mathbb{Z}_q=\{0,1,\dots, q-1\}$. Vectors are boldfaced, e.g., $\vect{x}$. The symbol $+$ represents the element-wise addition over $\Reals$, $\oplusq{q}$ and $\odotq{q}$ represent the addition and multiplication over $\Integers_q$, respectively. For $a,b\in\mathbb{Z}$, $a\leq b$, we write $[a:b] := \{a,a+1,\ldots,b\}$. Matrices are represented by capital sans serif letters, e.g. $\mat{X}$. The $n \times n$ identity matrix is denoted by $\mat{I}_n$. The span of a set of vectors $\vect{v}_1,\dots,\vect{v}_n$ is denoted by $\langle \vect{v}_1,\dots,\vect{v}_n  \rangle$. We use $\Integers_2$ with its field structure, not just the ring. 


Please be advised that throughout the paper, we will refer to different mappings from $\mathbb{Z}_{2^L}$ (and, by extension, from  $\mathbb{Z}_{2^L}^n$) to other structures. To avoid confusion, we included a sketch of these mappings and where they are used in Figure~\ref{fig:mappings}.

\tikzset{every picture/.style={line width=0.75pt}} 

\begin{figure}[t]
\centering

\scalebox{0.8}{    
\begin{tikzpicture}[x=0.75pt,y=0.75pt,yscale=-1,xscale=1]

\draw  [color={rgb, 255:red, 155; green, 155; blue, 155 }  ,draw opacity=1 ][line width=1.5]  (302,58) .. controls (302,51.37) and (307.37,46) .. (314,46) -- (345,46) .. controls (351.63,46) and (357,51.37) .. (357,58) -- (357,68.5) .. controls (357,75.13) and (351.63,80.5) .. (345,80.5) -- (314,80.5) .. controls (307.37,80.5) and (302,75.13) .. (302,68.5) -- cycle ;
\draw  [color={rgb, 255:red, 155; green, 155; blue, 155 }  ,draw opacity=1 ][line width=1.5]  (85,151) .. controls (85,144.37) and (90.37,139) .. (97,139) -- (128,139) .. controls (134.63,139) and (140,144.37) .. (140,151) -- (140,161.5) .. controls (140,168.13) and (134.63,173.5) .. (128,173.5) -- (97,173.5) .. controls (90.37,173.5) and (85,168.13) .. (85,161.5) -- cycle ;
\draw  [color={rgb, 255:red, 155; green, 155; blue, 155 }  ,draw opacity=1 ][line width=1.5]  (304,153) .. controls (304,146.37) and (309.37,141) .. (316,141) -- (347,141) .. controls (353.63,141) and (359,146.37) .. (359,153) -- (359,163.5) .. controls (359,170.13) and (353.63,175.5) .. (347,175.5) -- (316,175.5) .. controls (309.37,175.5) and (304,170.13) .. (304,163.5) -- cycle ;
\draw  [color={rgb, 255:red, 155; green, 155; blue, 155 }  ,draw opacity=1 ][line width=1.5]  (500,154) .. controls (500,147.37) and (505.37,142) .. (512,142) -- (543,142) .. controls (549.63,142) and (555,147.37) .. (555,154) -- (555,164.5) .. controls (555,171.13) and (549.63,176.5) .. (543,176.5) -- (512,176.5) .. controls (505.37,176.5) and (500,171.13) .. (500,164.5) -- cycle ;
\draw [color={rgb, 255:red, 137; green, 218; blue, 53 }  ,draw opacity=1 ][fill={rgb, 255:red, 126; green, 211; blue, 33 }  ,fill opacity=1 ][line width=1.5]    (299,75.5) -- (123.85,132.08) ;
\draw [shift={(121,133)}, rotate = 342.1] [color={rgb, 255:red, 137; green, 218; blue, 53 }  ,draw opacity=1 ][line width=1.5]    (14.21,-4.28) .. controls (9.04,-1.82) and (4.3,-0.39) .. (0,0) .. controls (4.3,0.39) and (9.04,1.82) .. (14.21,4.28)   ;
\draw [color={rgb, 255:red, 208; green, 2; blue, 27 }  ,draw opacity=1 ][fill={rgb, 255:red, 126; green, 211; blue, 33 }  ,fill opacity=1 ][line width=1.5]    (330,82.5) -- (329.69,132.33) ;
\draw [shift={(329.67,135.33)}, rotate = 270.36] [color={rgb, 255:red, 208; green, 2; blue, 27 }  ,draw opacity=1 ][line width=1.5]    (14.21,-4.28) .. controls (9.04,-1.82) and (4.3,-0.39) .. (0,0) .. controls (4.3,0.39) and (9.04,1.82) .. (14.21,4.28)   ;
\draw [color={rgb, 255:red, 74; green, 144; blue, 226 }  ,draw opacity=1 ][fill={rgb, 255:red, 126; green, 211; blue, 33 }  ,fill opacity=1 ][line width=1.5]    (357,80.5) -- (511.17,135) ;
\draw [shift={(514,136)}, rotate = 199.47] [color={rgb, 255:red, 74; green, 144; blue, 226 }  ,draw opacity=1 ][line width=1.5]    (14.21,-4.28) .. controls (9.04,-1.82) and (4.3,-0.39) .. (0,0) .. controls (4.3,0.39) and (9.04,1.82) .. (14.21,4.28)   ;

\draw (315.33,53.73) node [anchor=north west][inner sep=0.75pt]    {$\mathbb{Z}_{2^{L}}$};
\draw (101,145.4) node [anchor=north west][inner sep=0.75pt]    {$\mathbb{Z}_{2}^{L}$};
\draw (313,146.4) node [anchor=north west][inner sep=0.75pt]    {$\mathbb{Z}_{2}^{2^{L-1}}$};
\draw (514.67,148.4) node [anchor=north west][inner sep=0.75pt]    {$\mathbb{Z}_{2}^{L}$};
\draw (82,181) node [anchor=north west][inner sep=0.75pt]   [align=left] {{\small Gray code}};
\draw (273,185) node [anchor=north west][inner sep=0.75pt]   [align=left] {\begin{minipage}[lt]{77.22pt}\setlength\topsep{0pt}
\begin{center}
{\small  Generalized Gray}\\{\small map~\cite{carlet}, Def.~\ref{def:generalized_gray}}
\end{center}

\end{minipage}};
\draw (468,185) node [anchor=north west][inner sep=0.75pt]   [align=left] {\begin{minipage}[lt]{98.47pt}\setlength\topsep{0pt}
\begin{center}
{\small Natural map from}\\{\small binary decomposition, Sec.~\ref{sec:linearity_z2l}}
\end{center}

\end{minipage}};
\end{tikzpicture}}
    
\caption{Mappings from $\Integers_{2^L}$ (and, by extension, from $\Integers_{2^L}^n$) used in this paper. The Carlet's generalized Gray map coincides with a Gray code only for $L=2$.}
\label{fig:mappings}
\vspace{-3ex}
\end{figure}
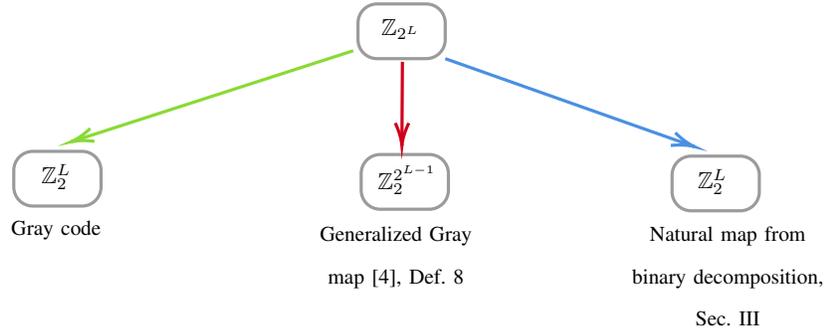

\subsection{Codes}
\label{sec:codes}

    We introduce relevant definitions of linear codes over $\Integers_{q}$, here denoted as $\Integers_q$-additive codes, and general properties of the Schur product.


\begin{definition}[$\mathbb{Z}_{q}$-additive code]\label{def:zq-additive}
A code $\code{C} \subseteq \mathbb{Z}_{q}^n$ is said to be $\mathbb{Z}_{q}$-additive if it is an additive subgroup of $\mathbb{Z}_{q}^n$. For $q=2$, a $\Integers_2$-additive code is simply a binary linear code.
\end{definition}

The code parameters $[n,M]$ or $[n,M,d_{\textnormal{Lee}}]$ indicate a $\Integers_q$-additive code $\code{C}$ of length $n$, $M$ codewords, and minimum Lee distance 
\begin{IEEEeqnarray*}{c}
d_{\textnormal{Lee}}=d_{\textnormal{Lee}}(\code{C}) \eqdef \displaystyle\min_{\vect{x},\vect{y}\in\code{C}}\Lwt{\vect{x}-\vect{y}} = \displaystyle\min_{\vect{x},\vect{y}\in\code{C}} \sum_{i=1}^n \min\{(x_i-y_i),q-(x_i-y_i)\}.   
\end{IEEEeqnarray*}

When referring to binary codes, the Hamming weight of a codeword $\vect{c} \in \code{C} \subseteq \Integers_2^n$ is denoted by $w_{\textnormal{H}}(\vect{c})$ and the minimum Hamming distance of a code $\code{C} \subseteq \Integers_2^n$ is $d_{\textnormal{H}}(\code{C})$. 

\begin{definition} (\textit{Schur product})\label{def:schur-product} For $\vect{x}=(x_{1}, \dots, x_{n}), \vect{y}=(y_{1}, \dots, y_{n}) \in \Integers_{2}^{n},$ we define the Schur product $\vect{x} \circ \vect{y} = (x_{1}\odotq{2} y_{1}, \dots, x_{n}\odotq{2} y_{n}) = (x_{1}y_{1}, \dots, x_{n}y_{n}) \in \Integers_{2}^n$.
\end{definition}

The subspace generated by all the Schur products of vectors of a code $\code{C}$ is denoted as the square of $\code{C}$. Square codes have been extensively studied in~\cite{CascudoCramerMirandolaZemor_15, Cascudo}.

\begin{definition}(\textit{Square of a code}) \label{def:square-code} The square of a code $\code{C} \subseteq \Integers_2^n$ is $\code{C}^{ 2} =\langle \left\{\vect{c} \circ \tilde{\vect{c}}: \vect{c}, \tilde{\vect{c}} \in \code{C} \right\} \rangle$. 
\end{definition}      
In our context, we also consider simply the Schur product of a code:  $\code{C} \circ \code{C} =\{ \vect{c} \circ \tilde{\vect{c}} : \vect{c}, \tilde{\vect{c}} \in \code{C} \}$.
Notice that $\code{C} \subseteq \code{C}\circ \code{C} \subseteq \code{C}^2$, but while $\code{C}^2$ is a linear code, $\code{C} \circ \code{C}$ is nonlinear in general.

\begin{definition}[Codes closed under Schur product] We say that the (chain of) binary linear codes $\code{C}_1, \dots, \code{C}_L \subseteq \Integers_2^n$ are (is) closed under Schur product if $\code{C}_{i} \circ \code{C}_i \subseteq \code{C}_{i+1}$, for all $i \in [1:L-1]$.
\end{definition}

\subsection{Associated Codes and Generalized Gray Map}
\label{sec:gray-map}

We start recalling the natural binary decomposition of an element in $\Integers_{2^L}$.

\begin{definition}[Binary decomposition] \label{def:bindecomp}
Let $L \in \Naturals$ and $v \in \Integers_{2^L}$. The binary decomposition of $v$ is $\displaystyle{v = \sum_{i=1}^L 2^{i-1}v_{i} \in \mathbb{Z}_{2^L}}$ with $ v_{i} \in \Integers_2$, for $i\in[1:L]$.
\end{definition}

Let $\code{C}$ be a $\mathbb{Z}_{2^L}$-additive code and $\vect{c}=(c_1 , c_2 , ..., c_n) \in \code{C}$. By performing the binary decomposition of each coordinate of $\vect{c}$, we can associate $\code{C} \subseteq \Integers_{2^L}^n$ to $L$ codes $\code{C}_1, \dots, \code{C}_L$ over $\Integers_2$, according to 
\begin{IEEEeqnarray}{rCl} 
\label{latticetoring}
\vect{c}&=&(c_1 , c_2 , ..., c_n) = \left(\sum_{i=1} ^L 2^{i-1}u_{i,1} , \sum_{i=1} ^L  2^{i-1}u_{i,2} , \ldots, \sum_{i=1} ^L  2^{i-1}u_{i,n} \right)\nonumber \\
& = & \left(u_{1,1},u_{1,2},...,u_{1,n} \right) + 2 \left(u_{2,1},u_{2,2},...,u_{2,n} \right)+\ldots+ 2^{L-1} \left(u_{L,1},u_{L,2},...,u_{L,n} \right) \nonumber\\
&=& \vect{u}_1 + 2 \vect{u}_2 + \dots + 2^{L-1}\vect{u}_L = \sum_{i=1}^L 2^{i-1}\vect{u}_i,  
\end{IEEEeqnarray} 
where $\vect{u}_i = \left(u_{i,1},u_{i,2},...,u_{i,n} \right) \in \Integers_2^n$ for $i \in [1: L]$. 

\begin{definition}[Associated codes]
\label{def:associated_codes} 
Consider a code $\code{C} \subseteq \mathbb{Z}_{2^L}^n$. For all $\vect{c} = \sum_{i=1}^L 2^{i-1}\vect{u}_i \in \code{C}$, we fix an $i \in [1: L]$, and place all the respective vectors $\vect{u}_i$ in a set $\code{C}_i$. The resulting codes $\code{C}_1, \dots, \code{C}_L \subseteq \Integers_2^n$ are called associated codes with $\code{C}$.
\end{definition}

If $\code{C} \subseteq \Integers_{2^L}^n$ is a $\Integers_{2^L}$-additive code, then its respective associated codes $\code{C}_1, \dots, \code{C}_L$ are binary linear codes~\cite[Lemma~1]{maiara}.

\begin{example}
\label{ex:tela-preta-gustavo}
Let $\code{C}=\left\langle(0,1,2,3,4,5,6,7)\right\rangle \subset \mathbb{Z}_8 ^8$ be a $\mathbb{Z}_8$-additive code. This code has three associated codes. 
For example, for $\vect{c}=(0,2,4,6,0,2,4,6)\in\code{C}$, we have $\vect{u}_{1} = (0,0,0,0,0,0,0,0)$, $\vect{u}_{2} = (0,1,0,1,0,1,0,1)$, and $\vect{u}_{3}=(0,0,1,1,0,0,1,1)$. If we proceed analogously for all codewords in $\code{C}$ and place all vectors $\vect{u}_i$ in a code $\code{C}_i$, $i=1,2,3$, we get the following associated codes
\begin{IEEEeqnarray*}{rCl}
\code{C}_1&=&\left\{(0,0,0,0,0,0,0,0),(0,1,0,1,0,1,0,1)\right\},\\
\code{C}_2&=&\left\{(0,0,0,0,0,0,0,0),(0,1,0,1,0,1,0,1),(0,0,1,1,0,0,1,1) ,(0,1,1,0,0,1,1,0)\right\},\\
\code{C}_3&=&\left\{(0,0,0,0,0,0,0,0),(0,1,0,1,0,1,0,1),(0,0,1,1,0,0,1,1) ,(0,1,1,0,0,1,1,0),\right.\\
&&\left.(0,0,0,0,1,1,1,1), (0,0,1,0,1,1,0,1),(0,1,0,1,1,0,1,0),(0,1,1,1,1,0,0,0) \right\},
\end{IEEEeqnarray*}
where, in particular, $\code{C}_1 \subset\code{C}_2 \subset \code{C}_3$.
\hfill\exampleend
\end{example}

We present next the definition of generalized Gray map~\cite{carlet}.
\begin{definition}[Generalized Gray map]
\label{def:generalized_gray}
Let $L \in \Naturals$ and $\displaystyle{v = \sum_{i=1}^L 2^{i-1}v_{i} \in \mathbb{Z}_{2^L}}$ with $ v_{i} \in \Integers_2$. The image of $v$ by the generalized Gray map is the Boolean function on $\Integers_2 ^{2^{L-1}}$
\begin{IEEEeqnarray*}{rCl}
    \gray: & ~ \mathbb{Z}_{2^L} & \rightarrow  \Integers_2 ^{2^{L-1}}  \\
       & v  & \mapsto  \gray(v) =  \underbrace{\left(v_L ,v_L ,...,v_L\right)}_{2^{L-1}~\textnormal{times}} \oplusq{2} \left(v_1, 
 v_2, \ldots, v_{L-1}\right)\mat{Y},
\end{IEEEeqnarray*}
where $\mat{Y} \in \Integers_2^{(L-1) \times 2^{L-1}}$ is a matrix whose columns are all elements of $\Integers_2^{L-1}$.
\end{definition}

The Gray map can be naturally extended into a mapping from $\Integers_{2^L}^n$ to $\Integers_2^{(2^{L-1})n}$. From now on, without loss of generality, when referring to the generalized Gray map, we mean its coordinate-wise extension, i.e., for $\vect{x}=(x_1,\dots,x_n) \in \Integers_{2^L}^n$, $\gray(\vect{x}) =(\gray(x_1), \dots, \gray(x_n)) \in \Integers_2^{(2^{L-1})n}$, where $\gray(x_i)$ follows Definition~\ref{def:generalized_gray} for all $i\in [1: n]$.

\begin{example}\label{ex:gray-z4} The standard Gray map assumes $L=2$ and $\mat{Y} = (0\,\,1) \in \Integers_2^{1 \times 2}$. Then,
\begin{IEEEeqnarray*}{rCl}
  v = 0 \mapsto \gray(v) = (0,0),\quad v=1 \mapsto \gray(v)=(0,1), \\
  v=2 \mapsto \gray(v)=(1,1),\quad v=3 \mapsto \gray(v)=(1,0). 
\end{IEEEeqnarray*}
\\[-12mm]\hfill\exampleend
\end{example}

\begin{example}\label{ex:gray_z8} Consider $L=3$ and $\mat{Y}=\begin{pNiceMatrix} 0 & 0 & 1 & 1\\
       0 & 1 &0 &1
       \end{pNiceMatrix}$. For this vector ordering, we have
\begin{IEEEeqnarray*}{rCl}
  \gray(0) = (0,0,0,0),\quad \gray(1)=(0,0,1,1), \quad \gray(2)=(0,1,0,1),\quad \gray(3)=(0,1,1,0) \\ \gray(4)=(1,1,1,1) , \quad \gray(5)=(1,1,0,0), \quad \gray(6)=(1,0,1,0), \quad \gray(7)=(1,0,0,1). 
\end{IEEEeqnarray*}
\\[-12.5mm]\hfill\exampleend
\end{example}

\begin{remark} Notice that in~\cite[p.~1543]{carlet}, the Gray map for $\Integers_8$ is slightly different than the one presented in Example~\ref{ex:gray_z8}, even though both fundamentally share the same properties. This is because the placement order of the vectors in the matrix $\mat{Y}$ matters. For instance, the matrix $\mat{Y}$ considered in~\cite[p.~1543]{carlet} is a simple permutation of the rows of the one presented in our Example~\ref{ex:gray_z8}.
\end{remark}

In terms of distance-preserving properties, for $L=2$, i.e., for $\Integers_4$-additive codes, it holds that the Hamming distance between $\gray(\vect{c})$ and $\gray(\vect{d})$ is equal to the Lee weight of $\vect{c}-\vect{d}$, for any quaternary codewords $\vect{c}$ and $\vect{d}$. For $L \geq 3$, the distance is preserved through a slightly different notion. If $\code{C}$ is a $\Integers_{2^L}$-additive code and $\vect{c}, \vect{d} \in \code{C}$, then the Hamming distance between $\gray(\vect{c})$ and $\gray(\vect{d})$ is equal to the Hamming weight of $\gray(\vect{c}-\vect{d})$ (see~\cite[Prop.~1]{carlet}). This justifies that for $L=2$ (Example~\ref{ex:gray-z4}), the Gray code and Gray map are directly connected, while for $L \geq 3$ it is not the case. Further connections between the Lee distance of a code and the Hamming distance of its respective generalized Gray map are to be discussed in Sec.~\ref{sec:distance-properties}.

\begin{definition}[$\Integers_{2^L}$-linear code] Let $\code{C}$ be a $\Integers_{2^L}$-additive code. Then $\gray(\code{C})=\{\gray(\vect{c}): \forall~\vect{c} \in \code{C}\}$ is called $\Integers_{2^L}$-linear code. 
\end{definition}


\begin{example}
\label{ex:nordstrom-robinson}
The Nordstrom-Robinson code~\cite{NordstromRobinson_67} is the $\Integers_{4}$-linear code obtained from the $\Integers_4$-additive octacode $\code{O}$~\cite{ForneySloaneTrott92_1}, generated by $\mat{G}_{\code{O}} =(\mat{I}_{4}\,\,\,\mat{M}_{4})$, where 
  \begin{IEEEeqnarray*}{c}
    \mat{M}_{4} =
    \begin{pNiceMatrix}
      3 & 1 & 2 & 1 \\
      1 & 2 & 3 & 1 \\
      3 & 3 & 3 & 2 \\
      2 & 3 & 1 & 1
    \end{pNiceMatrix}.
  \end{IEEEeqnarray*}
The Nordstrom-Robinson code $\textnormal{NR} = \gray(\code{O}) \subseteq \Integers_2^{16}$ is a nonlinear $\Integers_{4}$-linear code.
\hfill\exampleend
\end{example}

\section{Linearity of $\Integers_{2^L}$-linear Codes}
\label{sec:linearity_z2l}

This section presents our main findings on the linearity of $\Integers_{2^L}$-linear codes. We introduce decomposition codes and translate a significant result in the literature regarding the linearity of $\Integers_{2^L}$-linear codes. Additionally, we provide specific results for $\Integers_4$ and $\Integers_8$, discuss a sufficient condition that simplifies the verification of the nonlinearity, and finally address distance properties of a code $\code{C}$ and its respective $\Integers_{2^L}$-linear code.

\subsection{Linearity Based on the Decomposition Code}
\label{subsec:linearity-decomposition}

Given $\displaystyle
{v=\sum_{i=1}^L 2^{i-1}v_{i}}, \displaystyle{w=\sum_{i=1}^L 2^{i-1}w_{i}} \in \mathbb{Z}_{2^L}$, the operations $\oplus$ and $\odot$ in $\mathbb{Z}_{2^L}$ are
\begin{IEEEeqnarray}{rCl}\label{operacoes}
v \oplus w & = &\sum_{i=1}^L 2^{i-1} \left(v_i \oplusq{2} w_{i} \right), ~v \odot w = \sum_{i=1}^L 2^{i-1} \left(v_i \odotq{2} w_{i}\right).
\end{IEEEeqnarray}

According to~\cite[Prop.~2.1]{tapiavega}, the sum between elements of $\mathbb{Z}_{2^L}$ is related to~\eqref{operacoes} as 
\begin{IEEEeqnarray}{c}
\label{relacaooperacoes}
v  \oplusq{2^L} w= (v \oplus w) \oplusq{2^L} 2(v \odot w),
\end{IEEEeqnarray} 
which can be coordinate-wisely extended to vectors $\vect{v}, \vect{w} \in \Integers_{2^L}^n$.


Theorem~\ref{tapiavega} below provides a condition that a $\mathbb{Z}_{2^L}$-additive code $\code{C}$ must satisfy for its $\Integers_{2^L}$-linear code $\gray(\code{C})$ to be linear. Such a condition cancels out the natural \emph{carry} involved in the real sum of elements of $\mathbb{Z}_{2^L}$ (see~\eqref{relacaooperacoes}).

\begin{theorem}\cite[Th.~4.13]{tapiavega}\label{tapiavega}
Let $\code{C}$ be a $\Integers_{2^L}$-additive code. Then $\gray(\code{C})$ is a linear code if and only if $2(\vect{c} \odot \vect{d})= \sum_{i=1}^L 2^{i}(\vect{u}_i \odotq{2} \vect{v}_i) \in \code{C} $ for all $\vect{c} = \sum_{i=1}^L 2^{i-1}\vect{u}_i, \vect{d}=\sum_{i=1}^L 2^{i-1}\vect{v}_i \in \code{C}$.
\end{theorem}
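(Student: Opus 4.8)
The plan is to exploit a structural feature of the generalized Gray map that is hidden in Definition~\ref{def:generalized_gray}: although $\gray$ mixes carries when we add in $\Integers_{2^L}$, it is in fact \emph{$\Integers_2$-linear in the binary decomposition bits}. Writing $\gray(v) = v_L\,\mathbf{1} \oplusq{2} (v_1,\dots,v_{L-1})\mat{Y}$, each coordinate of $\gray(v)$ is a fixed $\Integers_2$-linear form in $(v_1,\dots,v_L)$; that is, $\gray(v) = (v_1,\dots,v_L)\mat{G}$ for the matrix $\mat{G}$ whose first $L-1$ rows are those of $\mat{Y}$ and whose last row is all-ones. The first step is to record the resulting \emph{key identity}
\begin{IEEEeqnarray*}{c}
\gray(\vect{c}) \oplusq{2} \gray(\vect{d}) = \gray(\vect{c} \oplus \vect{d}),
\end{IEEEeqnarray*}
where $\oplus$ is the carry-free bitwise operation of~\eqref{operacoes}; this follows coordinate-wise, because $\oplus$ acts bit-by-bit and $\gray$ is linear in those bits. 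I would also verify that $\gray$ is injective on $\Integers_{2^L}^n$ (equivalently that $\mat{G}$ has rank $L$, which holds since the all-ones row is not in the row span of $\mat{Y}$, as $\mat{Y}$ contains the zero column).

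Given the key identity, linearity of $\gray(\code{C})$ reduces to a purely $\Integers_{2^L}$-side condition. Indeed, $\gray(\code{C})$ is a binary linear code iff it is closed under $\oplusq{2}$, i.e.\ $\gray(\vect{c})\oplusq{2}\gray(\vect{d})\in\gray(\code{C})$ for all $\vect{c},\vect{d}\in\code{C}$. By the key identity this reads $\gray(\vect{c}\oplus\vect{d})\in\gray(\code{C})$, and by injectivity of $\gray$ it is equivalent to $\vect{c}\oplus\vect{d}\in\code{C}$. Hence $\gray(\code{C})$ is linear \emph{iff $\code{C}$ is closed under the bitwise operation $\oplus$.}

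The last step converts closure under $\oplus$ into the stated condition by means of~\eqref{relacaooperacoes}, namely $\vect{c}\oplusq{2^L}\vect{d}=(\vect{c}\oplus\vect{d})\oplusq{2^L}2(\vect{c}\odot\vect{d})$. Since $\code{C}$ is $\Integers_{2^L}$-additive, $\vect{c}\oplusq{2^L}\vect{d}\in\code{C}$ always, so in the group $(\Integers_{2^L}^n,\oplusq{2^L})$ the elements $\vect{c}\oplus\vect{d}$ and $2(\vect{c}\odot\vect{d})$ differ from a fixed codeword by one another. Concretely, $\vect{c}\oplus\vect{d}\in\code{C}$ forces $2(\vect{c}\odot\vect{d})=(\vect{c}\oplusq{2^L}\vect{d})\ominus(\vect{c}\oplus\vect{d})\in\code{C}$, and conversely $2(\vect{c}\odot\vect{d})\in\code{C}$ forces $\vect{c}\oplus\vect{d}=(\vect{c}\oplusq{2^L}\vect{d})\ominus 2(\vect{c}\odot\vect{d})\in\code{C}$. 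Chaining the two equivalences yields the theorem, and observing that $2(\vect{c}\odot\vect{d})=\sum_{i=1}^L 2^{i}(\vect{u}_i\odotq{2}\vect{v}_i)$ is just the definition~\eqref{operacoes} of $\odot$ applied coordinate-wise.

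The main obstacle — and the one genuinely new idea — is the key identity: recognizing that the generalized Gray map is a $\Integers_2$-linear map on binary decompositions even though it is badly nonlinear as a map $\Integers_{2^L}\to\Integers_2^{2^{L-1}}$ with respect to $\oplusq{2^L}$. Everything afterward is bookkeeping with the carry identity~\eqref{relacaooperacoes} and the subgroup property of $\code{C}$. The only place demanding care is the reduction step, where one must use both that $\gray$ is injective and that $\code{C}$ is closed under additive inverses in $(\Integers_{2^L}^n,\oplusq{2^L})$; without the latter the two halves of the final equivalence would not both go through.
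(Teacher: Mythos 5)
Your proof is correct and complete. There is, however, nothing in the paper to compare it against line by line: the paper does not prove Theorem~\ref{tapiavega} at all, it imports it wholesale from Tapia-Recillas and Vega~\cite{tapiavega} and builds on it. What you have reconstructed is essentially the standard argument underlying that citation. The decisive step --- writing $\gray(v)=(v_1,\dots,v_L)\mat{G}$, where $\mat{G}$ is $\mat{Y}$ stacked over the all-ones row, so that $\gray$ is $\Integers_2$-linear in the decomposition bits and satisfies $\gray(\vect{c})\oplusq{2}\gray(\vect{d})=\gray(\vect{c}\oplus\vect{d})$ --- is exactly the structural fact that Carlet's and Tapia-Recillas--Vega's treatments exploit, and your reduction chain ($\gray(\code{C})$ linear $\Leftrightarrow$ $\code{C}$ closed under $\oplus$ $\Leftrightarrow$ $2(\vect{c}\odot\vect{d})\in\code{C}$, via the carry identity~\eqref{relacaooperacoes} and the subgroup property of $\code{C}$) is airtight, since each link is an equivalence for every fixed pair $(\vect{c},\vect{d})$ and so the universal quantifier chains correctly. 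Two points deserve one more sentence each. For injectivity you need $\operatorname{rank}(\mat{G})=L$, which requires both that the all-ones row avoids the row span of $\mat{Y}$ (your zero-column argument) and that $\mat{Y}$ itself has rank $L-1$; the latter is immediate because the columns of $\mat{Y}$ exhaust $\Integers_2^{L-1}$ and in particular contain a basis, but it should be said. Also, in the identity $2(\vect{c}\odot\vect{d})=\sum_{i=1}^{L}2^{i}(\vect{u}_i\odotq{2}\vect{v}_i)$ the $i=L$ term vanishes modulo $2^L$, which is why the condition genuinely involves only the first $L-1$ levels --- consistent with the set $\set{R}$ appearing in Theorem~\ref{thm:iff_condition_concatenated}. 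What your proof buys over the paper's citation is self-containedness, and it lays bare the mechanism that Section~\ref{sec:linearity_z2l} repeatedly reuses: the generalized Gray map is an isomorphism of the group $(\Integers_{2^L}^n,\oplus)$ onto its image inside $(\Integers_2^{2^{L-1}n},\oplusq{2})$, so linearity of $\gray(\code{C})$ is equivalent to $\code{C}$ being closed under the carry-free sum, and the carry identity converts that into membership of the doubled Schur product.
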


We present next an alternative approach to Theorem~\ref{tapiavega} by expressing the $\Integers_{2^L}$-additive code via their binary decomposition and the relation with operations in~\eqref{operacoes}. 

A code $\code{C} \subseteq \mathbb{Z}_{2^L} ^n$ induces a binary decomposition code defined as follows.

\begin{definition}[Decomposition code]\label{def:concatenated_code} Let  $\code{C} \subseteq \mathbb{Z}_{2^L} ^n$ be a $\mathbb{Z}_{2^L}$-additive code and $\vect{c} \in \code{C}$. Recalling~\eqref{latticetoring}, we write
\begin{IEEEeqnarray*}{c}
\vect{c}=\left(\sum_{i=1}^{L} 2^{i-1} u_{i,1} , \sum_{i=1}^{L} 2^{i-1}u_{i,2},\ldots, \sum_{i=1}^{L} 2^{i-1}u_{i,n}  \right) = \sum_{i=1}^L 2^{i-1}\vect{u}_i,
\end{IEEEeqnarray*}
where $\vect{u}_i = ({u_{i,1}, u_{i,2}, \dots, u_{i,n}}) \in \Integers_2^n$, and concatenate all such elements in a vector $(\vect{u}_1, \vect{u}_2, \dots, \vect{u}_L) \in \Integers_2^{nL}$. By running through all codewords $\vect{c} \in \code{C}$, and placing all the binary codewords in a set $\code{B} \subseteq \mathbb{Z}_2^{nL}$, we get the (binary) decomposition code $\code{B}$.
\end{definition}

The relation between decomposition codes and lattices was first presented in~\cite{maiara}.

\begin{example}\label{ex:linear_concantenated} Consider $\code{C} = \{(0,0),(1,3),(2,2),(3,1)\} \subset \Integers_4^2$. Then 
\begin{IEEEeqnarray*}{rCl}
(0,0) = (0,0) + 2(0,0) \mapsto (0,0,0,0),~ ~ ~ ~ & (1,3) = (1,1) + 2(0,1) \mapsto (1,1,0,1), \\
(2,2) = (0,0) + 2(1,1) \mapsto (0,0,1,1),~ ~ ~ ~ & (3,1) = (1,1) + 2(1,0) \mapsto (1,1,1,0).
\end{IEEEeqnarray*}
The decomposition code $\code{B}=\{(0,0,0,0),(1,1,0,1),(0,0,1,1),(1,1,1,0)\} \subseteq \Integers_2^4$ is linear, but this property does not hold in general.
\hfill\exampleend
\end{example}

We will now demonstrate a necessary and sufficient condition, based on the decomposition code, that guarantees the linearity of $\gray(\code{C})$. This result simply translates the condition for the $\Integers_{2^L}$-additive code $\code{C}$ on Theorem~\ref{tapiavega} to the decomposition code $\code{B}$.

\begin{theorem} 
\label{thm:iff_condition_concatenated} 
Consider a $\Integers_{2^{L}}$-additive code $\code{C}$ and its respective binary decomposition code $\code{B}$. Then $\gray(\code{C})$ is linear if and only if 
\begin{IEEEeqnarray*}{c}
\set{R}=\left\{(\vect{0},\vect{u}_1 \circ \vect{v}_1, \vect{u}_2 \circ \vect{v}_2, \dots, \vect{u}_{L-1} \circ \vect{v}_{L-1}): \forall ~\vect{c} = \sum_{i=1}^L 2^{i-1}\vect{u}_i,~ \vect{d} = \sum_{i=1}^L 2^{i-1}\vect{v}_i \in \code{C} \right\}
\end{IEEEeqnarray*}
is a subset of $\code{B}$.
\end{theorem}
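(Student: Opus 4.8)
The plan is to reduce the statement directly to Theorem~\ref{tapiavega} by computing the binary decomposition of the element $2(\vect{c}\odot\vect{d})$ and recognizing it as the generic member of $\set{R}$. First I would fix two codewords $\vect{c}=\sum_{i=1}^L 2^{i-1}\vect{u}_i$ and $\vect{d}=\sum_{i=1}^L 2^{i-1}\vect{v}_i$ in $\code{C}$ and, using the definition of $\odot$ in~\eqref{operacoes}, write $\vect{c}\odot\vect{d}=\sum_{i=1}^L 2^{i-1}(\vect{u}_i\circ\vect{v}_i)$, where each $\vect{u}_i\circ\vect{v}_i\in\Integers_2^n$ is the Schur product. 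Multiplying by $2$ shifts every layer up one power, so $2(\vect{c}\odot\vect{d})=\sum_{i=1}^L 2^{i}(\vect{u}_i\circ\vect{v}_i)$.

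The first key step is to observe that the top term vanishes: since $2^L\equiv 0\pmod{2^L}$, the summand $2^L(\vect{u}_L\circ\vect{v}_L)$ is zero in $\Integers_{2^L}^n$, so $2(\vect{c}\odot\vect{d})=\sum_{i=1}^{L-1}2^{i}(\vect{u}_i\circ\vect{v}_i)$. Because every entry of each $\vect{u}_i\circ\vect{v}_i$ lies in $\{0,1\}$ and the powers $2^1,\dots,2^{L-1}$ are distinct, this expression is \emph{already} in the binary decomposition form of Definition~\ref{def:bindecomp}: no carry occurs. Reading off the layers, the binary decomposition of $2(\vect{c}\odot\vect{d})$ is exactly $(\vect{0},\vect{u}_1\circ\vect{v}_1,\vect{u}_2\circ\vect{v}_2,\dots,\vect{u}_{L-1}\circ\vect{v}_{L-1})$, i.e., the generic element of $\set{R}$.

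The second key step is to transfer membership between $\code{C}$ and $\code{B}$. Since binary decomposition is a bijection between $\Integers_{2^L}^n$ and $\Integers_2^{nL}$, an element $\vect{x}\in\Integers_{2^L}^n$ lies in $\code{C}$ if and only if its concatenated binary decomposition lies in $\code{B}$; this is precisely the defining correspondence of Definition~\ref{def:concatenated_code}. Applying it to $\vect{x}=2(\vect{c}\odot\vect{d})$, the condition $2(\vect{c}\odot\vect{d})\in\code{C}$ becomes equivalent to $(\vect{0},\vect{u}_1\circ\vect{v}_1,\dots,\vect{u}_{L-1}\circ\vect{v}_{L-1})\in\code{B}$. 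Letting $\vect{c},\vect{d}$ range over all of $\code{C}$, the requirement that every such element belong to $\code{C}$ is exactly $\set{R}\subseteq\code{B}$. Combining this with Theorem~\ref{tapiavega}, which characterizes linearity of $\gray(\code{C})$ by precisely the condition that every $2(\vect{c}\odot\vect{d})$ belongs to $\code{C}$, yields the claimed equivalence.

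I do not expect a genuine obstacle here, as the content is a faithful translation of Theorem~\ref{tapiavega} into the decomposition-code language. The only point requiring care—and which I would state explicitly—is the no-carry observation that makes $\sum_{i=1}^{L-1}2^{i}(\vect{u}_i\circ\vect{v}_i)$ literally the binary decomposition form, together with the vanishing of the $2^L$-layer. These two facts are what guarantee that the layers of $2(\vect{c}\odot\vect{d})$ align coordinate-for-coordinate with the vectors appearing in $\set{R}$, with no index shift or modular reduction mismatch.
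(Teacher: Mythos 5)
Your proposal is correct and follows essentially the same route as the paper's own proof: expanding $\vect{c}\odot\vect{d}$ layer-wise via~\eqref{operacoes}, identifying the binary decomposition of $2(\vect{c}\odot\vect{d})$ with the generic element of $\set{R}$, and invoking the bijective correspondence between $\code{C}$ and $\code{B}$ to reduce everything to Theorem~\ref{tapiavega}. If anything, you are slightly more explicit than the paper about the vanishing of the $2^L(\vect{u}_L\circ\vect{v}_L)$ term and the absence of carries, which the paper leaves implicit.
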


\begin{proof}
Let $\code{C} \subseteq \mathbb{Z}_{2^L} ^n$ be a $\mathbb{Z}_{2^L}$-additive code and $\vect{c},\vect{d} \in \code{C}$ where
\begin{IEEEeqnarray}{rCl}
\label{eq:u-v-expression}
\vect{c}&=&\left(\sum_{i=1}^{L} 2^{i-1}u_{i,1}, \sum_{i=1}^{L} 2^{i-1}u_{i,2},\ldots, \sum_{i=1}^{L} 2^{i-1}  u_{i,n}  \right) = \sum_{i=1}^L 2^{i-1}\vect{u}_i, \nonumber \\ 
\vect{d}&=&\left(\sum_{i=1}^{L} 2^{i-1}v_{i,1}, \sum_{i=1}^{L} 2^{i-1}v_{i,2},\ldots, \sum_{i=1}^{L} 2^{i-1}v_{i,n} \right) = \sum_{i=1}^L 2^{i-1}\vect{v}_i ,
\end{IEEEeqnarray}
where $\vect{u}_i = ({u_{i,1}, u_{i,2}, \dots, u_{i,n}})$ and $\vect{v}_i = ({v_{i,1}, v_{i,2}, \dots, v_{i,n}})$, for $i\in [1: L]$. Consider also the decomposition code $\code{B}$ obtained from $\code{C}$. Rewriting $\vect{c}\odot \vect{d}$, we get
\begin{IEEEeqnarray}{rCl}\label{eq:expansion_odot}
\vect{c}\odot \vect{d} & = & \bigg(\sum_{i=1}^{L} 2^{i-1} (u_{i,1} \odotq{2} v_{i,1}), \sum_{i=1}^{L} 2^{i-1} (u_{i,2} \odotq{2} v_{i,2}) , \dots, \sum_{i=1}^{L} 2^{i-1} (u_{i,n} \odotq{2} v_{i,n})  \bigg) \nonumber \\
& = & (u_{1,1} \odotq{2} v_{1,1}, u_{1,2} \odotq{2} v_{1,2}, \dots, u_{1,n} \odotq{2} v_{1,n} ) +  \nonumber \\
& & 2(u_{2,1} \odotq{2} v_{2,1},u_{2,2} \odotq{2} v_{2,2}, \dots, u_{2,n} \odotq{2} v_{2,n})  + \dots + \nonumber \\
& & 2^{L-1} (u_{L,1} \odotq{2} v_{L,1},u_{L,2} \odotq{2} v_{L,2}, \dots, u_{L,n} \odotq{2} v_{L,n}).
\end{IEEEeqnarray}

Hence, each term of~\eqref{eq:expansion_odot} reduces to $\vect{u}_i \circ \vect{v}_i$ for $i\in [1: L]$, in terms of the Schur product, since the operations $\odotq{2}$ and $\circ$ coincide. Also,
\begin{IEEEeqnarray}{rCl}
\label{eq:2uv}
2(\vect{c}\odot \vect{d}) & = &  2 (\vect{u}_1 \circ \vect{v}_1) + 2^2 (\vect{u}_2 \circ \vect{v}_2) + \dots + 2^{L-1}(\vect{u}_{L-1} \circ \vect{v}_{L-1}).
\end{IEEEeqnarray}

The following  equivalence of statements holds:
\begin{enumerate}
    \item[1)] $(\vect{0},\vect{u}_1 \circ \vect{v}_1, \vect{u}_2 \circ \vect{v}_2, \dots, \vect{u}_{L-1} \circ \vect{v}_{L-1}) \in \code{B}$
    \item[2)] $2(\vect{c}\odot \vect{d}) \in \code{C}.$
\end{enumerate}

To demonstrate 1) $\Rightarrow$ 2), given that $(\vect{0},\vect{u}_1 \circ \vect{v}_1, \vect{u}_2 \circ \vect{v}_2, \dots, \vect{u}_{L-1} \circ \vect{v}_{L-1}) \in \code{B}$, then there exist $\vect{x} = \vect{0} + 2(\vect{u}_1 \circ \vect{v}_1) + 2^2 (\vect{u}_2 \circ \vect{v}_2) + \dots + 2^{L-1}(\vect{u}_{L-1} \circ \vect{v}_{L-1}) = 2(\vect{c}\odot \vect{d}) \in \code{C}$. On the other hand, 2) $\Rightarrow$ 1) follows directly from the definition of a decomposition code. This equivalence, together with Theorem~\ref{tapiavega}, completes the proof.
\end{proof}

For the case of $L=2$, i.e., when we restrict Theorem~\ref{thm:iff_condition_concatenated} to $\Integers_4$-additive codes, then our result coincides with~\cite[Th.~5]{Hammons_etal_94}, which claims that
\begin{IEEEeqnarray*}{c}
\gray(\code{C})~\textnormal{is linear} \Leftrightarrow 2\left( (\vect{c} \bmod 2) \circ (\vect{d} \bmod 2) \right) \in \code{C}, 
\end{IEEEeqnarray*}
for all $\vect{c}, \vect{d} \in \code{C}$, where $\code{C}$ is a $\Integers_4$-additive code. Also,
\begin{IEEEeqnarray}{c}
2\left( (\vect{c} \bmod 2) \circ (\vect{d} \bmod 2) \right) \in \code{C} \Leftrightarrow (\vect{0}, \vect{u}_1 \circ \vect{v}_1) \in \code{B},
\end{IEEEeqnarray}
since $\vect{c} = \vect{u}_1 + 2\vect{u}_2$ and $\vect{d} = \vect{v}_1 + 2\vect{v}_2$, we have that $\vect{u}_1 = \vect{c} \bmod 2$ and $\vect{v}_1 = \vect{d} \bmod 2$. The result then follows from the definition of decomposition code. In this sense, Theorem~\ref{thm:iff_condition_concatenated} can be seen as an extension and/or generalization of~\cite[Th.~5]{Hammons_etal_94} to $\Integers_{2^L}$-additive codes with $L>2$.

We restate our result in terms of the Schur product $\code{C}_i \circ \code{C}_i$, $i \in [1: L-1]$.

\begin{proposition} 
\label{prop:connections-hammons-etal}
Consider a $\Integers_{2^{L}}$-additive code $\code{C}$. Then $\gray(\code{C})$ is linear if and only if $2 (\code{C}_1 \circ \code{C}_1) + 4 (\code{C}_2
\circ \code{C}_2) + \dots + 2^{L-1} (\code{C}_{L-1} \circ \code{C}_{L-1}) \in \code{C}$.
\end{proposition}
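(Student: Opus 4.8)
The plan is to read the statement off directly from Theorem~\ref{tapiavega} after re-expressing $2(\vect{c}\odot\vect{d})$ through the associated codes; this re-expression is essentially the computation already carried out in the proof of Theorem~\ref{thm:iff_condition_concatenated}. Writing $\vect{c}=\sum_{i=1}^L 2^{i-1}\vect{u}_i$ and $\vect{d}=\sum_{i=1}^L 2^{i-1}\vect{v}_i$, and using that $\odotq{2}$ coincides coordinate-wise with the Schur product $\circ$, one obtains (as in~\eqref{eq:2uv})
\begin{IEEEeqnarray*}{c}
2(\vect{c}\odot\vect{d}) = \sum_{i=1}^{L} 2^{i}(\vect{u}_i\circ\vect{v}_i) = \sum_{i=1}^{L-1} 2^{i}(\vect{u}_i\circ\vect{v}_i),
\end{IEEEeqnarray*}
where the top term vanishes since $2^{L}\equiv 0 \pmod{2^{L}}$. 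Thus the entire content of the proposition is a matter of tracking how the layers $\vect{u}_i,\vect{v}_i$ range as $\vect{c},\vect{d}$ vary over $\code{C}$.

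First I would invoke Theorem~\ref{tapiavega}: $\gray(\code{C})$ is linear if and only if $2(\vect{c}\odot\vect{d})\in\code{C}$ for every $\vect{c},\vect{d}\in\code{C}$. Substituting the identity above, this becomes the requirement that $\sum_{i=1}^{L-1}2^{i}(\vect{u}_i\circ\vect{v}_i)\in\code{C}$ for all such pairs. By Definition~\ref{def:associated_codes}, as $\vect{c}$ (respectively $\vect{d}$) runs over $\code{C}$, its $i$-th layer $\vect{u}_i$ (respectively $\vect{v}_i$) runs over the whole associated code $\code{C}_i$; hence each factor $\vect{u}_i\circ\vect{v}_i$ runs over $\code{C}_i\circ\code{C}_i$. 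Collecting these layer-wise contributions produces exactly the expression $2(\code{C}_1\circ\code{C}_1)+4(\code{C}_2\circ\code{C}_2)+\dots+2^{L-1}(\code{C}_{L-1}\circ\code{C}_{L-1})$, and the linearity condition is precisely that this set is contained in $\code{C}$.

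The one point requiring care, and the step I expect to be the main obstacle, is the independence of the layer choices across distinct indices $i$. Theorem~\ref{tapiavega} only supplies the \emph{coupled} elements, in which all the $\vect{u}_i,\vect{v}_i$ originate from a single pair $(\vect{c},\vect{d})$, whereas the set-sum notation suggests picking each $\vect{a}_i\in\code{C}_i\circ\code{C}_i$ independently. One implication is immediate: every coupled element is a particular element of the set-sum, so containment of the set-sum in $\code{C}$ at once yields the coupled condition and hence, by Theorem~\ref{tapiavega}, linearity. For the converse I would exploit that $\code{C}$ is an additive group: it suffices to show that each pure term $2^{i}(\vect{p}\circ\vect{q})$ with $\vect{p},\vect{q}\in\code{C}_i$ and $i\in[1:L-1]$ lies in $\code{C}$, after which arbitrary independent sums follow by closure under addition. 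Isolating a single pure term from the coupled sums is the delicate part; I would attack it using the nested chain $\code{C}_1\subseteq\code{C}_2\subseteq\dots\subseteq\code{C}_L$ of associated codes together with additivity to cancel the unwanted layers. If instead the statement is intended in the coupled sense, which is the natural reading of a \emph{restatement} of Theorem~\ref{tapiavega}, then no decoupling is needed and the proof concludes with the substitution above.
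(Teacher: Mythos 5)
Your main line of argument is correct and is essentially the paper's own proof: the paper likewise just substitutes the layer expansion~\eqref{eq:2uv} of $2(\vect{c}\odot\vect{d})$ into the linearity criterion (routed through Theorem~\ref{thm:iff_condition_concatenated} rather than directly through Theorem~\ref{tapiavega}, an immaterial difference), and then observes that $\vect{u}_i\circ\vect{v}_i\in\code{C}_i\circ\code{C}_i$. The proposition is indeed intended in the coupled sense, exactly as in your closing remark, so under that reading your proof is complete.

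One substantive warning about the branch you left open: the decoupling you proposed to attack ``using the nested chain $\code{C}_1\subseteq\dots\subseteq\code{C}_L$ together with additivity'' is not merely delicate, it is impossible, because under the independent (Minkowski set-sum) reading the statement is false. Take $\code{C}=\langle(1,2)\rangle\subseteq\Integers_8^2$, with associated codes $\code{C}_1=\{(0,0),(1,0)\}$ and $\code{C}_2=\Integers_2^2$. Every codeword with odd first coordinate, namely $(1,2),(3,6),(5,2),(7,6)$, has first layer $(1,0)$ and second layer with second coordinate equal to $1$; consequently every \emph{coupled} sum $2(\vect{u}_1\circ\vect{v}_1)+4(\vect{u}_2\circ\vect{v}_2)$ lies in $\{(0,0),(4,0),(2,4),(6,4)\}\subseteq\code{C}$, so $\gray(\code{C})$ is linear by Theorem~\ref{tapiavega}. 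Yet the \emph{independent} choices $\vect{a}=(1,0)\in\code{C}_1\circ\code{C}_1$ and $\vect{b}=(0,0)\in\code{C}_2\circ\code{C}_2$ give $2\vect{a}+4\vect{b}=(2,0)\notin\code{C}$ (and similarly $(0,4)\notin\code{C}$). The cross-layer correlation of the products $\vect{u}_i\circ\vect{v}_i$ coming from a single pair $(\vect{c},\vect{d})$ therefore carries essential information and cannot be removed; any attempt to isolate pure terms $2^{i}(\vect{p}\circ\vect{q})$ must fail in general, and the coupled interpretation is the only correct reading of the proposition (a point on which the paper's notation, which writes $\in$ for what is at best a containment of a constrained family, is admittedly loose).
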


\begin{proof} Consider all $\vect{c}, \vect{d} \in 
\code{C}$, where $\vect{c} = \sum_{i=1}^L 2^{i-1}\vect{u}_i, \vect{d} = \sum_{i=1}^L 2^{i-1}\vect{v}_i \in \code{C}$. The proof follows from the fact that $(\vect{0},\vect{u}_1 \circ \vect{v}_1, \vect{u}_2 \circ \vect{v}_2, \dots, \vect{u}_{L-1} \circ \vect{v}_{L-1}) \in \code{B}$, where $\code{B}$ is the decomposition code, is equivalent to $2(\vect{u}_1 \circ \vect{v}_1) + 4(\vect{u}_2 \circ \vect{v}_2) + \dots + 2^{L-1}(\vect{u}_{L-1} \circ \vect{v}_{L-1}) \in \code{C}$. Notice that $\vect{u}_i \circ \vect{v}_i \in \code{C}_i \circ \code{C}_i$ for all $i \in [1: L-1]$.
\end{proof}

\subsection{Special Cases: $\Integers_4$ and $\Integers_8$-Additive Codes}
\label{subsec:z4-and-z8}

From the decomposition code and Theorem~\ref{thm:iff_condition_concatenated}, we introduce a simplified way of verifying the linearity of $\gray(\code{C})$ for $\Integers_{4}$ and $\Integers_8$-additive codes. Consider $(\vect{u}_1,\vect{u}_2,\dots,\vect{u}_L), (\vect{v}_1,\vect{v}_2,\dots,\vect{v}_L) \in \code{B}$. We define
\begin{IEEEeqnarray}{c}
\label{eq:definition-set-S}
\vect{s}_i = (\vect{u}_i \circ \vect{v}_i) \oplusq{2} \vect{r}_i^{(1)} \oplusq{2} \dots \oplusq{2}  \vect{r}_{i}^{(i-1)}, \nonumber \\
 \vect{r}_i^{(1)} = (\vect{u}_i \oplusq{2} \vect{v}_i) \circ (\vect{u}_{i-1} \circ \vect{v}_{i-1}), ~ \vect{r}_i^{(j)} = \vect{r}_i^{(j-1)} \circ \vect{r}_{i-1}^{(j-1)},
\end{IEEEeqnarray}
for $2 \leq j \leq i-1,$ $2 \leq i \leq L-1$, $\vect{s}_0=\vect{0}$. Thus, we have for $i=1,2$ that
\begin{IEEEeqnarray}{rCl}
\label{eq:formulas-si}
\vect{s}_1 & = & \vect{u}_1 \circ \vect{v}_1, \nonumber \\
\vect{s}_2 & = &  \left(\vect{s}_1 \circ (\vect{u}_2 \oplusq{2} \vect{v}_2) \right) \oplusq{2} (\vect{u}_2 \circ \vect{v}_2) = \left((\vect{u}_1 \circ \vect{v}_1) \circ (\vect{u}_2 \oplusq{2} \vect{v}_2) \right) \oplusq{2} (\vect{u}_2 \circ \vect{v}_2).
\end{IEEEeqnarray}

When the decomposition code $\code{B}$ is linear, we can derive some properties about the set $\mathcal{S}=\{(\vect{0},\vect{s}_1,\dots,\vect{s}_{L-1})\}$, where each $\vect{s}_i$ is defined as in~\eqref{eq:definition-set-S}, for all $\vect{u},\vect{v} \in \code{B}$.

\begin{theorem}\cite[Th.~5]{maiara} Let $\code{B}$ be a linear code in $\Integers_2^{nL}$ obtained from a code $\code{C} \subseteq \Integers_{2^L}^n$. Then, $\code{C}$ is $\Integers_{2^L}$-additive if and only if $\mathcal{S} \subseteq \code{B}$.
\end{theorem}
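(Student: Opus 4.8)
The plan is to reduce the whole statement to a single \emph{carry identity} describing the binary decomposition of the ring sum $\vect{c} \oplusq{2^L} \vect{d}$, and then to play this identity against the linearity of $\code{B}$. Throughout I fix $\vect{c} = \sum_{i=1}^L 2^{i-1}\vect{u}_i$ and $\vect{d} = \sum_{i=1}^L 2^{i-1}\vect{v}_i$ in $\code{C}$, so that $\vect{u} = (\vect{u}_1,\dots,\vect{u}_L)$ and $\vect{v} = (\vect{v}_1,\dots,\vect{v}_L)$ are the corresponding words of $\code{B}$, and I read $\mathcal{S}$ as ranging over all such pairs.

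First I would prove the \textbf{carry identity}: with $\vect{s}_0 = \vect{0}$ and $\vect{s}_i$ as in \eqref{eq:definition-set-S},
\begin{IEEEeqnarray*}{c}
\vect{c} \oplusq{2^L} \vect{d} = \sum_{i=1}^L 2^{i-1}\left(\vect{u}_i \oplusq{2} \vect{v}_i \oplusq{2} \vect{s}_{i-1}\right).
\end{IEEEeqnarray*}
The content here is precisely the correctness of ripple-carry addition in $\Integers_{2^L}$: $\vect{s}_{i-1}$ is the carry entering position $i$, the $i$-th binary digit of the sum is $\vect{u}_i \oplusq{2} \vect{v}_i \oplusq{2} \vect{s}_{i-1}$, the carry leaving position $i$ is the majority $\vect{s}_i = (\vect{u}_i \circ \vect{v}_i) \oplusq{2} \big(\vect{s}_{i-1} \circ (\vect{u}_i \oplusq{2} \vect{v}_i)\big)$, and the final carry out of position $L$ is discarded modulo $2^L$. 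I would establish this by induction on the coordinate $i$, either directly or by iterating \eqref{relacaooperacoes} (each application of which propagates one level of carries upward, so the process terminates after at most $L$ steps). A second, nested induction then shows that this compact majority recursion for $\vect{s}_i$ coincides with the explicit expansion in \eqref{eq:definition-set-S}: expanding $\vect{s}_{i-1} \circ (\vect{u}_i \oplusq{2} \vect{v}_i)$ through the inductive formula for $\vect{s}_{i-1}$ and distributing the Schur product reproduces exactly the terms $\vect{r}_i^{(1)}, \dots, \vect{r}_i^{(i-1)}$, the base cases $\vect{s}_1,\vect{s}_2$ being already recorded in \eqref{eq:formulas-si}.

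With the carry identity in hand, both directions follow from one linear-algebra move in $\code{B}$. Since $\code{B}$ is linear, $\vect{u} \oplusq{2} \vect{v} = (\vect{u}_1 \oplusq{2} \vect{v}_1, \dots, \vect{u}_L \oplusq{2} \vect{v}_L) \in \code{B}$, which is the decomposition word of $\vect{c} \oplus \vect{d}$, and the carry identity says that the decomposition word of $\vect{c} \oplusq{2^L} \vect{d}$ equals $(\vect{u} \oplusq{2} \vect{v}) \oplusq{2} (\vect{0}, \vect{s}_1, \dots, \vect{s}_{L-1})$. For the direction $(\Leftarrow)$, assume $\mathcal{S} \subseteq \code{B}$; then $(\vect{0},\vect{s}_1,\dots,\vect{s}_{L-1}) \in \code{B}$, and XOR-ing it with $\vect{u} \oplusq{2} \vect{v}$ inside the linear code $\code{B}$ shows the decomposition word of $\vect{c} \oplusq{2^L} \vect{d}$ lies in $\code{B}$, i.e.\ $\vect{c} \oplusq{2^L} \vect{d} \in \code{C}$. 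As $\vect{c},\vect{d}$ were arbitrary and $\code{C}$ is a finite set containing $\vect{0}$ (because $\code{B}$ is linear), closure under $\oplusq{2^L}$ makes $\code{C}$ an additive subgroup, hence $\Integers_{2^L}$-additive. For $(\Rightarrow)$, assume $\code{C}$ is $\Integers_{2^L}$-additive; then $\vect{c} \oplusq{2^L} \vect{d} \in \code{C}$, so its decomposition word $(\vect{u} \oplusq{2} \vect{v}) \oplusq{2} (\vect{0},\vect{s}_1,\dots,\vect{s}_{L-1})$ is in $\code{B}$, and XOR-ing with $\vect{u} \oplusq{2} \vect{v} \in \code{B}$ leaves $(\vect{0},\vect{s}_1,\dots,\vect{s}_{L-1}) \in \code{B}$ by linearity; ranging over all $\vect{u},\vect{v} \in \code{B}$ yields $\mathcal{S} \subseteq \code{B}$.

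The main obstacle is the carry identity, and in particular verifying that the compact recursion \eqref{eq:definition-set-S} genuinely computes the ripple carries $\vect{s}_i = \mathrm{maj}(\vect{u}_i,\vect{v}_i,\vect{s}_{i-1})$. The bookkeeping of the doubly-indexed terms $\vect{r}_i^{(j)}$ is the delicate part: one must check that distributing the Schur product over the XOR-expansion of $\vect{s}_{i-1}$ produces exactly the prescribed $\vect{r}_i^{(j)}$ and no spurious cross terms, which is where the definition $\vect{r}_i^{(j)} = \vect{r}_i^{(j-1)} \circ \vect{r}_{i-1}^{(j-1)}$ must be matched against the recursion. Once this accounting is settled, everything reduces to the short linearity argument above, and the agreement with Theorem~\ref{tapiavega} in the special case where $\code{B}$ is already linear serves as a consistency check rather than a genuine difficulty.
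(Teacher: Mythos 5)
A preliminary point of comparison: the paper contains no proof of this theorem — it is imported verbatim from \cite[Th.~5]{maiara} — so your proposal must be judged on its own. Its skeleton is the right one and is essentially forced: the ripple-carry identity $\vect{c} \oplusq{2^L} \vect{d} = \sum_{i=1}^L 2^{i-1}\bigl(\vect{u}_i \oplusq{2} \vect{v}_i \oplusq{2} \vect{s}_{i-1}\bigr)$, with carries generated by the majority recursion $\vect{s}_i = (\vect{u}_i \circ \vect{v}_i) \oplusq{2} \bigl(\vect{s}_{i-1} \circ (\vect{u}_i \oplusq{2} \vect{v}_i)\bigr)$, is correct binary addition; and once it is in place, your two XOR manipulations inside the linear code $\code{B}$, together with the observation that a nonempty finite subset of a group closed under the group operation is a subgroup, settle both implications cleanly.

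The gap sits exactly where you flagged it, and it does not close: the recursion of \eqref{eq:definition-set-S} does \emph{not} compute the ripple carries. Unrolling the majority recursion gives $\vect{s}_i = (\vect{u}_i \circ \vect{v}_i) \oplusq{2} \bigoplus_{j=1}^{i-1} (\vect{u}_j \circ \vect{v}_j)\circ(\vect{u}_{j+1} \oplusq{2} \vect{v}_{j+1}) \circ \cdots \circ (\vect{u}_i \oplusq{2} \vect{v}_i)$, whereas in \eqref{eq:definition-set-S} every $\vect{r}_i^{(j)}$ with $j \geq 2$ vanishes identically: $\vect{r}_i^{(1)}$ carries the factor $\vect{u}_{i-1} \circ \vect{v}_{i-1}$, $\vect{r}_{i-1}^{(1)}$ carries the factor $\vect{u}_{i-1} \oplusq{2} \vect{v}_{i-1}$, and $(\vect{a} \circ \vect{b}) \circ (\vect{a} \oplusq{2} \vect{b}) = \vect{0}$ coordinatewise, so $\vect{r}_i^{(2)} = \vect{r}_i^{(1)} \circ \vect{r}_{i-1}^{(1)} = \vect{0}$ and inductively $\vect{r}_i^{(j)} = \vect{0}$ for all $j \geq 2$. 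Hence the printed $\vect{s}_3$ equals $(\vect{u}_3 \circ \vect{v}_3) \oplusq{2} (\vect{u}_2 \circ \vect{v}_2) \circ (\vect{u}_3 \oplusq{2} \vect{v}_3)$ and misses the true third-order carry $(\vect{u}_1 \circ \vect{v}_1) \circ (\vect{u}_2 \oplusq{2} \vect{v}_2) \circ (\vect{u}_3 \oplusq{2} \vect{v}_3)$, which is nonzero in any coordinate with $u_1 = v_1 = 1$, $u_2 \neq v_2$, $u_3 \neq v_3$. No bookkeeping of the $\vect{r}_i^{(j)}$ can recover those terms, because those $\vect{r}_i^{(j)}$ are identically zero. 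So what you have is a valid proof of the theorem when $\mathcal{S}$ is built from the true carries — equivalently, when the recursion is repaired to $\vect{r}_i^{(j)} = (\vect{u}_i \oplusq{2} \vect{v}_i) \circ \vect{r}_{i-1}^{(j-1)}$ — but not a proof of the statement with \eqref{eq:definition-set-S} read literally; for that version you would owe either a separate argument or a counterexample (the most plausible resolution being that the printed recursion is a typo). The discrepancy first appears at $\vect{s}_3$, i.e., for $L \geq 4$; for $L \in \{2,3\}$, the only cases the paper uses downstream (Lemma~\ref{lem:decomposition-gray-linear}), only $\vect{s}_1, \vect{s}_2$ as in \eqref{eq:formulas-si} occur, and there your identification and hence your whole argument is correct.
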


The linearity of $\gray(\code{C})$ is related to the linearity of the decomposition code $\code{B}$ for $\Integers_4$ and $\Integers_8$-additive codes. 

\begin{lemma} 
\label{lem:decomposition-gray-linear}
Let $\code{C}$ be $\Integers_{2^L}$-additive, for $L=2,3$. The decomposition code $\code{B}$ is linear if and only if $\gray(\code{C})$ is linear.
\end{lemma}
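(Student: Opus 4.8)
The statement is a biconditional for $L=2,3$, and the plan is to prove the two directions separately, leaning heavily on the machinery already established in Theorem~\ref{thm:iff_condition_concatenated} and Proposition~\ref{prop:connections-hammons-etal}. First I would settle the easy direction: if $\code{B}$ is linear, then $\gray(\code{C})$ is linear. For this I would use Proposition~\ref{prop:connections-hammons-etal}, which reduces linearity of $\gray(\code{C})$ to showing that $\sum_{i=1}^{L-1}2^i(\code{C}_i \circ \code{C}_i) \in \code{C}$. When $\code{B}$ is linear, the characterization via the set $\mathcal{S}$ (quoted as~\cite[Th.~5]{maiara}) together with the explicit formulas~\eqref{eq:formulas-si} for $\vect{s}_1$ and $\vect{s}_2$ gives that $\vect{s}_i \in \code{B}$; but linearity of $\code{B}$ also lets us add these back and cancel the cross-terms $\vect{r}_i^{(j)}$, so that the bare Schur products $(\vect{0}, \vect{u}_1\circ\vect{v}_1, \dots, \vect{u}_{L-1}\circ\vect{v}_{L-1})$ land in $\code{B}$, which is exactly the condition $\set{R}\subseteq \code{B}$ of Theorem~\ref{thm:iff_condition_concatenated}. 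I would carry this out explicitly for $L=2$ (where $\vect{s}_1 = \vect{u}_1\circ\vect{v}_1$ directly, so there is nothing to cancel) and $L=3$ (where one must use $\vect{s}_1, \vect{s}_2 \in \code{B}$ and linearity to recover $\vect{u}_1\circ\vect{v}_1$ and $\vect{u}_2\circ\vect{v}_2$ in the right coordinates).

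For the converse, I would show that if $\gray(\code{C})$ is linear then $\code{B}$ is linear. The starting observation is that $\code{B}$ is the image of the additive group $\code{C}$ under the (coordinate-wise) binary-decomposition map, so $\code{B}$ is automatically closed under the \emph{carry-aware} addition inherited from $\oplusq{2^L}$; the obstruction to ordinary $\Integers_2$-linearity of $\code{B}$ is precisely the carry terms $2(\vect{c}\odot\vect{d})$ recorded in~\eqref{relacaooperacoes}. Concretely, for $(\vect{u}_1,\dots,\vect{u}_L),(\vect{v}_1,\dots,\vect{v}_L)\in\code{B}$, the bitwise sum $\vect{u}_i\oplusq{2}\vect{v}_i$ differs from the decomposition of $\vect{c}\oplusq{2^L}\vect{d}$ exactly by the carries, so closure of $\code{B}$ under bitwise $\oplusq{2}$ is equivalent to all the relevant carry vectors lying in $\code{B}$ as well. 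By Theorem~\ref{thm:iff_condition_concatenated}, linearity of $\gray(\code{C})$ gives $\set{R}\subseteq\code{B}$, i.e. the leading carry $(\vect{0},\vect{u}_1\circ\vect{v}_1,\dots,\vect{u}_{L-1}\circ\vect{v}_{L-1})$ is in $\code{B}$; feeding this back into the carry expansion and using that $\code{C}$ is already additive (so $\code{B}$ is closed under the carry-corrected sum) should let me conclude that the ordinary bitwise sum of any two words of $\code{B}$ is again in $\code{B}$.

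I expect the main obstacle to be the converse direction, and specifically the bookkeeping of \emph{higher-order} carries. For $L=2$ this is clean: there is a single carry term $2(\vect{u}_1\circ\vect{v}_1)$, and the equivalence between $\set{R}\subseteq\code{B}$ and linearity of $\code{B}$ is almost immediate. For $L=3$, adding two decomposition words bitwise can generate carries not only at the first level but cascading into the third coordinate, and these are captured by the nested expressions $\vect{s}_2$ and $\vect{r}_i^{(j)}$ in~\eqref{eq:definition-set-S}. The crux is to show that linearity of $\gray(\code{C})$ controls \emph{all} of these, not merely the leading Schur product; here I would invoke Theorem~\ref{thm:iff_condition_concatenated} in both directions and combine it with the fact that additivity of $\code{C}$ already forces $\mathcal{S}\subseteq\code{B}$ once $\code{B}$ is shown closed, turning the argument into a finite and tractable verification at $L=3$ rather than an induction on general $L$ (which is why the lemma is stated only for $L=2,3$).
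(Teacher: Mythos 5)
Your plan is correct and follows essentially the same route as the paper's proof: both directions rest on Theorem~\ref{thm:iff_condition_concatenated} (the condition $\set{R}\subseteq\code{B}$), the $\mathcal{S}$-characterization of~\cite[Th.~5]{maiara}, and the carry identity~\eqref{relacaooperacoes}, with $L=2$ dispatched immediately and $L=3$ handled by cancelling the cross-term $(\vect{u}_1\circ\vect{v}_1)\circ(\vect{u}_2\oplusq{2}\vect{v}_2)$. The only under-specified step in your forward direction --- linearity of $\code{B}$ alone does not produce the vector $\bigl(\vect{0},\vect{0},(\vect{u}_2\oplusq{2}\vect{v}_2)\circ(\vect{u}_1\circ\vect{v}_1)\bigr)\in\code{B}$ needed for that cancellation --- is resolved in the paper by first adding the corresponding codewords inside $\code{C}$ via $\oplusq{8}$ and re-decomposing (i.e., exactly the carry-aware addition you invoke for the converse), so the difference is one of detail rather than of method.
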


\begin{proof} $(\Rightarrow)$ We start with $L=2$, i.e. we assume that $\code{C}$ is $\Integers_4$-additive. If $\code{B}$ is linear, it follows immediately that $\gray(\code{C})$ is also linear from Theorem~\ref{thm:iff_condition_concatenated} since $\set{R}=\set{S}$. 

For $L=3$, assume $\code{B}$ is linear and fix $\vect{c} =\vect{u}_1+2\vect{u}_2+4\vect{u}_3, \vect{d} =\vect{v}_1+2\vect{v}_2+4\vect{v}_3 \in \code{C} \subseteq \Integers_8^n$, and respectively $(\vect{u}_1, \vect{u}_2, \vect{u}_3), (\vect{v}_1, \vect{v}_2, \vect{v}_3) \in \code{B} \subseteq \Integers_{2}^{3n}$. Since $\code{B}$ is linear, $(\vect{u}_1 \oplusq{2} \vect{v}_1, \vect{u}_2 \oplusq{2} \vect{v}_2, \vect{u}_3 \oplusq{2} \vect{v}_3) \in \code{B}$ and $(\vect{u}_1 \oplusq{2} \vect{v}_1) +2 (\vect{u}_2 \oplusq{2} \vect{v}_2) + 4 (\vect{u}_3 \oplusq{2} \vect{v}_3) \in \code{C}$. From Theorem~\ref{thm:iff_condition_concatenated}, $\set{S}=\{(0, \vect{s}_1, \vect{s}_2)\} \subseteq \code{B}$, where $\vect{s}_1, \vect{s}_2$ are defined as in~\eqref{eq:formulas-si}, and $2 \vect{s}_1 + 4 \vect{s}_2 \in \code{C}$. Since $\code{C}$ is $\Integers_8$-additive,
\begin{IEEEeqnarray}{c}
\label{eq:linearity-z8}
\bigl( (\vect{u}_1 \oplusq{2} \vect{v}_1) +2 (\vect{u}_2 \oplusq{2} \vect{v}_2) + 4 (\vect{u}_3 \oplusq{2} \vect{v}_3) \bigr) \oplusq{8} \bigl(2 \vect{s}_1 + 4 \vect{s}_2 \bigr) \in \code{C} \nonumber \\
\Rightarrow (\vect{u}_1 \oplusq{2} \vect{v}_1) + 2 ((\vect{u}_2 \oplusq{2} \vect{v}_2) \oplusq{2} \vect{s}_1) + 4 \bigl(((\vect{u}_2 \oplusq{2} \vect{v}_2) \circ \vect{s}_1) \oplusq{2} (\vect{u}_3 \oplusq{2} \vect{v}_3) \oplusq{2} \vect{s}_2 \bigr) \in \code{C}  \nonumber \\
\Rightarrow \bigl(\vect{u}_1 \oplusq{2} \vect{v}_1, (\vect{u}_2 \oplusq{2} \vect{v}_2) \oplusq{2} \vect{s}_1,  ((\vect{u}_2 \oplusq{2} \vect{v}_2) \circ \vect{s}_1) \oplusq{2} (\vect{u}_3 \oplusq{2} \vect{v}_3) \oplusq{2} \vect{s}_2 \bigr) \in \code{B},
\end{IEEEeqnarray}
and $\bigl(\vect{0}, \vect{0},  (\vect{u}_2 \oplusq{2} \vect{v}_2) \circ \vect{s}_1 \bigr) \in \code{B}$. Provided that $\vect{s}_1 = \vect{u}_1 \circ \vect{v}_1$, $\vect{s}_2 = ((\vect{u}_2 \oplusq{2} \vect{v}_2) \circ \vect{s}_1 ) \oplusq{2} (\vect{u}_2 \circ \vect{v}_2)$, and $(\vect{0},\vect{s}_1, \vect{s}_2) \in \code{B}$, we get
\begin{IEEEeqnarray*}{c}
(\vect{0},\vect{s}_1, \vect{s}_2) = (\vect{0}, \vect{u}_1 \circ \vect{v}_1, \vect{u}_2 \circ \vect{v}_2) \oplusq{2} (\vect{0}, \vect{0}, (\vect{u}_2 \oplusq{2} \vect{v}_2) \circ \vect{s}_1) \Rightarrow (\vect{0}, \vect{u}_1 \circ \vect{v}_1, \vect{u}_2 \circ \vect{v}_2) \in \code{B},
\end{IEEEeqnarray*}
and from Theorem~\ref{thm:iff_condition_concatenated}, $\gray(\code{C})$ is linear, as we wanted to demonstrate.

$(\Leftarrow)$ For $L=2$, if $\gray(\code{C})$ is linear then $\set{R}=\{(0, \vect{u}_1 \circ \vect{v}_1): \forall \vect{c}=\vect{u}_1+2\vect{u}_2, ~\vect{d}=\vect{v}_1+2\vect{v}_2 \in \code{C} \} \subseteq \code{B}$ from Theorem~\ref{thm:iff_condition_concatenated}, which implies that $(0,\vect{s}_1) \in \code{B}$. Thus, given $\vect{c}=\vect{u}_1+2\vect{u}_2$, $\vect{d}=\vect{v}_1+2\vect{v}_2 \in \code{C}$, such that $(\vect{u}_1, \vect{u}_2), (\vect{v}_1, \vect{v}_2) \in \code{B}$, it holds that
\begin{IEEEeqnarray*}{rCl}
\vect{c} \oplusq{4} \vect{d} \in \code{C}  & \Rightarrow & (\vect{u}_1 \oplusq{2} \vect{v}_1) + 2\bigl((\vect{u}_1 \circ \vect{v}_1) \oplusq{2} (\vect{u}_2 \oplusq{2} \vect{v}_2) \bigr) \\
& \Rightarrow & \bigl((\vect{u}_1 \oplusq{2} \vect{v}_1), (\vect{u}_1 \circ \vect{v}_1) \oplusq{2} (\vect{u}_2 \oplusq{2} \vect{v}_2)  \bigr) \in \code{B},
\end{IEEEeqnarray*}
and we can conclude that $\bigl( \vect{u}_1 \oplusq{2} \vect{v}_1, \vect{u}_2 \oplusq{2} \vect{v}_2 \bigr) \in \code{B}$, since $(0, \vect{u}_1 \circ \vect{v}_1) \in \code{B}$. Given that $\vect{0} \in \code{C}$, also $\vect{0} \in \code{B}$, therefore $\code{B}$ is linear.

When $L=3$ and $\gray(\code{C})$ is assumed to be linear, Theorem~\ref{thm:iff_condition_concatenated} guarantees that 
\begin{IEEEeqnarray}{c}
\label{eq:set-r-3levels}
\set{R} =\left\{(\vect{0},\vect{u}_1 \circ \vect{v}_1, \vect{u}_2 \circ \vect{v}_2): \forall ~\vect{c}=\vect{u}_1+2\vect{u}_2+4\vect{u}_3, \vect{d}=\vect{v}_1+2\vect{v}_2+4\vect{v}_3 \in \code{C} \right\} \subseteq \code{B}. \nonumber \\
\end{IEEEeqnarray}
Also, since $\code{C}$ is $\Integers_8$-additive, for $\vect{c}, \vect{d} \in \code{C}$, it holds that $\vect{c} \oplusq{8} \vect{d} \in \code{C}$ and thus
\begin{IEEEeqnarray}{c}
\label{eq:linearity-c-3levels}
\vect{u}_1 \oplusq{2} \vect{v}_1 + 2\bigr((\vect{u}_2 \oplusq{2} \vect{v}_2) \oplusq{2} (\vect{u}_1 \circ \vect{v}_1) \bigl) \nonumber \\
+ 4  \bigr(((\vect{u}_2 \oplusq{2} \vect{v}_2) \circ (\vect{u}_1 \circ \vect{v}_1)) \oplusq{2} (\vect{u}_3 \oplusq{2} \vect{v}_3) \oplusq{2} (\vect{u}_2 \circ \vect{v}_2) \bigr) \in \code{C}.
\end{IEEEeqnarray}
 We want to demonstrate that $\code{B}$ is linear. Clearly, $\vect{0} \in \code{B}$. By adding the corresponding vectors from~\eqref{eq:set-r-3levels} and~\eqref{eq:linearity-c-3levels} in $\code{C}$, it follows that $(\vect{u}_1 \oplusq{2} \vect{v}_1, \vect{u}_2 \oplusq{2} \vect{v}_2, \vect{u}_3 \oplusq{2} \vect{v}_3) \in \code{B}$.
\end{proof}

Checking the linearity of the decomposition code, rather than the $\Integers_{2^L}$-linear code itself is advantageous since the former has a simpler structure. For example, if $\code{B}$ is linear, then its basis follows directly from the $\Integers_4$-additive codes. Given the relationship between the decomposition code $\code{B}$ and the respective $\Integers_{2^L}$-linear code via Lemma~\ref{lem:decomposition-gray-linear}, finding a basis for $\code{B}$ is also useful for constructing linear $\Integers_{2^L}$-linear codes.


Recall that any $\Integers_4$-additive code $\code{C}$ is permutation equivalent to a code with a generator matrix $\mat{G}_{\code{C}}$ in the standard form
\begin{IEEEeqnarray}{c}
\label{eq:generator_cs}
  \mat{G}_{\code{C}}=\begin{pNiceMatrix}
    \mat{I}_{k_1} & \mat{A} & \mat{B} \\
    \mat{O}_{k_2 \times k_1} & 2\mat{I}_{k_2} & 2\mat{C}
    \label{eq:Generator_Matrix}
  \end{pNiceMatrix},
\end{IEEEeqnarray}
where $\mat{A}$ and $\mat{C}$ are binary matrices, and $\mat{B}$ is defined over $\Integers_4$~\cite[Prop.~1.1]{Wan97_1}. Thus, there are $k=k_1+k_2$ generator vectors of $\code{C}$ and the cardinality of $\code{C}$ is $4^{k_1}2^{k_2}$ .

\begin{lemma} 
\label{lemm:basis-z4}
Let $\code{C}$ be a $\Integers_{4}$-additive code with generator vectors $\vect{c}_1, \vect{c}_2, \dots, \vect{c}_k$. If the decomposition code $\code{B}$ is linear, then $\gamma=\{(\vect{u}_{i,1},\vect{u}_{i,2}),(\vect{0},\vect{u}_{i,1}): 1\leq i \leq k\}$
is a basis of $\code{B}$, where $\vect{c}_i=\vect{u}_{i,1}+2\vect{u}_{i,2}$, for all $i$.   
\end{lemma}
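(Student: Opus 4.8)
The plan is to prove $\gamma$ is a basis of $\code{B}$ by a dimension count: I will show $\gamma\subseteq\code{B}$ (so $\langle\gamma\rangle\subseteq\code{B}$ by linearity of $\code{B}$), that $\gamma$ contains exactly $\dim_{\Integers_2}\code{B}$ nonzero vectors, and that these are linearly independent. Since the hypotheses and the conclusion are invariant under a coordinate permutation of $\code{C}$ (which acts by the same permutation on both binary halves of $\code{B}$), I would assume without loss of generality that $\vect{c}_1,\dots,\vect{c}_k$ are the rows of a generator matrix in the standard form~\eqref{eq:generator_cs}, so that $\vect{c}_1,\dots,\vect{c}_{k_1}$ are the order-four rows, $\vect{c}_{k_1+1},\dots,\vect{c}_k$ the order-two rows, and $k=k_1+k_2$.

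First I would verify $\gamma\subseteq\code{B}$. Each $(\vect{u}_{i,1},\vect{u}_{i,2})$ is the binary decomposition of $\vect{c}_i\in\code{C}$, hence lies in $\code{B}$ by Definition~\ref{def:concatenated_code}. For the vectors $(\vect{0},\vect{u}_{i,1})$ I use additivity of $\code{C}$: $\vect{c}_i\oplusq{4}\vect{c}_i=2\vect{c}_i\in\code{C}$, and since $2(\vect{u}_{i,1}+2\vect{u}_{i,2})\equiv 2\vect{u}_{i,1}\pmod 4$, the codeword $2\vect{u}_{i,1}$ has decomposition precisely $(\vect{0},\vect{u}_{i,1})$, so $(\vect{0},\vect{u}_{i,1})\in\code{B}$. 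Tracking this carry correctly is the one point requiring care, since it is what guarantees $(\vect{0},\vect{u}_{i,1})$ is genuinely a decomposition codeword.

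Next comes the counting. The binary decomposition is a bijection between $\code{C}$ and $\code{B}$, so $|\code{B}|=|\code{C}|=4^{k_1}2^{k_2}$, and as $\code{B}$ is linear, $\dim_{\Integers_2}\code{B}=2k_1+k_2$. Here I would flag the bookkeeping subtlety that keeps the count honest: for an order-two generator ($i>k_1$) we have $\vect{u}_{i,1}=\vect{0}$, so $(\vect{0},\vect{u}_{i,1})=\vect{0}$ and contributes nothing. Thus $\gamma$ reduces to the $2k_1+k_2$ nonzero vectors $\{(\vect{u}_{i,1},\vect{u}_{i,2})\}_{i=1}^{k}$ together with $\{(\vect{0},\vect{u}_{i,1})\}_{i=1}^{k_1}$, matching $\dim_{\Integers_2}\code{B}$ exactly; without this observation the naive count $2k$ would overshoot the dimension.

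Finally, I would read the linear independence off the identity blocks of the standard form, which I expect to be the main technical step. Suppose $\sum_{i=1}^{k}\alpha_i(\vect{u}_{i,1},\vect{u}_{i,2})\oplusq{2}\sum_{i=1}^{k_1}\beta_i(\vect{0},\vect{u}_{i,1})=\vect{0}$. Comparing first halves gives $\sum_{i=1}^{k}\alpha_i\vect{u}_{i,1}=\vect{0}$; restricted to the first $k_1$ coordinates, $\vect{u}_{i,1}$ equals $\vect{e}_i$ for $i\le k_1$ (from $\mat{I}_{k_1}$) and $\vect{0}$ for $i>k_1$, forcing $\alpha_i=0$ for $i\le k_1$. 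The second-half equation then reads $\sum_{i\le k_1}\beta_i\vect{u}_{i,1}\oplusq{2}\sum_{i>k_1}\alpha_i\vect{u}_{i,2}=\vect{0}$; its restriction to the first $k_1$ coordinates yields $\sum_{i\le k_1}\beta_i\vect{e}_i=\vect{0}$, so $\beta_i=0$, and its restriction to the middle $k_2$ coordinates (the $2\mat{I}_{k_2}$ block) yields $\sum_{i>k_1}\alpha_i\vect{e}_{i-k_1}=\vect{0}$, so the remaining $\alpha_i$ vanish. Hence all coefficients are zero, the $2k_1+k_2$ vectors are independent, and combined with $\langle\gamma\rangle\subseteq\code{B}$ and the dimension count we conclude $\langle\gamma\rangle=\code{B}$, so $\gamma$ is a basis.
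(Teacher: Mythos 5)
Your proof is correct, and its spanning half takes a genuinely different (and cleaner) route than the paper's. The independence half is essentially the paper's argument: both of you read independence off the $\mat{I}_{k_1}$ and $2\mat{I}_{k_2}$ blocks of the standard form~\eqref{eq:generator_cs}, and both rest on the observation that the order-two rows have $\vect{u}_{i,1}=\vect{0}$, so that $\gamma$ contains exactly $2k_1+k_2$ nonzero vectors (you make this count explicit; the paper states it more tersely). The real difference is in proving that $\gamma$ spans $\code{B}$. The paper argues generation directly: it writes an arbitrary codeword as $\sum_{\ell}\alpha_\ell\vect{c}_\ell$ over $\Integers_4$, tabulates the decompositions of the scalar multiples $0\cdot\vect{c}_\ell,\dots,3\cdot\vect{c}_\ell$, and then runs a parity case analysis on the multiplicities $\lambda_i$ of the coefficients; that step is delicate, because binary decomposition is not additive (carries of the form $2(\vect{u}_1\circ\vect{v}_1)$ appear when distinct generators are summed), and the paper's bookkeeping glosses over exactly this point. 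You sidestep the issue entirely: you check $\gamma\subseteq\code{B}$ (your verification that $(\vect{0},\vect{u}_{i,1})$ is the decomposition of $2\vect{c}_i\in\code{C}$ is the same computation the paper records as its case $2\cdot\vect{c}_\ell$), then use that the decomposition map is a bijection, so $|\code{B}|=|\code{C}|=4^{k_1}2^{k_2}$, and the hypothesis that $\code{B}$ is linear, so $\dim_{\Integers_2}\code{B}=2k_1+k_2$; spanning follows from the dimension count with no carry analysis at all. What your route buys is rigor and economy --- the linearity hypothesis on $\code{B}$ enters transparently and in exactly one place; what the paper's route buys is an explicit recipe for assembling each element of $\code{B}$ from $\gamma$, which a counting argument does not exhibit. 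One caveat, shared equally with the paper: the lemma speaks of arbitrary generator vectors, while both proofs silently take $\vect{c}_1,\dots,\vect{c}_k$ to be the rows of a standard-form matrix; strictly, this is a change of generating set, not merely the coordinate permutation your ``without loss of generality'' invokes, and indeed the statement can fail for a redundant generating set.
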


\begin{proof} Assume the generator matrix of the $\Integers_4$-additive code as in~\eqref{eq:generator_cs}. Thus,
\begin{enumerate}
    \item[i)] \emph{The $2k_1+k_2$ elements in $\gamma$ are linearly independent.}\\
    Indeed, the binary decomposition of the first set of rows of $\mat{G}_{\code{C}}$ gives $k_1$ vectors $(\vect{u}_{i,1},\vect{u}_{i,2})$, for $i \in [1:k_1]$, while from the second set of rows we get $k_2$ vectors of the form $(\vect{0},\vect{u}_{j,2})$, for $j \in [1: k_2]$. These two classes of vectors are clearly linearly independent. Considering the set $\gamma$, also the $k_1$ vectors $(\vect{0},\vect{u}_{i,1})$ should compose the basis. Notice that $(\vect{0},\vect{u}_{i,1})$ and $(\vect{0},\vect{u}_{j,2})$ are linearly independent, since $\vect{u}_{j,2}$ has the initial $k_1$ coordinates equal to zero, where $\vect{u}_{i,1}$ has the canonical vectors originating from the block $\mat{I}_{k_1}$. Thus, the $2k_1+k_2$ vectors in $\gamma$ are linearly independent.

    \item[ii)] \emph{The vectors in $\gamma$ generate $\code{B}$.} \\
    Consider an element $(\vect{u}_1,\vect{u}_2) \in \code{B}$. Then, $\vect{c}=\vect{u}_1+2\vect{u}_2 \in \code{C}$ and also 
    \begin{IEEEeqnarray}{c}
    \label{eq:linear-combination}
    \vect{c}=\sum_{\ell=1}^{k} \alpha_\ell \vect{c}_\ell = \sum_{\ell=1}^{k} \alpha_\ell (\vect{u}_{\ell,1}+2\vect{u}_{\ell,2}),
    \end{IEEEeqnarray}
    where $\alpha_{\ell} \in \Integers_4$. For each choice of $\alpha_{\ell}$, observe that
    \begin{IEEEeqnarray*}{rCl}
    & 0 \cdot (\vect{u}_{\ell,1}+2\vect{u}_{\ell,2}) \mapsto (\vect{0},\vect{0}), ~ ~ ~ ~ & 1 \cdot (\vect{u}_{\ell,1}+2\vect{u}_{\ell,2}) \mapsto (\vect{u}_{\ell,1},\vect{u}_{\ell,2}), \\
    & 2 \cdot (\vect{u}_{\ell,1}+2\vect{u}_{\ell,2}) \mapsto (\vect{0},\vect{u}_{\ell,1}), ~ ~ ~ ~ & 3 \cdot (\vect{u}_{\ell,1}+2\vect{u}_{\ell,2}) \mapsto (\vect{u}_{\ell,1},\vect{u}_{\ell,1} \oplusq{2} \vect{u}_{\ell,2})
    \end{IEEEeqnarray*}
    and these four vectors are a linear combination of elements in $\gamma$.

    From~\eqref{eq:linear-combination} and $\lambda_i=|\{\alpha_\ell=i, 1\leq \ell \leq k\}|$, for $i \in [0:3]$ in the composition of $\vect{c}$,
    \begin{IEEEeqnarray*}{rCl}
    (\vect{u}_1,\vect{u}_2) & = &  \lambda_{0}(\vect{0},\vect{0})+ \lambda_{1}(\vect{u}_{\ell,1},\vect{u}_{\ell,2}) + \lambda_2 (\vect{0},\vect{u}_{\ell,1}) + \lambda_3 (\vect{u}_{\ell,1},\vect{u}_{\ell,1} \oplusq{2} \vect{u}_{\ell,2}) \\
    & = & \left( (\lambda_1 + \lambda_3) \vect{u}_{\ell,1}, \lambda_1 \vect{u}_{\ell,2} + \lambda_2 \vect{u}_{\ell,1} + \lambda_3 (\vect{u}_{\ell,1} \oplusq{2} \vect{u}_{\ell,2})   \right).
    \end{IEEEeqnarray*}
    Notice that
     \begin{IEEEeqnarray*}{rCl}
    (\vect{u}_1,\vect{u}_2) = 
    \begin{cases}
    (\vect{0}, \vect{0}), & \text{if} ~ ~ \lambda_1, \lambda_2, \lambda_3 \equiv 0 \bmod 2 \\
    & \text{or} ~ ~ \lambda_1, \lambda_2, \lambda_3 \equiv 1 \bmod 2, \\
    (\vect{u}_{\ell,1}, \vect{u}_{\ell,1} \oplusq{2} \vect{u}_{\ell,2}),  &  \text{if} ~ ~\lambda_1, \lambda_2 \equiv 0 \bmod 2 ~ \text{and} ~ \lambda_3 \equiv 1 \bmod 2  \\
     &  \text{or} ~ ~\lambda_1, \lambda_2 \equiv 1 \bmod 2 ~ \text{and} ~ \lambda_3 \equiv 0 \bmod 2,  \\
     (\vect{0}, \vect{u}_{\ell,1}),  & \text{if} ~ ~ \lambda_1, \lambda_3 \equiv 0 \bmod 2 ~ \text{and} ~ \lambda_2 \equiv 1 \bmod 2 \\
     & \text{or} ~ ~\lambda_1, \lambda_3 \equiv 1 \bmod 2 ~ \text{and} ~ \lambda_2 \equiv 0 \bmod 2,  \\
    (\vect{u}_{\ell,1}, \vect{u}_{\ell,2}),  &  \text{if} ~ ~\lambda_2, \lambda_3 \equiv 0 \bmod 2 ~ \text{and} ~ \lambda_1 \equiv 1 \bmod 2  \\
    &  \text{or} ~ ~\lambda_2, \lambda_3 \equiv 1 \bmod 2 ~ \text{and} ~ \lambda_1 \equiv 0 \bmod 2,  \\
    \end{cases}
    \end{IEEEeqnarray*}
which are clearly generated by the vectors in $\gamma$. This concludes the proof. \hfill \qedhere
\end{enumerate}
\end{proof}

The following result gives a simple sufficient (but not necessary) condition for a $\Integers_4$-linear code to be linear based on its generator matrix in the standard form~\eqref{eq:generator_cs}.

\begin{lemma}
\label{lem:simple-condition}
Consider a $\Integers_{4}$-additive code $\code{C}$. Then $\gray(\code{C})$ is linear if $\code{C}$ has a generator matrix $\mat{G}_{\code{C}}$ (say, on the form \eqref{eq:generator_cs})  in which no column contains more than one odd-valued coordinate. 
\end{lemma}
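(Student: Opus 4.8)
The plan is to invoke the $L=2$ specialization of Proposition~\ref{prop:connections-hammons-etal}, which says that $\gray(\code{C})$ is linear if and only if $2(\code{C}_1 \circ \code{C}_1) \subseteq \code{C}$, where $\code{C}_1$ is the (binary linear) first associated code of $\code{C}$. I would then prove this inclusion by decoupling it into two independent facts: that $2\code{C}_1 \subseteq \code{C}$ holds for \emph{any} code written in the standard form~\eqref{eq:generator_cs}, and that the column hypothesis on $\mat{G}_{\code{C}}$ forces $\code{C}_1$ to be closed under the Schur product, i.e.\ $\code{C}_1 \circ \code{C}_1 \subseteq \code{C}_1$. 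Chaining the two yields $2(\code{C}_1 \circ \code{C}_1) \subseteq 2\code{C}_1 \subseteq \code{C}$, which is exactly the required condition.

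First I would record the containment $2\code{C}_1 \subseteq \code{C}$, which needs only the standard form and not the hypothesis. Writing each generator row as $\vect{c}_i = \vect{u}_{i,1} + 2\vect{u}_{i,2}$, one has $2\vect{u}_{i,1} = 2\vect{c}_i \in \code{C}$ in $\Integers_4$, since $4\vect{u}_{i,2} = \vect{0}$. Moreover the doubling map $\vect{a} \mapsto 2\vect{a}$ from $\Integers_2^n$ to $\Integers_4^n$ is a group homomorphism because $2(a \oplusq{2} b) \equiv 2a + 2b \pmod 4$ coordinate-wise. Because only the first $k_1$ rows carry an odd part, $\code{C}_1 = \langle \vect{u}_{1,1}, \dots, \vect{u}_{k_1,1}\rangle$, so every $\vect{a} \in \code{C}_1$ is a binary combination of the $\vect{u}_{i,1}$, and its image $2\vect{a}$ is the corresponding $\Integers_4$-combination of the elements $2\vect{c}_i \in \code{C}$; hence $2\vect{a} \in \code{C}$.

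The key step is the translation of the hypothesis. Since $\vect{u}_{i,1} = \vect{c}_i \bmod 2$, the support of $\vect{u}_{i,1}$ is precisely the set of columns in which row $i$ is odd-valued, so the condition that no column carries more than one odd coordinate is exactly the statement that the supports of $\vect{u}_{1,1}, \dots, \vect{u}_{k_1,1}$ are pairwise disjoint. Disjointness gives $\vect{u}_{i,1} \circ \vect{u}_{j,1} = \vect{0}$ for $i \neq j$, while $\vect{u}_{i,1} \circ \vect{u}_{i,1} = \vect{u}_{i,1}$ for any binary vector. Expanding $\vect{a} \circ \vect{b}$ bilinearly for $\vect{a} = \bigoplus_{i \in S} \vect{u}_{i,1}$ and $\vect{b} = \bigoplus_{j \in T} \vect{u}_{j,1}$, every cross term vanishes and only the diagonal contributions with $i \in S \cap T$ remain, so $\vect{a} \circ \vect{b} = \bigoplus_{i \in S \cap T} \vect{u}_{i,1} \in \code{C}_1$. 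This establishes $\code{C}_1 \circ \code{C}_1 \subseteq \code{C}_1$ and, combined with the previous paragraph, finishes the proof via Proposition~\ref{prop:connections-hammons-etal}.

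I expect no serious obstacle here, since the column condition was evidently designed to make $\code{C}_1$ Schur-idempotent. The only points requiring care are that the Schur product distributes over $\oplusq{2}$ (so that the cross-term cancellation is legitimate), and that reducing an arbitrary codeword modulo $2$ genuinely lands in the span of the generators' odd parts. The conceptual content is the realization that the $2\code{C}_1 \subseteq \code{C}$ half is automatic and hypothesis-free, so that the entire weight of the column condition is borne by the Schur-closure inclusion $\code{C}_1 \circ \code{C}_1 \subseteq \code{C}_1$.
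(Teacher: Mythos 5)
Your proof is correct, but it follows a different (and more rigorous) route than the paper's own proof. The paper proves Lemma~\ref{lem:simple-condition} in a single sentence: it observes that the Gray map $\gray:\Integers_4\to\Integers_2^2$ preserves addition except when both summands are odd, and leaves it to the reader why the column condition neutralizes exactly those problematic coordinates --- in particular, the delicate case of two codewords that are simultaneously odd in the same positions (e.g.\ $\vect{c}$ and $3\vect{c}$) is not discussed at all. You instead invoke the Schur-product criterion (the $L=2$ case of Proposition~\ref{prop:connections-hammons-etal}, equivalently Hammons' theorem, of which you only need one direction), and split the required inclusion $2(\code{C}_1\circ\code{C}_1)\subseteq\code{C}$ into the hypothesis-free containment $2\code{C}_1\subseteq\code{C}$ and the Schur closure $\code{C}_1\circ\code{C}_1\subseteq\code{C}_1$ forced by the pairwise disjoint supports of the generators' odd parts. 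All three of your supporting claims check out: the column condition is exactly disjointness of the supports of $\vect{u}_{1,1},\dots,\vect{u}_{k_1,1}$; bilinearity of $\circ$ over $\oplusq{2}$ kills the cross terms; and mod-$2$ reduction, being a ring homomorphism, maps $\code{C}$ onto the span of the generators' odd parts. Notably, your argument is essentially a completed version of what the paper only sketches in the remark immediately \emph{after} its proof (``the conditions of the lemma induce a partitioning of the coordinates\dots{} it follows that all Schur products of the form $\vect{u}_1\circ\vect{v}_1$ are actually codewords in $\code{C}_1$''), so it buys full verifiability at the cost of invoking Proposition~\ref{prop:connections-hammons-etal}, whereas the paper's one-liner is shorter but genuinely incomplete as written.
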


\begin{proof} For $\Integers_{4}$-additive codes, the Gray map preserves addition from $\Integers_4$ to $\Integers_2$ except when adding two odd numbers, which can be easily verified by checking all sums of two numbers in $\Integers_{4}$.
\end{proof}

To indicate the connection between Lemma~\ref{lem:simple-condition} and Theorem~\ref{thm:iff_condition_concatenated}, observe that the conditions of the lemma induce a partitioning of the coordinates, in the sense that an odd value occurring in coordinate $i$ ``belongs'' to a specific row $\vect{r}$ of the generator matrix $\mat{G}_{\code{C}}$. Hence, any codeword $\vect{c}$ with an odd value in coordinate $i$ is a linear combination of rows with an odd-valued multiple of row $\vect{r}$. It follows that all Schur products on the form  $\vect{u}_1 \circ \vect{v}_1$ are actually codewords in $\code{C}_1$.

\begin{remark}
\label{rem:simplegenmat}
    In fact, a generator matrix that meets the condition in Lemma~\ref{lem:simple-condition} can be straightforwardly expanded into a generator matrix for $\gray(\code{C})$  by replacing
    \begin{itemize}
        \item each row $\vect{r}$ in $\mat{G}_{\code{C}}$ that contains at least one odd-numbered value by two rows corresponding to the Gray map vectors  $\gray(\vect{r})$ and  $\gray(\vect{3r})$, respectively, and 
        \item each row $\vect{r}$ in $\mat{G}_{\code{C}}$ that contains only zeros and twos by the  Gray map vector  $\gray(\vect{r})$.
    \end{itemize}
\end{remark}

\begin{example}
As it can be observed in Table~\ref{tab:table-gray-linearity}, restricting code search to those that meet Lemma~\ref{lem:simple-condition} may have a cost in terms of optimum Lee distance. However, searches for short codes with good Euclidean distance\footnote{Recall that the Euclidean distance between two vectors $\vect{x},\vect{y}\in \Reals^n$ is $d_{\textnormal{E}}(\vect{x},\vect{y})=\sqrt{\sum_{i=1}^n (y_i-x_i)^2}$.} suggest that codes with a linear Gray map may perform well.  The $[10,4^5]$ $\Integers_4$-additive code $\code{C}$ generated by 
\begin{IEEEeqnarray*}{c}
   \mat{G}_{\code{C}}=\begin{pNiceMatrix}
 1& 0& 0& 0& 0& 0& 0& 1& 2 & 2\\
 0& 1& 0& 0& 0& 0& 1& 2& 0 & 2\\
 0& 0& 1& 0& 0& 1& 2& 2& 0 & 0\\
 0& 0& 0& 1& 0& 2& 0& 0& 2 & 1\\
 0& 0& 0& 0& 1& 2& 2& 0& 1 & 0\\
\end{pNiceMatrix}
\label{eq:G10-5}
\end{IEEEeqnarray*}   
has the largest squared Euclidean distance ($=8$) among all linear codes of length $10$ and cardinality $4^5$ over $\Integers_{4}$. By Lemma~\ref{lem:simple-condition} its Gray map code $\gray(\code{C})$ is a binary linear $[20,2^{10}]$ code. Its binary generator matrix is obtained as indicated in Remark~\ref{rem:simplegenmat}.
\end{example}

\subsection{Nonlinearity}
\label{subsec:nonlinearity}

To demonstrate the nonlinearity of some binary codes obtained via Gray map, we consider the associated codes instead of the decomposition codes. The advantages of this contribution are threefold: i) only binary operations are performed, ii) the condition can fail on the initial levels, and iii) if the condition fails for $\overline{n} \ll n$ coordinates, then clearly the code is nonlinear. 

\begin{corollary}\label{coro:princ}
Let $\code{C} \subseteq \mathbb{Z}_{2^L} ^n$ be a $\mathbb{Z}_{2^L}$-additive code with associated codes $\code{C}_1 ,\code{C}_2, \ldots, \code{C}_L$. If $\gray(\code{C})$ is a linear code, then the associated codes are closed under Schur product, i.e., $\code{C}_i \circ \code{C}_i \subseteq \code{C}_{i+1}$ for all $i \in [1: L-1]$.
\end{corollary}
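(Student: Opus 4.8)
The plan is to derive this as a direct consequence of Proposition~\ref{prop:connections-hammons-etal} (equivalently Theorem~\ref{thm:iff_condition_concatenated}) by isolating the contribution of each level. Assume $\gray(\code{C})$ is linear, and fix an index $i \in [1:L-1]$. I want to show $\vect{u}_i \circ \vect{v}_i \in \code{C}_{i+1}$ for all $\vect{u}_i \in \code{C}_i$ and $\vect{v}_i \in \code{C}_i$. First I would choose codewords $\vect{c}, \vect{d} \in \code{C}$ whose binary decompositions have prescribed components at level $i$; the natural move is to take $\vect{c} = 2^{i-1}\vect{u}_i$ and $\vect{d} = 2^{i-1}\vect{v}_i$, so that in the decomposition all components below level $i$ vanish and only the level-$i$ component is nonzero. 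For such a choice, the expansion \eqref{eq:2uv} collapses to the single term $2(\vect{c} \odot \vect{d}) = 2^{i}(\vect{u}_i \circ \vect{v}_i)$, because every product $\vect{u}_j \circ \vect{v}_j$ with $j \neq i$ is zero.

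The key step is then to invoke Theorem~\ref{tapiavega}: linearity of $\gray(\code{C})$ forces $2(\vect{c}\odot\vect{d}) = 2^{i}(\vect{u}_i \circ \vect{v}_i) \in \code{C}$. Reading off the binary decomposition of this codeword, its only potentially nonzero component sits at level $i+1$ and equals $\vect{u}_i \circ \vect{v}_i$ (recall $2^{i}(\vect{u}_i\circ\vect{v}_i) = 2^{(i+1)-1}(\vect{u}_i\circ\vect{v}_i)$). By Definition~\ref{def:associated_codes}, the level-$(i+1)$ component of any codeword in $\code{C}$ belongs to the associated code $\code{C}_{i+1}$, hence $\vect{u}_i \circ \vect{v}_i \in \code{C}_{i+1}$. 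Since $\vect{u}_i, \vect{v}_i$ range over all of $\code{C}_i$, this yields $\code{C}_i \circ \code{C}_i \subseteq \code{C}_{i+1}$, as claimed.

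The one point that needs care — and which I expect to be the main obstacle — is justifying that $\vect{c} = 2^{i-1}\vect{u}_i$ is itself a codeword of $\code{C}$ whose associated level-$i$ vector is exactly $\vect{u}_i$. Here I would use that $\vect{u}_i \in \code{C}_i$ means there is some codeword $\vect{c}' \in \code{C}$ having $\vect{u}_i$ as its level-$i$ component; multiplying $\vect{c}'$ by $2^{i-1}$ kills all components at levels $1,\dots,i-1$ and shifts the level-$i$ component up, so that $2^{i-1}\vect{c}'$ has the form $2^{i-1}\vect{u}_i + (\text{higher-order terms})$. Since $\code{C}$ is an additive subgroup, $2^{i-1}\vect{c}' \in \code{C}$, and the higher-order terms do not interfere with the level-$i$ Schur product argument; in fact the cleaner route is to note that it suffices to apply Proposition~\ref{prop:connections-hammons-etal} directly, observing that $\code{C}_i \circ \code{C}_i$ contributes precisely the term $2^{i}(\code{C}_i\circ\code{C}_i)$ in the sum $2(\code{C}_1\circ\code{C}_1) + \cdots + 2^{L-1}(\code{C}_{L-1}\circ\code{C}_{L-1}) \in \code{C}$, and that by the uniqueness of binary decomposition each $2^{i}(\vect{u}_i\circ\vect{v}_i)$ must independently land in $\code{C}$, forcing its level-$(i+1)$ component into $\code{C}_{i+1}$. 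I would present the $\vect{c}=2^{i-1}\vect{u}_i$ selection as the explicit mechanism realizing this isolation, taking care to verify that the only surviving Schur product at that level is the desired one.
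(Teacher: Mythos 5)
Your intended endgame---invoke Theorem~\ref{tapiavega}/Theorem~\ref{thm:iff_condition_concatenated} and then read off the level-$(i+1)$ component via Definition~\ref{def:associated_codes}---is exactly the paper's proof, but the isolation mechanism you build around it contains two genuine errors. First, the vectors $\vect{c}=2^{i-1}\vect{u}_i$ and $\vect{d}=2^{i-1}\vect{v}_i$ need not be codewords of $\code{C}$, and your justification rests on a false description of how multiplication by $2^{i-1}$ acts on binary decompositions: if $\vect{c}'=\sum_{j=1}^{L}2^{j-1}\vect{u}'_j\in\code{C}$ has level-$i$ component $\vect{u}'_i=\vect{u}_i$, then $2^{i-1}\vect{c}'=2^{i-1}\vect{u}'_1+2^{i}\vect{u}'_2+\cdots+2^{L-1}\vect{u}'_{L-i+1}$ in $\Integers_{2^L}^n$, i.e., every component is shifted \emph{up} by $i-1$ levels. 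The level-$i$ component of $2^{i-1}\vect{c}'$ is therefore the level-\emph{one} component $\vect{u}'_1$ of $\vect{c}'$, not $\vect{u}_i$, and what is annihilated are the top $i-1$ components, not the bottom ones. In general no codeword realizes a prescribed $\vect{u}_i\in\code{C}_i$ at level $i$ with all lower levels zero: for the octacode $\code{O}$ (Example~\ref{ex:nordstrom-robinson-condition}) and $i=1$, your choice is $\vect{c}=\vect{u}_1=(0,0,0,1,0,1,1,1)$, which has Lee weight $4<6=d_{\textnormal{Lee}}(\code{O})$ and hence lies outside $\code{O}$. Second, your ``cleaner route'' asserts that by uniqueness of binary decomposition each term $2^{i}(\vect{u}_i\circ\vect{v}_i)$ must \emph{independently} lie in $\code{C}$. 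That inference is invalid: $\code{C}$ is only an additive group, and membership of the sum $2(\vect{c}\odot\vect{d})=\sum_{i=1}^{L-1}2^{i}(\vect{u}_i\circ\vect{v}_i)$ in $\code{C}$ says nothing about membership of its individual summands; uniqueness of the decomposition only identifies what the components of that single codeword are.

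Fortunately, the isolation you are trying to engineer is unnecessary, and removing it collapses your argument to the paper's one-step proof. Fix arbitrary $\vect{u}_i,\vect{v}_i\in\code{C}_i$. By Definition~\ref{def:associated_codes} there exist $\vect{c},\vect{d}\in\code{C}$ whose level-$i$ components are $\vect{u}_i$ and $\vect{v}_i$; their other components are irrelevant. Theorem~\ref{thm:iff_condition_concatenated} places $(\vect{0},\vect{u}_1\circ\vect{v}_1,\dots,\vect{u}_{L-1}\circ\vect{v}_{L-1})$ in the decomposition code $\code{B}$; equivalently, since all blocks are $0$--$1$ vectors, no carries occur and this concatenation \emph{is} the binary decomposition of the codeword $2(\vect{c}\odot\vect{d})\in\code{C}$. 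By the definition of the decomposition and associated codes, the $(i+1)$-th block of any element of $\code{B}$ lies in $\code{C}_{i+1}$, so $\vect{u}_i\circ\vect{v}_i\in\code{C}_{i+1}$, simultaneously for every $i\in[1:L-1]$. The cross terms $\vect{u}_j\circ\vect{v}_j$ for $j\neq i$ never needed to vanish, because the goal is not to show that a multiple of $\vect{u}_i\circ\vect{v}_i$ alone is a codeword of $\code{C}$, but only that $\vect{u}_i\circ\vect{v}_i$ occurs as the level-$(i+1)$ component of \emph{some} codeword.
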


\begin{proof} From Theorem~\ref{thm:iff_condition_concatenated}, given that $\gray(\code{C})$ is linear, then $(\vect{0},\vect{u}_1 \circ \vect{v}_1, \vect{u}_2 \circ \vect{v}_2, \dots, \vect{v}_{L-1} \circ \vect{v}_{L-1}) \in \code{B}$, where $\code{B}$ is the decomposition code obtained from $\code{C}$. In terms of the associated codes, it implies that $\vect{0} \in \code{C}_1$, $\vect{u}_1 \circ \vect{v}_1 \in \code{C}_2$, and in more general terms, $\vect{u}_i \circ \vect{v}_i \in \code{C}_{i+1}$, for all $\vect{c}, \vect{d} \in \code{C}$ as in~\eqref{eq:u-v-expression}, meaning that the associated codes are closed under Schur product, as we wanted to demonstrate. 
\end{proof}

Observe that nonlinear codes are of particular interest in the context of Gray map, as there exist nonlinear codes with remarkable properties that could not be achieved by linear ones of the same length, such as the Nordstrom-Robinson code~\cite{NordstromRobinson_67}.

\begin{example}
\label{ex:nordstrom-robinson-condition}
We apply Corollary~\ref{coro:princ} to show the nonlinearity of the Nordstrom-Robinson code. Indeed, not all vectors of the form $(\vect{0},\vect{u}_1 \circ \vect{v}_1)$ belong to the decomposition code $\code{B}$. Consider the codewords $\vect{c}=(0,0,0,1,2,3,1,1), \vect{d}=(0,1,0,0,1,2,3,1) \in \code{O} \subseteq \Integers_4^8$. When we perform the binary decomposition of these vectors, we get
\begin{IEEEeqnarray*}{c}
\vect{u}=(\underbrace{0,0,0,1,0,1,1,1}_{\vect{u}_1},\underbrace{0,0,0,0,1,1,0,0}_{\vect{u}_2})\mbox{ and }\vect{v}=(\underbrace{0,1,0,0,1,0,1,1}_{\vect{v}_1},\underbrace{0,0,0,0,0,1,1,0}_{\vect{v}_2}) 
\end{IEEEeqnarray*}
which are the respective codewords in the decomposition code $\code{B}$. Nonetheless, 
\begin{IEEEeqnarray*}{c}
(\vect{0},\vect{u}_1 \circ \vect{v}_1)=(0,0,0,0,0,0,0,0,\underbrace{0,0,0,0,0,0,1,1}_{\vect{u}_1 \circ \vect{v}_1}) \notin \code{B},
\end{IEEEeqnarray*}
since its respective codeword $(0,0,0,0,0,0,2,2) \notin \code{O}$. Hence, from Theorem~\ref{thm:iff_condition_concatenated}, the Nordstrom-Robinson code is nonlinear. We have computationally checked that $\code{C}_1 \circ \code{C}_1 \subseteq \code{C}_2$, meaning that the necessary condition of Corollary~\ref{coro:princ} does not hold.
\hfill\exampleend
\end{example} 

\subsection{Distance Properties}
\label{sec:distance-properties}
Much of the practical interest in codes over rings is related to the fact that for a $\Integers_{4}$-linear code, its minimum Lee distance equals the minimum Hamming distance of its Gray map code, but this is not true  in general for larger rings. If some kind of distance preservation could be guaranteed, 
it could be useful for constructing large nonlinear binary codes with good minimum distance properties.
In this section, we investigate the relationship between the Lee distance of a $\Integers_{2^L}$-additive code and the Hamming distance of its respective $\Integers_{2^L}$-linear code.

Given $\vect{c},\vect{d} \in \code{C}$, where $\code{C}$ is a $\Integers_{2^L}$-additive, we have $\Lwt{\vect{c}-\vect{d}} = d_{\textnormal{Lee}}(\vect{c},\vect{d})$. 
Recall that for the Hamming distance between two elements, it holds that $d_{\textnormal{H}}(\gray(\vect{c}),\gray(\vect{d}))=w_{\textnormal{H}}(\gray(\vect{c}-\vect{d}))$, for all $\vect{c}, \vect{d}$ in the $\Integers_{2^L}$-additive code $\code{C}$~\cite[Prop.~1]{carlet}.

\begin{proposition}
\label{prop:lee-hammimng-weight}
Consider $\vect{c}=(c_1, c_2, \cdots, c_n) \in \code{C} \subseteq \Integers_{2^L}^n$, then
\begin{IEEEeqnarray*}{rCl}
& \Lwt{\vect{c}} & = n + (2^{L-1}-1)\delta_{2^{L-1}} - \delta_0 + \displaystyle{\sum_{i=1}^{2^{L-1}-1}}(i-1)(\delta_i + \delta_{2^L -i}), ~ ~ ~ \text{and} \\
& w_{\textnormal{H}}(\gray(\vect{c}))& = 2^{L-2}(n + \delta_{2^{L-1}}- \delta_0),
\end{IEEEeqnarray*}
where $\delta_i=|\{c_\ell=i, 1\leq \ell \leq n\}|$, for $i \in [1: 2^{L}-1]$. 
\end{proposition}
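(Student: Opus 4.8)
The plan is to compute both quantities directly from the definitions by counting the contribution of each coordinate $c_\ell$ according to its value $i \in [0 : 2^L-1]$, and then collecting terms using the counting statistics $\delta_i$. The two formulas are independent of one another, so I would treat them separately.

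First, for the Lee weight. The Lee weight of a single coordinate $c_\ell = i$ is $\min\{i, 2^L - i\}$, which equals $i$ for $0 \le i \le 2^{L-1}$ and equals $2^L - i$ for $2^{L-1} \le i \le 2^L - 1$. Thus $\Lwt{\vect{c}} = \sum_{i=0}^{2^L-1} \delta_i \cdot \min\{i, 2^L-i\}$. The work is to massage this sum into the stated closed form. I would pair the value $i$ with $2^L - i$ for $1 \le i \le 2^{L-1}-1$, noting both contribute weight $i$, which produces the $\sum (i-1)(\delta_i + \delta_{2^L-i})$ term after extracting a leading constant; the values $0$ and $2^{L-1}$ are handled as the special boundary cases giving the $-\delta_0$ and $(2^{L-1}-1)\delta_{2^{L-1}}$ corrections. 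The constant $n$ appears because $\sum_i \delta_i = n$. The only real care needed is bookkeeping: verifying that after writing each weight as $1 + (i-1)$ (for $i \ge 1$) and summing $1$ over all nonzero coordinates, the leftover matches the claimed boundary terms.

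Second, for the Hamming weight of $\gray(\vect{c})$. By Definition~\ref{def:generalized_gray}, $\gray$ acts coordinate-wise and each coordinate maps into $\Integers_2^{2^{L-1}}$, so it suffices to compute $w_{\textnormal{H}}(\gray(i))$ for each value $i \in [0:2^L-1]$ and then sum $\sum_i \delta_i\, w_{\textnormal{H}}(\gray(i))$. From the definition, $\gray(v) = (v_L, \dots, v_L) \oplus (v_1, \dots, v_{L-1})\mat{Y}$, where the columns of $\mat{Y}$ range over all of $\Integers_2^{L-1}$. The key computation is that for $v \notin \{0, 2^{L-1}\}$ the vector $(v_1,\dots,v_{L-1})$ is nonzero, so the linear functional $\vect{x} \mapsto (v_1,\dots,v_{L-1})\vect{x}$ takes value $0$ on exactly half the columns and $1$ on the other half; adding the constant $v_L$ to all coordinates does not change the count of ones, which stays $2^{L-2}$. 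For $v = 0$ we get the all-zero word (weight $0$), and for $v = 2^{L-1}$ we get the all-one word (weight $2^{L-1}$). Summing, every coordinate contributes $2^{L-2}$ except that a value-$0$ coordinate contributes $2^{L-2}$ too much and a value-$2^{L-1}$ coordinate contributes $2^{L-2}$ too little, yielding $2^{L-2}(n + \delta_{2^{L-1}} - \delta_0)$.

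The main obstacle is the Hamming-weight count: one must justify carefully that for nonzero $(v_1,\dots,v_{L-1})$ the map over all columns of $\mat{Y}$ is balanced (exactly $2^{L-2}$ ones), which rests on the fact that a nonzero linear functional on $\Integers_2^{L-1}$ has a kernel of index $2$, and that the additive shift by the constant vector $(v_L,\dots,v_L)$ either preserves or complements each bit uniformly and hence preserves the weight when $v_L=0$ or sends weight $w$ to $2^{L-1}-w$ when $v_L=1$ — in either case giving $2^{L-2}$. I would verify this balance argument explicitly and confirm it is consistent with Examples~\ref{ex:gray-z4} and~\ref{ex:gray_z8}. The Lee-weight part is essentially a routine but fiddly re-indexing of a finite sum.
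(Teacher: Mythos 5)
Your proposal is correct, and its overall structure matches the paper's proof: both parts are handled by per-coordinate counting and then collecting terms with the statistics $\delta_i$, and your Lee-weight manipulation (writing $\min\{i,2^L-i\} = 1 + (i-1)$ for the paired values $i$ and $2^L-i$, with $0$ and $2^{L-1}$ as boundary corrections absorbed into $n = \sum_i \delta_i$) is essentially the computation in the paper verbatim. The one genuine difference is in the Hamming-weight part: the paper does not prove the per-coordinate fact
\begin{IEEEeqnarray*}{c}
w_{\textnormal{H}}(\gray(v))=\begin{cases}
0, & \mbox{if} \  v=0, \\
2^{L-1}, & \mbox{if} \ v=2^{L-1}, \\
2^{L-2}, & \mbox{otherwise},
\end{cases}
\end{IEEEeqnarray*}
but cites it from Heng and Yue~\cite[Prop.~3.2]{HengYue_15} (specialized to $p=2$), whereas you derive it directly from Definition~\ref{def:generalized_gray}: for $v\notin\{0,2^{L-1}\}$ the vector $(v_1,\dots,v_{L-1})$ is nonzero, so the corresponding linear functional is balanced over the $2^{L-1}$ columns of $\mat{Y}$ (its kernel has index $2$), and the constant shift by $v_L$ either preserves the weight or complements it to $2^{L-1}-2^{L-2}=2^{L-2}$. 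Your route buys self-containedness — the key lemma is proved rather than imported, which also makes transparent exactly where the special role of $0$ and $2^{L-1}$ comes from — at the cost of a few extra lines; the paper's citation is shorter and connects the statement to the more general $\Integers_{p^L}$ setting. Your final bookkeeping (baseline $2^{L-2}n$, subtract $2^{L-2}\delta_0$, add $2^{L-2}\delta_{2^{L-1}}$) agrees with the paper's algebraic rearrangement.
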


\begin{proof}
Let $\vect{c}=(c_1, c_2, \cdots, c_n) \in \code{C} \subset \Integers_{2^L}^n$, then $\min\{c_i, 2^L -c_i\}=k$, if $c_i=k$ or $c_i=2^L -k$. 
Hence,
\begin{IEEEeqnarray*}{rCl}
\Lwt{\vect{c}} & = & \displaystyle{\sum_{i=1}^{n}}\min\{c_i, 2^L -c_i\}= 2^{L-1}\delta_{2^{L-1}} + \displaystyle{\sum_{i=1}^{2^{L-1}-1}}i(\delta_i + \delta_{2^L -i})\\ 
&= & 2^{L-1}\delta_{2^{L-1}} + \displaystyle{\sum_{i=1}^{2^{L-1}-1}} (\delta_i + \delta_{2^L -i}) + \displaystyle{\sum_{i=1}^{2^{L-1}-1}}(i-1)(\delta_i + \delta_{2^L -i}) \\ 
& = & (2^{L-1}-1)\delta_{2^{L-1}}- \delta_0+\underbrace{\delta_0 + \delta_{2^{L-1}} + \displaystyle{\sum_{i=1}^{2^{L-1}-1}} (\delta_i + \delta_{2^L -i})}_{n} + \\
& & \displaystyle{\sum_{i=1}^{2^{L-1}-1}}(i-1)(\delta_i + \delta_{2^L -i}) \\ 
&= & n + (2^{L-1}-1)\delta_{2^{L-1}} - \delta_0 + \displaystyle{\sum_{i=1}^{2^{L-1}-1}}(i-1)(\delta_i + \delta_{2^L -i}).
\end{IEEEeqnarray*}

For the $i$-th coordinate of $\vect{c}$, for $i\in [1:n]$, it holds for $p=2$ \cite[Prop.~3.2]{HengYue_15} that

\begin{IEEEeqnarray*}{c}
w_{\textnormal{H}}(\gray(c_i))=\begin{cases}
0, & \mbox{if} \  c_i=0. \\ 
2^{L-1}, & \mbox{if} \ c_i=2^{L-1}. \\ 
2^{L-2}, & \mbox{otherwise}. 
\end{cases}
\end{IEEEeqnarray*}

Therefore, 
\begin{IEEEeqnarray*}{rCl}
w_{\textnormal{H}}(\gray(\vect{c}))& =\displaystyle{\sum_{i=1}^{n}}w_{\textnormal{H}}(\gray(c_i))=2^{L-1}\delta_{2^{L-1}} + 2^{L-2}\displaystyle{\sum_{i=1}^{2^{L-1}-1}}\delta_i= 2^{L-2}\left(2\delta_{2^{L-1}} + \displaystyle{\sum_{i=1}^{2^{L-1}-1}}\delta_i\right) \\ &=2^{L-2}\left(\delta_{2^{L-1}} + \underbrace{\delta_{2^{L-1}} + \displaystyle{\sum_{i=1}^{2^{L-1}-1}}\delta_i + \delta_0}_{n} - \delta_0 \right) =2^{L-2}(n + \delta_{2^{L-1}}- \delta_0). \hfill \qedhere
\end{IEEEeqnarray*}
\end{proof}

\begin{corollary} \label{cor:distpreserved}
Consider $\vect{c}=(c_1, c_2, \cdots, c_n) \in \code{C} \subseteq \Integers_{2^L}^n$, then $w_{\textnormal{H}}(\gray(\vect{c}))=\Lwt{\vect{c}}$ if, and only if, 
\begin{IEEEeqnarray*}{c}
\displaystyle{\sum_{i=1}^{2^{L-1}-1}}(i-1)(\delta_i + \delta_{2^L -i})=(2^{L-2}-1)(n - \delta_{2^{L-1}}- \delta_0).
\end{IEEEeqnarray*}
\end{corollary}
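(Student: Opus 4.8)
The plan is to prove Corollary~\ref{cor:distpreserved} directly from the two closed-form expressions established in Proposition~\ref{prop:lee-hammimng-weight}, so this is fundamentally an algebraic identity verification rather than a structural argument. First I would write down both weight formulas side by side: from the proposition we have
\begin{IEEEeqnarray*}{rCl}
\Lwt{\vect{c}} &=& n + (2^{L-1}-1)\delta_{2^{L-1}} - \delta_0 + \sum_{i=1}^{2^{L-1}-1}(i-1)(\delta_i + \delta_{2^L-i}), \\
w_{\textnormal{H}}(\gray(\vect{c})) &=& 2^{L-2}(n + \delta_{2^{L-1}} - \delta_0).
\end{IEEEeqnarray*}
The equality $w_{\textnormal{H}}(\gray(\vect{c}))=\Lwt{\vect{c}}$ then becomes a single linear equation in the quantities $n$, $\delta_0$, $\delta_{2^{L-1}}$, and the sum $S \eqdef \sum_{i=1}^{2^{L-1}-1}(i-1)(\delta_i + \delta_{2^L-i})$.

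The core step is simply to set the two right-hand sides equal and isolate $S$. Substituting and collecting terms, the condition $2^{L-2}(n + \delta_{2^{L-1}} - \delta_0) = n + (2^{L-1}-1)\delta_{2^{L-1}} - \delta_0 + S$ rearranges to
\begin{IEEEeqnarray*}{rCl}
S &=& 2^{L-2}(n+\delta_{2^{L-1}}-\delta_0) - n - (2^{L-1}-1)\delta_{2^{L-1}} + \delta_0 \\
&=& (2^{L-2}-1)n + (2^{L-2}-2^{L-1}+1)\delta_{2^{L-1}} - (2^{L-2}-1)\delta_0.
\end{IEEEeqnarray*}
The only mild care needed is in simplifying the coefficient of $\delta_{2^{L-1}}$: one checks that $2^{L-2}-2^{L-1}+1 = 1 - 2^{L-2} = -(2^{L-2}-1)$. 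With this simplification the right-hand side factors cleanly as $(2^{L-2}-1)(n - \delta_{2^{L-1}} - \delta_0)$, which is exactly the claimed condition.

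Since every step is a reversible algebraic manipulation (setting two equal quantities equal and rearranging), the ``if and only if'' comes for free: the displayed identity holds precisely when the two weights coincide. I expect no genuine obstacle here; the only place to be attentive is the arithmetic on the $\delta_{2^{L-1}}$ coefficient, where sign-tracking and the factorization $2^{L-1} = 2\cdot 2^{L-2}$ must be handled correctly so that the common factor $(2^{L-2}-1)$ emerges. I would present the proof as: invoke Proposition~\ref{prop:lee-hammimng-weight}, equate the two expressions, cancel the common terms $n$, $\delta_0$, and $\delta_{2^{L-1}}$ contributions, and read off the stated equivalence.
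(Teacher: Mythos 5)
Your proposal is correct and follows essentially the same route as the paper: both invoke Proposition~\ref{prop:lee-hammimng-weight}, subtract (or equate) the two weight formulas, and observe that the difference is exactly $(2^{L-2}-1)(n-\delta_{2^{L-1}}-\delta_0)-\sum_{i=1}^{2^{L-1}-1}(i-1)(\delta_i+\delta_{2^L-i})$, so the weights coincide precisely when the stated identity holds. Your sign-tracking on the $\delta_{2^{L-1}}$ coefficient ($2^{L-2}-2^{L-1}+1=-(2^{L-2}-1)$) is accurate, and the reversibility of the manipulation indeed gives the equivalence.
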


\begin{proof}
    It is straightforward, since
    \begin{IEEEeqnarray*}{c}
    w_{\textnormal{H}}(\gray(\vect{c}))-\Lwt{\vect{c}}=(2^{L-2}-1)(n - \delta_{2^{L-1}}- \delta_0) -\displaystyle{\sum_{i=1}^{2^{L-1}-1}}(i-1)(\delta_i + \delta_{2^L -i}).  \hspace{0.6cm} \qedhere
    \end{IEEEeqnarray*}
\end{proof}

\begin{example}
If $L=2$, then $w_{\textnormal{H}}(\gray(\vect{c}))=\Lwt{\vect{c}}$, for all $\vect{c} \in \code{C} \subseteq \Integers_{2^L}^n$. For $L=3$, then $w_{\textnormal{H}}(\gray(\vect{c}))=\Lwt{\vect{c}}$, for all $\vect{c} \in \code{C} \subseteq \Integers_{2^L}^n$ such that $\delta_1 + \delta_7=\delta_3 + \delta_5$.
\exampleend
\end{example}

Let $\code{C} \subseteq \Integers_{2^L}^n$ a $\Integers_{2^L}$-additive code. From Proposition~\ref{prop:lee-hammimng-weight} we can conclude that the codeword $\vect{c} \in \code{C}$ of minimum Lee weight is such that $(2^{L-1}-1)\delta_{2^{L-1}} + \displaystyle{\sum_{i=1}^{2^{L-1}-1}}(i-1)(\delta_i + \delta_{2^L -i}) - \delta_0$ assumes the smallest possible value. Moreover, for the Hamming weight to be minimum, $\vect{c} \in \code{C}$ should be such that $\delta_{2^{L-1}}- \delta_0$ assumes the smallest possible value. Observe that, in general, a minimum weight vector $\vect{c} \in \code{C}$ does not necessarily induce a minimum weight vector in $\gray(\code{C})$, as we present next.

\begin{example}
\label{ex:distance-comparison}
Given the code $\code{C}=\langle (2,0,1,4),(0,2,3,6),(1,0,2,4)\rangle \subset \Integers_{8}^4$, we have that $\vect{c}=(1,0,7,0) \in \code{C}$ is a minimum Lee weight codeword such that $\Lwt{\vect{c}}=2=d_{\textnormal{Lee}}(\code{C})$, but 
\begin{IEEEeqnarray*}{c}
w_{\textnormal{H}}(\gray(\vect{c}))=w_{\textnormal{H}}((0,1,0,1,0,0,0,0,1,0,0,1,0,0,0,0))=4 \neq 2=d_{\textnormal{H}}(\gray(\code{C})).
\end{IEEEeqnarray*}
However, for $\code{C}=\langle (1,7,4),(0,1,1),(0,0,5)\rangle \subset \Integers_{8}^3$, we have that $\vect{c}=(0,0,2) \in \code{C}$ is such that $\gray(\vect{c})$ achieves the minimum Hamming distance, i.e., $w_{\textnormal{H}}(\gray(\vect{c}))=w_{\textnormal{H}}((0,0,0,0,0,0,0,0,0,0,1,1))=d_{\textnormal{H}}(\gray(\code{C}))=2$, but $\Lwt{\vect{c}}=2 \neq 1=d_{\textnormal{Lee}}(\code{C})$.
\exampleend
\end{example}

We end this section with a classification of some $\Integers_{2^L}$-additive codes in Table~\ref{tab:table-gray-linearity}. 
The linearity~(L)~/~nonlinearity~(NL) of the corresponding binary codes $\gray(\code{C})$ was verified with the software \emph{Wolfram Mathematica}~\cite{Mathematica}, based on the conditions stated in Theorem~\ref{thm:iff_condition_concatenated} and Corollary~\ref{coro:princ}
Also, we tabulate the minimum Lee distance of the $\Integers_{2^L}$-additive code $\code{C}$ and the minimum Hamming distance of its respective $\Integers_{2^L}$-linear code. For $L=2$, the codes listed were selected through complete search, and are optimum with respect to the Lee distance and the number of codewords of minimum Lee weight (the kissing number), for each given length $n$. The searches were carried out, respectively, without restrictions on the generator matrices (NL), and restricted to generator matrices that meet the condition of Lemma~\ref{lem:simple-condition} (L).

\begin{sidewaystable}[htb!]
\caption{Examples of $\Integers_{2^L}$-linear codes from $\Integers_{2^L}$-additive codes}
\label{tab:table-gray-linearity}
\centering
\begin{tabular}{|c|c|c|c|c|c|c|c|}%
\hline
$L$ & $n$ & $\code{C} \subseteq \Integers_{2^L}^n$ & $d_{\textnormal{Lee}}(\code{C})$  & $\code{C}_i \circ \code{C}_i \subseteq \code{C}_{i+1} $ & $\mathcal{R} \subset \code{B}?$ &  $\gray(\code{C})$ &      $d_{\textnormal{H}}(\gray(\code{C}))$ \\
 &  &  &   & $\forall i$ (Cor.~\ref{coro:princ})   & (Th.~\ref{thm:iff_condition_concatenated})&  &  
\\ \hline
$2$ & $6$ & $\langle ( 1, 0, 0, 0, 1, 2),
( 0, 1, 0, 1, 0, 2),
( 0, 0, 1, 2, 2, 1)\rangle$   & $4$ & \cmark  &  \cmark  & L & $4$ \\
$2$ & $6$ & $\langle ( 1, 0, 0, 1, 1, 1),
( 0, 1, 0, 1, 2, 3),
( 0, 0, 1, 1, 3, 2)\rangle$   & $4$ & \xmark & \xmark  & NL & $4$ \\
$2$ & $8$ & $\code{O}$~(Ex.~\ref{ex:nordstrom-robinson}) & $6$ & \cmark & \xmark  & NL & $6$  \\
$2$ & $8$ & \vspace{-0.1cm} $\langle (1, 0, 0, 0, 0, 1, 2, 2), (0, 1, 0, 0, 1, 2, 0, 2),$ 
& $4$ & \cmark & \cmark & L & $4$ \\ 
& & $(0, 0, 1, 0, 2, 0, 1, 2), (0, 0, 0, 1, 2, 2, 2, 1)\rangle$ & & & & & \\
$2$ & $10$ & \vspace{-0.1cm} $\langle (1, 0, 0, 0, 0, 1, 1, 3, 3, 3), (0, 1, 0, 0, 0, 1, 2, 0, 1, 1),$
& $6$ &\xmark & \xmark  & NL & $6$  \\
&  & \vspace{-0.1cm} $(0, 0, 1, 0, 1, 0, 1, 1, 1, 1), (0, 0, 0, 1, 1, 0, 1, 2, 0, 3),$ &  & &   &  &   \\
&  &  $(0, 0, 0, 0, 2, 0, 0, 0, 2, 2), (0, 0, 0, 0, 0, 2, 0, 2, 2, 0)\rangle$ &  & &   &  &   \\
$2$ & $10$ & \vspace{-0.1cm} $\langle (1, 0, 0, 0, 0, 0, 0, 1, 2, 2), (  0, 1, 0, 0, 0, 0, 1, 0, 1, 2),$ 
& $4$ & \cmark & \cmark & L & $4$ \\ 
&  & \vspace{-0.1cm} $(0, 0, 1, 0, 0, 0, 2, 2, 0, 1), ( 0, 0, 0, 1, 1, 1, 0, 0, 0, 2),$ &  & &   &  &   \\
&  & \vspace{-0.1cm}  $(0, 0, 0, 0, 2, 0, 0, 2, 0, 2),(0, 0, 0, 0, 0, 2, 2, 0, 2, 2)\rangle$ &  & &   &  &   \\
$3$ & $3$ & $\langle (3,5,7) \rangle$  & $5$ & \cmark & \xmark & NL  & $6$  \\
$3$ & $3$ & $\langle (2,1,5), (0,3,6), (0,0,7) \rangle$  & $1$ & \cmark & \cmark  & L & $2$  \\
$3$ & $3$ & $\langle (0,2,4),(2,4,6) \rangle$ & $4$ &\cmark & \cmark  & L & $4$ \\
$3$ & $4$ & $\langle (2,0,1,4), (0,2,3,6), (1,0,2,4) \rangle$~(Ex.~\ref{ex:distance-comparison})  & $2$  & \cmark & \cmark  & L  & $2$ \\
$3$ & $8$ & $\langle (0,1,2,3,4,5,6,7)\rangle$~(Ex.~\ref{ex:tela-preta-gustavo}) & $16$ & \xmark & \xmark  & NL  & $16$ \\
$3$ & $8$ & $\langle (4,0,2,3,0,4,6,2)\rangle$ & $4$ & \cmark & \cmark & L  & $4$  \\
$3$ & $8$ & $\langle (0,1,2,3,4,5,6,7),(1,1,1,1,1,1,1,1) \rangle$ & $8$ & \cmark & \cmark  & L & $16$  \\
$4$ & $5$ & $\langle (8,2,0,10,8),(0,6,8,14,15) \rangle$ & $2$ & \cmark &  \cmark & L & $4$ \\
$4$ & $5$ & $\langle (4,5,8,9,12), (0,6,8,14,15) \rangle$ & $2$ & \cmark &  \xmark & NL & $4$ \\
$4$ & $5$ & $\langle (0,6,8,14,15) \rangle$ & $8$ & \cmark &  \cmark & L & $8$ \\
\hline
\end{tabular}%
\end{sidewaystable}

\section{Nested Construction of Linear $\Integers_{2^L}$-Linear Codes}
\label{sec:linear-gray-codes}

Consider a $\Integers_{2^L}$-additive code $\code{C}$, such that its decomposition code $\code{B}$ is simply the cartesian product of the associated codes, i.e., $\code{B}=\code{C}_1\times \code{C}_2 \times \dots \times \code{C}_L$. Hence, $\code{C} =\code{C}_1+2\code{C}_2+\dots+2^{L-1}\code{C}_L$ and we can directly obtain linear $\Integers_{2^L}$-linear codes by the so called \emph{nested construction} we introduce next.

    One can apply~\cite[Th.~1]{KositwattanarerkOggier_13} to conclude that $\code{C} =\code{C}_1+2\code{C}_2+\dots+2^{L-1}\code{C}_L$ is $\Integers_{2^L}$-additive if and only if the associated codes $\code{C}_1, \code{C}_2, \dots, \code{C}_L$ are closed under Schur product, i.e., $\code{C}_i \circ \code{C}_i \subseteq \code{C}_{i+1}$ for$i \in [1: L-1]$. This result suffices to claim the linearity of the $\Integers_{2^L}$-linear code.

\begin{theorem}[Nested construction]
\label{thm:concatenated-cartesian-product}
Consider $\code{C}_1, \code{C}_2, \dots, \code{C}_L \subseteq \Integers_2^n$ to be $L$ binary linear codes. The \emph{nested}  code $\code{C}=\code{C}_1+2\code{C}_2+\dots+2^{L-1}\code{C}_L$ is $\Integers_{2^L}$-additive if and only if $\gray(\code{C})$ is linear.
\end{theorem}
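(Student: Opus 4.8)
The plan is to route the entire argument through the equivalence between additivity and Schur closure supplied by \cite[Th.~1]{KositwattanarerkOggier_13}, and then transfer that closure condition to the linearity of $\gray(\code{C})$ using Theorem~\ref{thm:iff_condition_concatenated} and Corollary~\ref{coro:princ}. The starting observation I would record is that, for the nested code $\code{C}=\code{C}_1+2\code{C}_2+\dots+2^{L-1}\code{C}_L$, each codeword $\sum_{i=1}^{L}2^{i-1}\vect{u}_i$ has binary decomposition exactly $(\vect{u}_1,\dots,\vect{u}_L)$: no carries occur, since the $\vect{u}_i$ are binary and weighted by distinct powers of two. Consequently the associated codes of $\code{C}$ are precisely the prescribed $\code{C}_1,\dots,\code{C}_L$, and the decomposition code is the full cartesian product $\code{B}=\code{C}_1\times\code{C}_2\times\dots\times\code{C}_L$. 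In particular $\code{B}$ is a binary linear code and membership of a vector in $\code{B}$ decouples block by block.

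First I would invoke \cite[Th.~1]{KositwattanarerkOggier_13} to reduce additivity of $\code{C}$ to the Schur-closure condition $\code{C}_i\circ\code{C}_i\subseteq\code{C}_{i+1}$ for all $i\in[1:L-1]$; call this condition $(\star)$. It then suffices to prove that $(\star)$ holds if and only if $\gray(\code{C})$ is linear. For the implication $(\star)\Rightarrow\gray(\code{C})$ linear, note that $(\star)$ already makes $\code{C}$ additive, so Theorem~\ref{thm:iff_condition_concatenated} applies, and I would verify its hypothesis $\set{R}\subseteq\code{B}$ directly. For any $\vect{c}=\sum_{i=1}^L 2^{i-1}\vect{u}_i$ and $\vect{d}=\sum_{i=1}^L 2^{i-1}\vect{v}_i$ in $\code{C}$, the generic element $(\vect{0},\vect{u}_1\circ\vect{v}_1,\dots,\vect{u}_{L-1}\circ\vect{v}_{L-1})$ of $\set{R}$ has first block $\vect{0}\in\code{C}_1$ and $i$-th block $\vect{u}_{i-1}\circ\vect{v}_{i-1}\in\code{C}_{i-1}\circ\code{C}_{i-1}\subseteq\code{C}_i$ for $2\leq i\leq L$ by $(\star)$. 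Since $\code{B}$ is the cartesian product, the whole vector lies in $\code{B}$, so $\set{R}\subseteq\code{B}$ and $\gray(\code{C})$ is linear.

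For the reverse implication I would appeal to Corollary~\ref{coro:princ}: if $\gray(\code{C})$ is linear, then the associated codes of $\code{C}$, which here are precisely $\code{C}_1,\dots,\code{C}_L$, are closed under the Schur product, and this is exactly $(\star)$. Combining $(\star)$ with \cite[Th.~1]{KositwattanarerkOggier_13} returns the additivity of $\code{C}$, closing the equivalence.

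The step I expect to carry the real weight is the passage through $(\star)$ in the converse, namely extracting the per-level inclusions $\code{C}_i\circ\code{C}_i\subseteq\code{C}_{i+1}$ from the global linearity of $\gray(\code{C})$; this is where the cartesian-product shape of $\code{B}$ is essential, since it is what allows the single membership statement $\set{R}\subseteq\code{B}$ to split into one Schur inclusion per layer and be read off level by level. The forward direction is then mostly bookkeeping once $(\star)$ and the block-decoupling of $\code{B}$ are in hand; the only care needed is to confirm that the associated codes of the constructed $\code{C}$ genuinely coincide with the prescribed $\code{C}_1,\dots,\code{C}_L$ and that no carries spoil the identification $\code{B}=\code{C}_1\times\dots\times\code{C}_L$.
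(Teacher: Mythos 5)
Your proposal is correct and takes essentially the same route as the paper's proof: it invokes \cite[Th.~1]{KositwattanarerkOggier_13} to equate additivity of $\code{C}$ with the Schur-closure condition, verifies $\set{R}\subseteq\code{B}=\code{C}_1\times\dots\times\code{C}_L$ and applies Theorem~\ref{thm:iff_condition_concatenated} for the forward direction, and uses Corollary~\ref{coro:princ} for the converse. The only difference is presentational: you make explicit the no-carry identification of the decomposition code with the cartesian product and the block-by-block membership check, which the paper states more tersely in the lead-in to the theorem and inside its proof.
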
 

\begin{proof} The necessary condition follows directly from Corollary~\ref{coro:princ}. On the other hand, given that $\code{C}=\code{C}_1+2\code{C}_2+\dots+2^{L-1}\code{C}_L$ is $\Integers_{2^L}$-additive, we have that $\code{C}_i \circ \code{C}_i \subseteq \code{C}_{i+1}$ for $i \in [1: L-1]$~\cite[Th.~1]{KositwattanarerkOggier_13}. We want to show that $\gray(\code{C})$ is linear. By hypothesis and given that $\vect{0} \in \code{C}_1$ by linearity, we have a vector of the form
\begin{IEEEeqnarray*}{c}
(\vect{0},\vect{u}_1 \circ \vect{v}_1, \vect{u}_2 \circ \vect{v}_2, \dots, \vect{v}_{L-1} \circ \vect{v}_{L-1}) \in \code{C}_1 \times \dots \times \code{C}_L = \code{B},
\end{IEEEeqnarray*}
for all $\vect{c} = \sum_{i=1}^L 2^{i-1}\vect{u}_i, \vect{d} = \sum_{i=1}^L 2^{i-1}\vect{v}_i \in \code{C}$. Therefore, from Theorem~\ref{thm:iff_condition_concatenated}, we can conclude that $\gray(\code{C})$ is linear.
\end{proof}

Theorem~\ref{thm:concatenated-cartesian-product} provides an efficient way of constructing linear $\Integers_{2^L}$-linear codes. We start with $L$ binary linear codes $\code{C}_1, \code{C}_2, \dots, \code{C}_L$ that are closed under Schur product, construct the $\Integers_{2^L}$-additive code $\code{C}=\code{C}_1+2\code{C}_2+\dots+2^{L-1}\code{C}_L$, and we have a guarantee that $\gray(\code{C})$ is a linear $\Integers_{2^L}$-linear code.

As a consequence of the condition $\code{C}_i \circ \code{C}_i \subseteq \code{C}_{i+1}$, for $i \in [1: L-1]$, we have that $\code{C}_1 \subseteq \code{C}_2 \subseteq \dots \subseteq \code{C}_{L}$ which is why we chose to call it a \textit{nested construction}. We remark that by using such construction, we obtain a linear $\Integers_{2^L}$-linear code, and the intrinsic operation cost is to find binary linear codes that are closed under Schur product. Secs.~\ref{sec:linear-gray-codes-rm} and~\ref{sec:linear-gray-cyclic} consist of an application of the nested construction to two families of codes that have such property: Reed-Muller and cyclic codes.

\subsection{Reed-Muller Codes}
\label{sec:linear-gray-codes-rm}

We recall the class of binary Reed-Muller codes, according to \cite[Ch.~13]{MacWilliamsSloane77_1}.

\begin{definition}[Reed-Muller codes]
  \label{def:reed-muller-codes}
  For a given $m\in\Naturals$, the $r$-th order binary Reed-Muller code $\code{R}(r,m)$ is a 
  linear $[n=2^m,k=\sum_{i=0}^r\binom{m}{i}]$ code 
  for $0 \leq r \leq m$, constructed as the vector space spanned by the set of all $m$-variable Boolean monomials of degree at most $r$.
\end{definition}

An alternative construction of Reed-Muller codes is the $(\vect{u},\vect{u}\oplusq{2}\vect{v})$-construction.

\begin{definition}[Reed-Muller codes via $(\vect{u}, \vect{u} \oplusq{2}\vect{v})$]
\begin{IEEEeqnarray*}{c}
\code{R}(r, m) = \{(\vect{u}, \vect{u} \oplusq{2} \vect{v}): \vect{u} \in \code{R}(r,m-1),~ \vect{v} \in \code{R}(r-1,m-1)\}.
\end{IEEEeqnarray*}
\end{definition}

    The following result states a property that will be useful in the construction of binary linear codes from $\Integers_{2^L}$-additive codes.

\begin{theorem}\cite[Th.~1.10.1]{HuffmanPless03_1}
\label{thm:nested-reed-muller}
If $0 \leq i \leq j \leq m$, then $\code{R}(i,m) \subset \code{R}(j,m)$.    
\end{theorem}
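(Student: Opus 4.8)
The plan is to derive the statement directly from the monomial description of Reed-Muller codes in Definition~\ref{def:reed-muller-codes}, which renders the inclusion essentially immediate. Recall that $\code{R}(r,m)$ is the span of all $m$-variable Boolean monomials of degree at most $r$, evaluated on $\Integers_2^m$. Writing $\mathcal{M}_r$ for this generating set of monomials, the single observation driving the whole argument is that $\mathcal{M}_i \subseteq \mathcal{M}_j$ whenever $i \leq j$, since every monomial of degree at most $i$ is a fortiori a monomial of degree at most $j$.

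First I would fix the convention that $\code{R}(r,m) = \langle \mathcal{M}_r \rangle$ as a subspace of $\Integers_2^{2^m}$, identifying each Boolean function with its evaluation vector. Then, from $\mathcal{M}_i \subseteq \mathcal{M}_j$ the inclusion $\langle \mathcal{M}_i \rangle \subseteq \langle \mathcal{M}_j \rangle$ follows by monotonicity of the span: any $\Integers_2$-linear combination of vectors in $\mathcal{M}_i$ is also a linear combination of vectors drawn from $\mathcal{M}_j$. This gives $\code{R}(i,m) \subseteq \code{R}(j,m)$, which is the assertion (here the symbol $\subset$ is read as $\subseteq$, as equality of orders $i=j$ is permitted by the hypothesis $0 \leq i \leq j \leq m$).

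Alternatively, one could argue by induction on $m$ using the $(\vect{u}, \vect{u} \oplusq{2} \vect{v})$-construction: a codeword $(\vect{u}, \vect{u} \oplusq{2} \vect{v}) \in \code{R}(i,m)$ has $\vect{u} \in \code{R}(i, m-1)$ and $\vect{v} \in \code{R}(i-1, m-1)$, and the inductive hypotheses $\code{R}(i,m-1) \subseteq \code{R}(j,m-1)$ together with $\code{R}(i-1,m-1) \subseteq \code{R}(j-1,m-1)$ place the same codeword in $\code{R}(j,m)$. I do not expect a genuine obstacle on either route; the only mild care is needed in the inductive approach, where the boundary orders ($r=0$, $r=m$, or the degenerate $r=-1$) must be absorbed by the standard conventions that $\code{R}(0,m)$ is the repetition code, $\code{R}(m,m) = \Integers_2^{2^m}$, and $\code{R}(-1,m) = \{\vect{0}\}$. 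Since the monomial argument sidesteps these cases entirely, that is the route I would take.
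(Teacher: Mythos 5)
Your proposal is correct. Note that the paper itself offers no proof of this statement: it is quoted verbatim from the cited reference \cite[Th.~1.10.1]{HuffmanPless03_1}, so there is no in-paper argument to compare against. Your primary argument — that the generating set of monomials of degree at most $i$ is contained in the generating set of monomials of degree at most $j$, hence the spans nest — is immediate from the paper's own Definition~\ref{def:reed-muller-codes} and is the cleanest route given how the paper defines $\code{R}(r,m)$. Your alternative inductive argument via the $(\vect{u},\vect{u}\oplusq{2}\vect{v})$-construction is essentially the proof given in the cited reference, where Reed--Muller codes are defined recursively rather than by monomials; as you note, it requires the boundary conventions $\code{R}(-1,m)=\{\vect{0}\}$ and $\code{R}(r,m)=\Integers_2^{2^m}$ for $r\geq m$ to make the induction well-posed, which the monomial argument avoids. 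Your reading of $\subset$ as non-strict inclusion is also the right one, since the hypothesis permits $i=j$ (and indeed, for $i<j\leq m$ the inclusion is automatically strict, as the dimension $\sum_{\ell=0}^{r}\binom{m}{\ell}$ is strictly increasing in $r$ on $[0:m]$).
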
 

    We are interested in constructing $\Integers_{2^L}$-additive codes from Reed-Muller codes. Hence, given $L$ Reed-Muller codes of the form $\code{R}(r_i,m)$, with $1 \leq i \leq L$, then the code
    \begin{IEEEeqnarray}{C}
    \code{C}_{\textnormal{RM}} = \code{R}(r_1, m) + 2\code{R}(r_2, m) + \dots + 2^{L-1}\code{R}(r_L,m) \subseteq \Integers_{2^L}^{2^m},
    \end{IEEEeqnarray}
    with $r_1 \leq r_2 \leq \dots \leq r_L$ is $\Integers_{2^L}$-additive if and only if the codes $\code{R}(r_1, m), \code{R}(r_2, m), \dots, \code{R}(r_L,m)$ are closed under Schur product~\cite[Th.~1]{KositwattanarerkOggier_13}.

General conditions on the Reed-Muller codes being closed under Schur product are presented next.

\begin{proposition}
\label{prop:chain-shur-sequence}
The Reed-Muller codes $\code{R}(0,m), \code{R}(1,m), \code{R}(2,m)$ are closed under Schur product for all $m$.  
\end{proposition}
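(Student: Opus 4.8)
The plan is to identify each Reed--Muller codeword with the Boolean function it evaluates, so that the Schur product of codewords becomes the pointwise product of those functions, and then to control the total degree of the product. Recall that $\code{R}(r,m)$ is spanned by the evaluation vectors (over all points of $\Integers_2^m$) of the monomials $\prod_{i \in S} x_i$ with $|S| \leq r$ in the variables $x_1, \dots, x_m$. By the definition of codes closed under Schur product, the proposition amounts to the two containments $\code{R}(0,m) \circ \code{R}(0,m) \subseteq \code{R}(1,m)$ and $\code{R}(1,m) \circ \code{R}(1,m) \subseteq \code{R}(2,m)$, which I would establish separately.

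The first containment is immediate: $\code{R}(0,m) = \{\vect{0}, \vect{1}\}$ is the repetition code, and since $\vect{1} \circ \vect{1} = \vect{1}$ while any product involving $\vect{0}$ is $\vect{0}$, we get $\code{R}(0,m) \circ \code{R}(0,m) = \code{R}(0,m)$, which sits inside $\code{R}(1,m)$ by Theorem~\ref{thm:nested-reed-muller}. For the second, I would take $\vect{f}, \vect{g} \in \code{R}(1,m)$ and write them as the evaluation vectors of affine functions $f = a_0 \oplusq{2} \sum_{i=1}^m a_i x_i$ and $g = b_0 \oplusq{2} \sum_{i=1}^m b_i x_i$ with $a_i, b_i \in \Integers_2$. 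Because the Schur product realizes the pointwise (hence polynomial) product, $\vect{f} \circ \vect{g}$ is the evaluation vector of $fg$, a polynomial of total degree at most $2$; after reducing modulo the relations $x_i^2 = x_i$ (valid since each coordinate of an evaluation point lies in $\{0,1\}$), $fg$ remains a $\Integers_2$-linear combination of monomials of degree at most $2$, so its evaluation vector belongs to $\code{R}(2,m)$.

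The only real subtlety — and thus the step I would treat most carefully — is verifying that the reduction $x_i^2 = x_i$ never \emph{raises} the degree, so that a product of affine functions is genuinely of degree at most $2$ rather than being pushed higher; this is where the Boolean (as opposed to generic polynomial) setting is essential. Combined with the bilinearity of the Schur product over $\Integers_2$, this observation also lets me reduce the general statement to the action on monomial generators, since $\bigl(\prod_{i\in S} x_i\bigr)\bigl(\prod_{j\in T} x_j\bigr) = \prod_{k \in S\cup T} x_k$ has degree $|S \cup T| \leq |S| + |T|$. Finally I would dispose of the degenerate range $m < 2$ by reading $\code{R}(2,m)$ under the standard convention $\code{R}(r,m) = \Integers_2^{2^m}$ for $r \geq m$, so that the inclusions hold trivially there as well.
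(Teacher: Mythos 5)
Your proof is correct and takes essentially the same route as the paper's: the first containment is handled identically ($\code{R}(0,m)\circ\code{R}(0,m)=\code{R}(0,m)\subset\code{R}(1,m)$ via the nesting theorem), and the second by viewing codewords of $\code{R}(1,m)$ as evaluations of degree-at-most-$1$ polynomials whose pointwise product, under the relation $x_i^2=x_i$, has degree at most $2$ and hence lies in $\code{R}(2,m)$. Your added care about the reduction never raising degree and about the degenerate range $m<2$ is a harmless refinement of the same argument.
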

\begin{proof} We aim to demonstrate that $\code{R}(i,m) \circ \code{R}(i,m)  \subset  \code{R}(i+1,m)$, for $i \in \{0,1\}$ and all $m$. For $i=0$, observe that $\code{R}(0, m) \circ \code{R}(0, m)= \code{R}(0, m) \subset \code{R}(1, m)$ by Theorem~\ref{thm:nested-reed-muller}. For $i=1$, by definition, $\code{R}(1,m)$ is the set of vectors $f$, where $f(v_1^\ast, v_2^\ast, \dots, v_m^\ast)$ is a polynomial of degree at most $1$. Given that the Schur product of Boolean functions translates into the product of polynomials subject to the relation ${v_i^\ast}^2=v_i^\ast$, for all $i$, the product of polynomials in $\code{R}(1,m)$ will have degree at most $2$ and thus be an element of $\code{R}(2,m)$, as we wanted to demonstrate.
\end{proof}

An alternative proof of Proposition~\ref{prop:chain-shur-sequence} comes from the fact that for Reed-Muller codes, it holds for $m \geq 2$ that $\code{R}(r,m)\circ\code{R}(r',m)=\code{R}(r+r',m)$~\cite[p.~4]{Randriambololona15}.

From the construction of Barnes-Wall lattices from Reed-Muller codes~\cite{HuNebe_20} we recall the following additional result on Reed-Muller codes.
\begin{proposition}\cite[p.~12]{HuNebe_20}
\label{prop:rm-chain-schur}
Consider the sets of Reed-Muller codes
\begin{IEEEeqnarray*}{rCl}    
\code{R}_{\textnormal{e}}:& & ~\code{R}(0,2\ell), \code{R}(2,2\ell), \dots, \code{R}(2\ell-2,2\ell) \\
\code{R}_{\textnormal{o}}:& & ~\code{R}(1,2\ell), \code{R}(3,2\ell), \dots, \code{R}(2\ell-1,2\ell).
\end{IEEEeqnarray*}
For $\ell<4$, $\code{R}_{\textnormal{e}}$ and $\code{R}_{\textnormal{o}}$ are closed under Schur product.
\end{proposition}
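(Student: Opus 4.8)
The plan is to translate the closure condition into a single inequality between the orders of consecutive codes and then to exhaust the finitely many values of $\ell$ permitted by the hypothesis. In either chain the consecutive members are $\code{R}(r,2\ell)$ and $\code{R}(r+2,2\ell)$, so by the definition of closure I must verify $\code{R}(r,2\ell)\circ\code{R}(r,2\ell)\subseteq\code{R}(r+2,2\ell)$ for every order $r$ labelling a non-terminal code, namely $r\in\{0,2,\dots,2\ell-4\}$ for $\code{R}_{\textnormal{e}}$ and $r\in\{1,3,\dots,2\ell-3\}$ for $\code{R}_{\textnormal{o}}$.

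First I would invoke the two ingredients already available in the excerpt. The Schur-square bound $\code{R}(r,2\ell)\circ\code{R}(r,2\ell)\subseteq\code{R}(2r,2\ell)$ follows because a coordinate-wise product of evaluation vectors is the evaluation of the product polynomial, whose degree is at most $2r$ after the multilinear reduction ${v_i^\ast}^2=v_i^\ast$; this is exactly the polynomial argument used for Proposition~\ref{prop:chain-shur-sequence} (cf.\ the identity $\code{R}(r,m)\circ\code{R}(r',m)=\code{R}(r+r',m)$ of \cite{Randriambololona15}). The nesting $\code{R}(a,2\ell)\subseteq\code{R}(b,2\ell)$ for $a\le b$ is Theorem~\ref{thm:nested-reed-muller}. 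Chaining the two, the closure of a pair holds once $\code{R}(2r,2\ell)\subseteq\code{R}(r+2,2\ell)$, and by the nesting this reduces to the numerical condition $2r\le r+2$, i.e.\ $r\le 2$. Since $\ell<4$ leaves only $\ell\in\{1,2,3\}$, the verification is a finite inspection: it suffices to confirm this condition for every non-terminal order in each of the resulting chains, the decisive instances being the pairs of largest order.

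The step I expect to be the main obstacle is precisely this top pair in each chain, where $r$ attains its maximum and the Schur-square bound is sharp. There the order $2r$ cannot be lowered: disjoint-support monomials such as the evaluation of $x_1\cdots x_r$ times that of $x_{r+1}\cdots x_{2r}$ already lie in $\code{R}(r,2\ell)\circ\code{R}(r,2\ell)$ and realise degree $2r$, so the containment into $\code{R}(r+2,2\ell)$ genuinely requires $2r\le r+2$ rather than merely following from it. This is exactly where the hypothesis $\ell<4$ is forced, and it is also the place where I would be most careful about the cap at degree $2\ell$ coming from multilinearity; I would therefore treat the highest-order consecutive pair of $\code{R}_{\textnormal{e}}$ and of $\code{R}_{\textnormal{o}}$ by hand and let the nesting dispatch all the lower-order pairs automatically.
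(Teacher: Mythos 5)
Your reduction and tools are sound, and they are in fact the only tools available here: the paper itself gives \emph{no} proof of this proposition (it is quoted from \cite[p.~12]{HuNebe_20}), so your attempt is the only argument on the table. You correctly reduce chain closure to the pairwise conditions $\code{R}(r,2\ell)\circ\code{R}(r,2\ell)\subseteq\code{R}(r+2,2\ell)$ for each non-terminal order $r$, bound the Schur square by $\code{R}(\min(2r,2\ell),2\ell)$, and compare orders via nesting. The genuine gap is that the ``finite inspection'' you defer is never actually carried out --- and it does not go through. For $\code{R}_{\textnormal{e}}$ the non-terminal orders are $0,2,\dots,2\ell-4$, so your criterion $r\le 2$ holds for every such $r$ precisely when $\ell\le 3$, and that chain is fine. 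For $\code{R}_{\textnormal{o}}$, however, the non-terminal orders are $1,3,\dots,2\ell-3$, and at $\ell=3$ the top pair has $r=3>2$: the required containment is $\code{R}(3,6)\circ\code{R}(3,6)\subseteq\code{R}(5,6)$, and this is exactly the case that your own sharpness argument kills. The evaluation vectors of $x_1x_2x_3$ and of $x_4x_5x_6$ both lie in $\code{R}(3,6)$, and their Schur product is the evaluation of $x_1x_2x_3x_4x_5x_6$, a weight-one vector; since $\code{R}(5,6)$ is the $[64,63,2]$ even-weight code, that vector is not in $\code{R}(5,6)$. The multilinear cap you worried about does not rescue this: $\min(2r,2\ell)=6>5=r+2$.

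So your method, run to completion, establishes the even-chain half of the statement but \emph{refutes} the odd-chain half at $\ell=3$: under the paper's own definition of a chain closed under Schur product ($\code{C}_i\circ\code{C}_i\subseteq\code{C}_{i+1}$), $\code{R}_{\textnormal{o}}$ is closed only for $\ell\le 2$, because its top pair forces $4\ell-6\le 2\ell-1$, i.e.\ $\ell\le 2$, whereas the top pair of $\code{R}_{\textnormal{e}}$ forces $4\ell-8\le 2\ell$, i.e.\ $\ell\le 3$. Your sentence that the top pair ``is exactly where the hypothesis $\ell<4$ is forced'' is therefore true for $\code{R}_{\textnormal{e}}$ only; for $\code{R}_{\textnormal{o}}$ the forced hypothesis is $\ell<3$. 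This is not a fixable defect of your write-up: no completion of the argument can prove the proposition as transcribed, since the $\ell=3$ odd chain is a counterexample to it. The honest conclusion is to restrict the odd-chain claim to $\ell\le 2$ (or to read the citation of \cite{HuNebe_20} as covering only the even-type chain); note that this also affects the paper's subsequent corollary, which for $\ell=3$ forms $\code{C}_{\textnormal{RM}}'=\code{R}(1,6)+2\code{R}(3,6)+4\code{R}(5,6)$ --- by \cite[Th.~1]{KositwattanarerkOggier_13} this sum is not even $\Integers_{8}$-additive, precisely because the chain is not closed under Schur product.
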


    Consequently, Propositions~\ref{prop:chain-shur-sequence} and~\ref{prop:rm-chain-schur} provide the necessary tools to construct linear $\Integers_{2^L}$-linear codes via the nested construction, or Theorem~\ref{thm:concatenated-cartesian-product}.

\begin{corollary}
Consider the $\Integers_8$-additive code $\code{C}_{\textnormal{RM}} = \code{R}(0,m) + 2\code{R}(1,m) + \code{R}(2,m)$ of length $2^m$.  Then, the $\Integers_{8}$-linear code $\gray(\code{C}_{\textnormal{RM}})$ is linear.
\end{corollary}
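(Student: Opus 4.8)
The plan is to read this corollary as the direct specialization of the nested construction (Theorem~\ref{thm:concatenated-cartesian-product}) to the three-layer chain $\code{C}_1 = \code{R}(0,m)$, $\code{C}_2 = \code{R}(1,m)$, $\code{C}_3 = \code{R}(2,m)$, so that $\code{C}_{\textnormal{RM}} = \code{R}(0,m) + 2\code{R}(1,m) + 4\code{R}(2,m)$, with the third layer carrying its intended weight $4$. Since $L=3$ here, the general nested sum $\code{C}_1 + 2\code{C}_2 + 4\code{C}_3$ is exactly the object covered by Theorem~\ref{thm:concatenated-cartesian-product}, and nothing about the specific Reed-Muller nature of the layers is needed beyond one structural fact.

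First I would check that the chain is closed under Schur product, that is, $\code{R}(0,m) \circ \code{R}(0,m) \subseteq \code{R}(1,m)$ and $\code{R}(1,m) \circ \code{R}(1,m) \subseteq \code{R}(2,m)$. These two inclusions are precisely the content of Proposition~\ref{prop:chain-shur-sequence}, which asserts that $\code{R}(0,m)$, $\code{R}(1,m)$, $\code{R}(2,m)$ are closed under Schur product for every $m$. No separate verification is required, as that proposition already dispatches both levels of the chain simultaneously.

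With the Schur-closure in hand, I would invoke \cite[Th.~1]{KositwattanarerkOggier_13} to deduce that $\code{C}_{\textnormal{RM}}$ is $\Integers_8$-additive, and then apply Theorem~\ref{thm:concatenated-cartesian-product} directly: since $\code{C}_{\textnormal{RM}}$ is the nested sum of binary linear codes and is $\Integers_8$-additive, its generalized Gray image $\gray(\code{C}_{\textnormal{RM}})$ is linear. Equivalently, one can route the conclusion through the decomposition code, observing that the Schur-closure forces $\code{B} = \code{R}(0,m) \times \code{R}(1,m) \times \code{R}(2,m)$ to contain every vector $(\vect{0}, \vect{u}_1 \circ \vect{v}_1, \vect{u}_2 \circ \vect{v}_2)$, which is the hypothesis of Theorem~\ref{thm:iff_condition_concatenated}.

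There is no genuine obstacle in this argument; the only point requiring attention is confirming that the abstract hypotheses of Theorem~\ref{thm:concatenated-cartesian-product} are met by the concrete Reed-Muller chain, which Proposition~\ref{prop:chain-shur-sequence} resolves at once. The corollary therefore serves as a clean demonstration that the nested construction yields explicit linear $\Integers_8$-linear codes drawn from the Reed-Muller hierarchy.
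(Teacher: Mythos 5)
Your proposal is correct and is essentially the paper's own argument: the corollary is presented there with no separate proof precisely because it follows immediately from Proposition~\ref{prop:chain-shur-sequence} (Schur-closure of $\code{R}(0,m)$, $\code{R}(1,m)$, $\code{R}(2,m)$), \cite[Th.~1]{KositwattanarerkOggier_13} for $\Integers_8$-additivity of the nested sum, and Theorem~\ref{thm:concatenated-cartesian-product} for linearity of the Gray image, which is exactly the chain you follow. Your reading of the third layer as $4\code{R}(2,m)$, correcting the statement's evident typographical omission of the factor $4$, is also the intended one.
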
 


\begin{corollary} Consider the $\Integers_{2^\ell}$-additive codes $\code{C}_{\textnormal{RM}} = \code{R}(0,2\ell) + 2\code{R}(2,2\ell) + \dots + 2^{\ell-1}\code{R}(2\ell-2,2\ell)$ and $\code{C}_{\textnormal{RM}}' = \code{R}(1,2\ell) + 2\code{R}(3,2\ell) + \dots + 2^{\ell-1}\code{R}(2\ell-1,2\ell)$ of length $2^{2\ell}$ with $\ell <4$, obtained from $\code{R}_\textnormal{e}$ and $\code{R}_{\textnormal{o}}$ as in Proposition~\ref{prop:rm-chain-schur}.  Then, the respective $\Integers_{2^\ell}$-linear codes $\gray(\code{C}_{\textnormal{RM}})$ and $\gray(\code{C}_{\textnormal{RM}}')$ are linear.
\end{corollary}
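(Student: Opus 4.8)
The plan is to treat both $\code{C}_{\textnormal{RM}}$ and $\code{C}_{\textnormal{RM}}'$ as direct instances of the nested construction (Theorem~\ref{thm:concatenated-cartesian-product}), so that the only work is to verify the Schur-product hypothesis, which Proposition~\ref{prop:rm-chain-schur} already supplies. First I would identify the associated codes: writing $\code{C}_{\textnormal{RM}} = \sum_{i=1}^{\ell} 2^{i-1}\code{R}(2(i-1),2\ell)$, its associated codes are precisely $\code{C}_i = \code{R}(2(i-1),2\ell)$ for $i \in [1:\ell]$, i.e. exactly the $\ell$ codes listed in $\code{R}_{\textnormal{e}}$; symmetrically, the associated codes of $\code{C}_{\textnormal{RM}}' = \sum_{i=1}^{\ell} 2^{i-1}\code{R}(2i-1,2\ell)$ are the $\ell$ codes of $\code{R}_{\textnormal{o}}$.

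The key step is then purely invocational. Since $\ell < 4$, Proposition~\ref{prop:rm-chain-schur} guarantees that $\code{R}_{\textnormal{e}}$ and $\code{R}_{\textnormal{o}}$ are closed under Schur product, which is exactly the condition $\code{C}_i \circ \code{C}_i \subseteq \code{C}_{i+1}$ for all $i \in [1:\ell-1]$ in each family. By~\cite[Th.~1]{KositwattanarerkOggier_13} this closedness is equivalent to $\code{C}_{\textnormal{RM}}$ (respectively $\code{C}_{\textnormal{RM}}'$) being $\Integers_{2^\ell}$-additive, and then Theorem~\ref{thm:concatenated-cartesian-product} converts the $\Integers_{2^\ell}$-additivity of the nested code into linearity of its Gray image. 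Hence both $\gray(\code{C}_{\textnormal{RM}})$ and $\gray(\code{C}_{\textnormal{RM}}')$ are linear.

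I do not expect a substantive obstacle; the entire content is packaged in the earlier results, and what remains is bookkeeping. The points to check carefully are that the degree sequences $0,2,\dots,2\ell-2$ and $1,3,\dots,2\ell-1$ each have exactly $\ell$ entries (matching the $\ell$ layers of a $\Integers_{2^\ell}$-construction), and that the nesting $\code{C}_1 \subseteq \dots \subseteq \code{C}_\ell$ demanded by the nested construction is consistent with Theorem~\ref{thm:nested-reed-muller}. The hypothesis $\ell < 4$ is used only through Proposition~\ref{prop:rm-chain-schur} and is inherited verbatim; outside that range the Schur-product closedness, and hence the argument, may fail.
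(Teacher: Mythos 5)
Your proof is correct and follows exactly the route the paper intends: the corollary is stated as an immediate consequence of Proposition~\ref{prop:rm-chain-schur} (Schur-product closedness of $\code{R}_{\textnormal{e}}$ and $\code{R}_{\textnormal{o}}$ for $\ell<4$), the equivalence of that closedness with $\Integers_{2^\ell}$-additivity of the nested code via \cite[Th.~1]{KositwattanarerkOggier_13}, and Theorem~\ref{thm:concatenated-cartesian-product} to pass to linearity of the Gray image. Your additional bookkeeping (each family has exactly $\ell$ layers, and the associated codes of the nested code are precisely the $\code{R}(\cdot,2\ell)$ in each chain) is consistent with the paper and fills in what the paper leaves implicit.
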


\subsection{Cyclic Codes}
\label{sec:linear-gray-cyclic}

A complete classification of $\Integers_{4}$-linear codes of odd length from cyclic $\Integers_4$-additive codes was done in~\cite{Wolfmann01}. We expand on that and apply the nested construction to obtain cyclic $\Integers_{2^L}$-additive codes $\code{C}_{\text{cyc}}$ such that $\gray(\code{C}_{\text{cyc}})$ is a linear $\Integers_{2^L}$-linear code. We also explore characteristics of the squares of cyclic codes in the context of the nested constructions.

\begin{definition}[Cyclic code] A $\Integers_{2^L}$-additive code $\code{C}$ is 
cyclic if for any codeword $(c_1,\dots, c_{n-1},$ $c_n) \in \code{C}$, the cyclic shift $(c_n,c_1,\dots, c_{n-1}) \in \code{C}$.
\end{definition}

To simplify notation, a cyclic $\Integers_{2^L}$-additive code will be 
denoted as $\Integers_{2^L}$-cyclic code from now on, i.e., the linearity will be implicit.

Observe that each codeword can be associated with a polynomial as
\begin{IEEEeqnarray*}{rCl}
\code{C} \subseteq \Integers_{2^L}^n & \rightarrow & R_{2^L}[x]=\frac{\Integers_{2^L}[x]}{\langle x^n-1 \rangle} \\
(c_0, \dots, c_{n-2},c_{n-1}) & \mapsto & c(x)=c_0 + \dots + c_{n-2}x^{n-2} + c_{n-1}x^{n-1}.
\end{IEEEeqnarray*}
Multiplications of the form $x^k c(x) \bmod (x^n-1)$, where $k\in [1: n-1]$ are cyclic shifts of $c(x)$. Therefore, a cyclic code can also be identified as an ideal of $ R_{2^L}[x]$. 

\begin{proposition} \label{prop:cyclic-code-associated} If $\code{C}$ is a $\Integers_{2^L}$-cyclic code then its associated codes $\code{C}_1,\code{C}_2, \dots, \code{C}_L \subseteq {\Integers_2^n}$ are cyclic.
\end{proposition}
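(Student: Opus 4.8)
The plan is to show that the binary decomposition of Definition~\ref{def:bindecomp} commutes with the cyclic shift, so that shifting a codeword of $\code{C}$ shifts each of its associated binary vectors simultaneously. Since a $\Integers_{2^L}$-cyclic code is by definition $\Integers_{2^L}$-additive, its associated codes $\code{C}_1, \dots, \code{C}_L$ are already binary linear by \cite[Lemma~1]{maiara}; hence it suffices to prove that each $\code{C}_i$ is closed under the cyclic shift.

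Let $\sigma$ denote the cyclic shift, i.e., $\sigma(c_1, \ldots, c_{n-1}, c_n) = (c_n, c_1, \ldots, c_{n-1})$, and write $\sigma$ also for its restriction to $\Integers_2^n$. First I would fix a codeword $\vect{c} = (c_1, \ldots, c_n) \in \code{C}$ with binary decomposition $\vect{c} = \sum_{i=1}^L 2^{i-1}\vect{u}_i$, so that $\vect{u}_i = (u_{i,1}, \ldots, u_{i,n})$ and $u_{i,j}$ is the $i$-th bit of $c_j$. The central observation is that, because the decomposition in Definition~\ref{def:bindecomp} is performed coordinate-wise, it is equivariant under any permutation of the coordinates. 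Concretely, writing $\sigma(\vect{c}) = \sum_{i=1}^L 2^{i-1}\vect{w}_i$, the $j$-th coordinate of $\sigma(\vect{c})$ is $c_{j-1}$ (indices taken modulo $n$), whose $i$-th bit is $u_{i,j-1}$; hence $\vect{w}_i = \sigma(\vect{u}_i)$ for every $i \in [1:L]$.

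With this identity in hand, the conclusion is immediate. Take any $\vect{a} \in \code{C}_i$; by Definition~\ref{def:associated_codes} there is a codeword $\vect{c} \in \code{C}$ whose $i$-th associated binary vector is $\vect{a}$. Since $\code{C}$ is $\Integers_{2^L}$-cyclic, $\sigma(\vect{c}) \in \code{C}$, and by the equivariance just established the $i$-th associated binary vector of $\sigma(\vect{c})$ is $\sigma(\vect{a})$. Therefore $\sigma(\vect{a}) \in \code{C}_i$, which shows that $\code{C}_i$ is closed under the cyclic shift and hence cyclic.

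I do not expect a genuine obstacle here: the only step requiring care is verifying that the binary decomposition commutes with $\sigma$, and this follows purely formally from the fact that the map $v \mapsto (v_1, \ldots, v_L)$ acts independently on each coordinate, so applying it before or after a coordinate permutation yields the same binary vectors up to that permutation. The argument uses no property of the cyclic shift beyond its being a coordinate permutation, so the same reasoning in fact shows that any coordinate-permutation symmetry of $\code{C}$ is inherited by all of its associated codes.
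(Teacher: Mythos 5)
Your proof is correct and follows essentially the same route as the paper's: the paper also observes that since $\code{C}$ is closed under cyclic shifts and the binary decomposition is coordinate-wise, the shifts of each $\vect{u}_i$ land in $\code{C}_i$. You simply make explicit (via the equivariance identity $\vect{w}_i = \sigma(\vect{u}_i)$ and the appeal to \cite[Lemma~1]{maiara} for linearity) two steps that the paper's terser proof leaves implicit.
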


\begin{proof} Given any $\vect{c} \in \code{C} \subset \Integers_{2^L}^n$, we recall from~\eqref{latticetoring} that $\vect{c} = \sum_{i=1}^L 2^{i-1}\vect{u}_i$, where $\vect{u}_i = \left(u_{i,1},u_{i,2},...,u_{i,n} \right) \in \code{C}_i \subseteq \Integers_2^n$, where $\code{C}_i$ is the associated code for $i \in [1: L]$.
Since $\code{C}$ is cyclic, all the cyclic shifts of $\vect{c}$ belong to $\code{C}$, thus by definition, all cyclic shifts of $\vect{u}_i$ belong to the associated code $\code{C}_i$, for $i \in [1: L]$. 
\end{proof}

The reverse implication of Proposition~\ref{prop:cyclic-code-associated} holds with an analogous proof when the $\Integers_{2^L}$-additive code is generated as $\code{C}_{\text{cyc}}\eqdef\code{C}_1 + 2\code{C}_2 + \dots +  2^{L-1}\code{C}_L$, from $L$ binary codes. 

\begin{proposition}
\label{prop:cyclicity-special-case}
Consider $L$ binary cyclic codes $\code{C}_1, \dots, \code{C}_L \subseteq \Integers_{2}^n$ closed under Schur product. Then, the $\Integers_{2^L}$-additive code $\code{C}_{\text{cyc}}=\code{C}_1 + 2\code{C}_2 + \dots +  2^{L-1}\code{C}_L \subseteq \Integers_{2^L}^n$ is cyclic.
\end{proposition}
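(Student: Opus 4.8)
The plan is to verify the two ingredients implicit in the definition of a $\Integers_{2^L}$-cyclic code, namely that $\code{C}_{\text{cyc}}$ is $\Integers_{2^L}$-additive and that it is invariant under the cyclic shift. The additivity is immediate: since the binary codes $\code{C}_1,\dots,\code{C}_L$ are assumed closed under Schur product, the result~\cite[Th.~1]{KositwattanarerkOggier_13} recalled just before Proposition~\ref{prop:cyclicity-special-case} guarantees that $\code{C}_{\text{cyc}}=\code{C}_1+2\code{C}_2+\dots+2^{L-1}\code{C}_L$ is $\Integers_{2^L}$-additive. Thus it only remains to establish the shift-invariance, and here the argument is essentially that of the forward implication (Proposition~\ref{prop:cyclic-code-associated}) run in reverse.

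First I would fix an arbitrary $\vect{c}\in\code{C}_{\text{cyc}}$ and write it, via~\eqref{latticetoring}, as $\vect{c}=\sum_{i=1}^L 2^{i-1}\vect{u}_i$ with $\vect{u}_i\in\code{C}_i$. Let $\sigma$ denote the cyclic-shift operator $\sigma(c_1,\dots,c_n)=(c_n,c_1,\dots,c_{n-1})$. The key observation is that the binary decomposition of Definition~\ref{def:bindecomp} is performed coordinate-by-coordinate and is a carry-free bijection $\Integers_{2^L}\to\Integers_2^L$, so that permuting coordinates commutes with taking the decomposition. Concretely, the value sitting in a given coordinate of $\vect{c}$ is moved, unchanged, to the next coordinate by $\sigma$, carrying along its entire bit pattern $(u_{1,j},\dots,u_{L,j})$. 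Consequently $\sigma(\vect{c})=\sum_{i=1}^L 2^{i-1}\sigma(\vect{u}_i)$; that is, the shift of $\vect{c}$ is exactly the nested combination of the shifts of its binary layers.

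Once this commutation is in hand the conclusion is immediate: each $\code{C}_i$ is a binary cyclic code by hypothesis, so $\sigma(\vect{u}_i)\in\code{C}_i$ for every $i\in[1:L]$, whence $\sigma(\vect{c})=\sum_{i=1}^L 2^{i-1}\sigma(\vect{u}_i)\in\code{C}_1+2\code{C}_2+\dots+2^{L-1}\code{C}_L=\code{C}_{\text{cyc}}$. As $\vect{c}$ was arbitrary, $\code{C}_{\text{cyc}}$ is closed under the cyclic shift and is therefore a $\Integers_{2^L}$-cyclic code. I do not expect a genuine obstacle, since the statement is merely the converse of Proposition~\ref{prop:cyclic-code-associated}; the one point deserving care is the identity $\sigma(\vect{c})=\sum_{i=1}^L 2^{i-1}\sigma(\vect{u}_i)$, which must be justified by the coordinate-wise, carry-free nature of the binary decomposition rather than simply asserted. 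It is worth emphasizing that the Schur-product hypothesis is used only to secure $\Integers_{2^L}$-additivity, while the shift-invariance itself needs nothing beyond the cyclicity of each individual layer $\code{C}_i$.
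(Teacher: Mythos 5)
Your proof is correct and follows essentially the same route as the paper: the paper does not write out a separate argument but states that Proposition~\ref{prop:cyclicity-special-case} is the reverse implication of Proposition~\ref{prop:cyclic-code-associated} ``with an analogous proof,'' which is exactly what you carry out — decompose $\vect{c}=\sum_{i=1}^L 2^{i-1}\vect{u}_i$, note that the carry-free, coordinate-wise decomposition commutes with the cyclic shift, and use cyclicity of each layer $\code{C}_i$ — while the additivity of $\code{C}_{\text{cyc}}$ is secured, as you say, by the Schur-product hypothesis via \cite[Th.~1]{KositwattanarerkOggier_13}, just as the paper sets up before the proposition.
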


Proposition~\ref{prop:cyclicity-special-case} provides a way of generating $\Integers_{2^L}$-cyclic codes, given that the binary codes considered in the nested construction are closed under the Schur product. This condition can be addressed through the generator polynomials of the respective binary cyclic codes, as we proceed to do next. 

We start by recalling some additional definitions concerning cyclic codes over $\Integers_2$, which actually hold for finite fields in general, but we focus on $\Integers_2$ here. The aim is to present auxiliary results about the squares of cyclic codes from~\cite{Cascudo}.

Given $a \in \mathbb{Z}_n$, one defines the $q$-cyclotomic coset $\mathbb{C}_a := \{a, aq, ..., aq^{m_a -1}\} \subset \mathbb{Z}_n$, where $m_a$ is the smallest positive integer such that $aq^{m_q} \equiv a \mod n.$ Since $q$-cyclotomic cosets are, in particular, equivalence classes, then $\mathbb{Z}_n$ is partitioned into such $q$-cyclotomic cosets and $\mathbb{C}_a \cap \mathbb{C}_b =\emptyset$, if and only if, $\mathbb{C}_a \neq \mathbb{C}_b$. 

Let $\beta$ be a $n$-primitive root of unit in a finite extension field (algebraic closure) of $\Integers_2$. The minimal polynomial of $\beta^a$ is defined as $p^{a}(x) = \prod_{i \in \mathbb{C}_a} (x-\beta^i)$ and $x^n -1$ may be described as product of all minimal polynomials, namely, minimal polynomials are described based on individual $q$-cyclotomic cosets and, in general, generator polynomials of cyclic codes are described based on union of some $q$-cyclotomic cosets.

Let $\code{C}_g =\langle g(x)\rangle \subset R_{2}[x]$ be a binary cyclic code, namely, $g(x)$ is a divisor of $x^n -1 \in \Integers_2 [x].$ The \emph{defining set of $\code{C}_g$} is the set $J:=\{j\in \mathbb{Z}_n : g(\beta^j ) = 0\}$. In particular, we define $I\eqdef\Integers_n \setminus J$ and $-I \eqdef \{-i: i \in I\} \subseteq \Integers_n$.

\begin{definition}[Sum of subsets]
For subsets $A,B \subseteq \Integers_n,$ define $A+B :=\{i + j: i\in A, j \in B\}\subseteq \Integers_n$.
\end{definition}

Observe that the sets $I_1 , I_2 , \ldots, I_L$ are related to the associated codes $\code{C}_1, \code{C}_2, \ldots, \code{C}_L$ of $\code{C} \subseteq \mathbb{Z}_{2^L} ^n$, which play an important role when describing the chain of the squares cyclic codes that satisfy the Schur product condition. Moreover, we denote $I + I = 2I$, $I+ I+I+I = 2I + 2I = 4I$, and so on. 

\begin{definition}[Evaluation sets] For a set $M\subseteq \{1,\dots, n-1\}$, let $\mathcal{P}(M) \eqdef \left\{ \sum_{i \in M} f_iX^i: f_i \in \Integers_{2^r} \right\}$. In addition, define the vector space $\mathcal{B}(M) \eqdef \{\left(f(1),f(\beta),\dots, f(\beta^{n-1})\right): f \in \mathcal{P}(M)\}$.
\end{definition}

\begin{lemma}\cite[Lemma~5]{Cascudo} Let $\code{C}$ be the cyclic code generated by $g(x) = \nicefrac{(x^n - 1)}{\prod_{i\in I} \left(x - \beta^i \right)} \in \Integers_2[x]$. Then, $\code{C}=\mathcal{B}(-I)|_{\Integers_2}$.
\end{lemma}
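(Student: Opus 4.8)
The plan is to invoke the standard discrete Fourier transform (equivalently, Mattson--Solomon) correspondence between the coordinate and spectral descriptions of a cyclic code, being careful about the sign convention that makes the support land on $-I$ rather than $I$. Throughout, $n$ is odd (so that $\beta$ exists and $x^n-1$ is separable over $\Integers_2$), whence $n$ is invertible in $\Integers_2[\beta]$. For a vector $\vect{c}=(c_0,\dots,c_{n-1})$ I will write $\widehat{c}_m \eqdef c(\beta^m)=\sum_{\ell=0}^{n-1}c_\ell\beta^{m\ell}$ for its spectrum, and use the inversion formula $c_\ell=\tfrac1n\sum_{m=0}^{n-1}\widehat{c}_m\beta^{-m\ell}$.

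First I would identify the roots of $g(x)$. Since the roots of $x^n-1$ are exactly $\{\beta^i:i\in\Integers_n\}$ and $g$ is obtained by deleting the factors $(x-\beta^i)$ for $i\in I$, the root set of $g$ is $\{\beta^j:j\in J\}$ with $J=\Integers_n\setminus I$; that is, $J$ is precisely the defining set of $\code{C}$. Hence $\vect{c}\in\code{C}$ if and only if $\vect{c}$ is binary and $\widehat{c}_j=0$ for every $j\in J$, i.e.\ the spectrum of $\vect{c}$ is supported on $I$.

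Next I would compute the spectrum of an arbitrary element of $\mathcal{B}(-I)$. For $f(X)=\sum_{i\in -I}f_iX^i\in\mathcal{P}(-I)$ and $\vect{c}=(f(1),\dots,f(\beta^{n-1}))$, a direct expansion gives $\widehat{c}_m=\sum_{i\in -I}f_i\sum_{\ell=0}^{n-1}\beta^{(i+m)\ell}$, where the inner orthogonality sum equals $n$ when $m\equiv -i\pmod n$ and $0$ otherwise. Thus $\widehat{c}_m=n\,f_{-m}$ for $m\in I$ and $\widehat{c}_m=0$ for $m\in J$, so every vector in $\mathcal{B}(-I)$ has spectrum supported on $I$; conversely, given any spectrum supported on $I$, the inversion formula writes $c_\ell=\tfrac1n\sum_{m\in I}\widehat{c}_m(\beta^\ell)^{-m}=f(\beta^\ell)$ with $f(X)=\tfrac1n\sum_{i\in -I}\widehat{c}_{-i}X^i\in\mathcal{P}(-I)$. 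This would establish that, over $\Integers_2[\beta]$, $\mathcal{B}(-I)$ is exactly the space of vectors whose spectrum lives on $I$.

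The two inclusions then follow directly. For $\code{C}\subseteq\mathcal{B}(-I)|_{\Integers_2}$, a codeword $\vect{c}\in\code{C}$ has spectrum on $I$ by the second step, hence equals some $f(\beta^\ell)$ with $f\in\mathcal{P}(-I)$ by the third step, and is binary, so $\vect{c}\in\mathcal{B}(-I)|_{\Integers_2}$. For the reverse inclusion, a binary $\vect{c}\in\mathcal{B}(-I)|_{\Integers_2}$ satisfies $\widehat{c}_j=c(\beta^j)=0$ for all $j\in J$; since all roots of $g$ are then roots of the binary polynomial $c(x)$ and $g$ factors into the corresponding minimal polynomials over $\Integers_2$, we get $g(x)\mid c(x)$ in $\Integers_2[x]$, i.e.\ $\vect{c}\in\langle g(x)\rangle=\code{C}$. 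I expect the main obstacle to be bookkeeping: pinning down the exact transform convention so that the support lands on $-I$ in $\mathcal{P}$ (equivalently $I$ in the spectrum), and carefully justifying the role of the binary restriction---namely that the inversion map naturally produces coefficients $f_i$ in the extension field $\Integers_2[\beta]$, and it is precisely the passage to binary evaluation vectors that recovers $\code{C}$ itself.
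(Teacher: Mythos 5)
This lemma is imported by the paper directly from \cite[Lemma~5]{Cascudo} with no proof supplied, so there is no internal argument to compare against; judged on its own, your proof is correct. The discrete-Fourier (Mattson--Solomon) correspondence you use---$\code{C}$ consists exactly of the binary vectors whose spectrum is supported on $I$, while $\mathcal{B}(-I)$ is exactly the set of vectors over the extension field $\Integers_2[\beta]$ with spectrum supported on $I$, after which restriction to binary vectors gives both inclusions---is the standard route and is essentially the argument of the cited source, so nothing further is needed.
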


    Therefore, the generator polynomial of the square of a cyclic code is known.

\begin{theorem}\cite[Th.~1]{Cascudo}\label{maincascudosth}
 If $\code{C} = \mathcal{B} (-I)|_{\Integers_2}$, then $\code{C}^2 = \mathcal{B} (-(I + I))|_{\Integers_2}$. In other words, if $\code{C}$ is a cyclic code generated by the polynomial $g(x) = \nicefrac{(x^n - 1)}{\prod_{i\in I} \left(x - \beta^i \right)}$ then $\code{C}^2$ is a cyclic code with the generator polynomial $g'(x) = \nicefrac{(x^n - 1)}{\prod_{\ell \in I+ I} \left(x - \beta^{\ell} \right)}$.
\end{theorem}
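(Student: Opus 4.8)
The plan is to route everything through the spectral (evaluation) description supplied by Lemma~5 and reduce the statement to the elementary fact that the Schur product of two evaluation vectors is the evaluation of the product of the underlying polynomials. Let $\mathbb{F}$ be the finite extension of $\Integers_2$ containing $\beta$ over which $\mathcal{P}$ and $\mathcal{B}$ are defined, and write $e_a \eqdef (1,\beta^{a},\beta^{2a},\dots,\beta^{(n-1)a})$ for $a\in\Integers_n$. Every element of $\mathcal{B}(-I)$ is $(f(1),\dots,f(\beta^{n-1}))$ with $f\in\mathcal{P}(-I)$, hence an $\mathbb{F}$-combination of the $e_a$ with $a\in -I$. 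Since $e_a\circ e_b=e_{a+b}$ (indices mod $n$, as $\beta^n=1$), the Schur product of the vectors coming from $f,g\in\mathcal{P}(-I)$ is the evaluation vector of $fg$, whose support lies in $(-I)+(-I)=-(I+I)$. Therefore every $\vect{c}\circ\tilde{\vect{c}}$ with $\vect{c},\tilde{\vect{c}}\in\code{C}$ lies in $\mathcal{B}(-(I+I))$, and because the coordinatewise product of binary vectors is binary, taking the $\Integers_2$-span yields the easy inclusion $\code{C}^2\subseteq\mathcal{B}(-(I+I))|_{\Integers_2}$.

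Before the reverse inclusion I would record that $I+I$ is again a union of $2$-cyclotomic cosets: $I$ is closed under multiplication by $2$, so if $\ell=i+j$ with $i,j\in I$ then $2\ell=2i+2j\in I+I$. Consequently $\prod_{\ell\in I+I}(x-\beta^{\ell})\in\Integers_2[x]$ divides $x^n-1$, the polynomial $g'(x)=\nicefrac{(x^n-1)}{\prod_{\ell\in I+I}(x-\beta^{\ell})}$ is a genuine binary generator polynomial, and by Lemma~5 applied to $I+I$ the code $\mathcal{B}(-(I+I))|_{\Integers_2}$ is precisely the binary cyclic code $\langle g'(x)\rangle$, of $\Integers_2$-dimension $|I+I|$.

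For equality I would argue by dimension after extending scalars. Over $\mathbb{F}$ one has $\code{C}\otimes_{\Integers_2}\mathbb{F}=\mathcal{B}(-I)$ (both are $\mathbb{F}$-spaces of dimension $|I|$ with $\code{C}\subseteq\mathcal{B}(-I)$), spanned by the monomial evaluations $\{e_a:a\in -I\}$. Since $-(I+I)=(-I)+(-I)$, every $e_c$ with $c\in -(I+I)$ is realized as a single Schur product $e_a\circ e_b$, so $(\code{C}\otimes\mathbb{F})^2=\mathcal{B}(-(I+I))$, of $\mathbb{F}$-dimension $|I+I|$. As the Schur product is bilinear, squaring commutes with base change, i.e.\ $\code{C}^2\otimes_{\Integers_2}\mathbb{F}=(\code{C}\otimes_{\Integers_2}\mathbb{F})^2$, whence $\dim_{\Integers_2}\code{C}^2=\dim_{\mathbb{F}}(\code{C}\otimes\mathbb{F})^2=|I+I|$. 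An inclusion of binary codes of equal dimension is an equality, so $\code{C}^2=\mathcal{B}(-(I+I))|_{\Integers_2}$; in particular $\code{C}^2$ is cyclic with nonzeros $\{\beta^{\ell}:\ell\in I+I\}$ and generator polynomial $g'(x)$, as claimed.

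The routine ingredients—bilinearity of $\circ$ and the identity $e_a\circ e_b=e_{a+b}$—are immediate. The one genuine obstacle is the descent from $\mathbb{F}$ back to $\Integers_2$: one must ensure that the binary Schur products already span the \emph{full} subfield subcode rather than a proper subcode. The base-change identity $\code{C}^2\otimes\mathbb{F}=(\code{C}\otimes\mathbb{F})^2$ is exactly what makes this safe, and it is precisely here that the closure of $I+I$ under doubling is needed, so that the target is truly a binary cyclic code of the expected dimension $|I+I|$.
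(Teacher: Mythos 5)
Your proof is correct. A caveat about the comparison, however: the paper does not prove Theorem~\ref{maincascudosth} at all --- it is imported verbatim, with the citation \cite{Cascudo}, as an auxiliary tool for the cyclic nested construction --- so the only meaningful comparison is with the cited source, whose argument yours essentially reproduces. The easy inclusion $\code{C}^2\subseteq\mathcal{B}(-(I+I))|_{\Integers_2}$ via the monomial-evaluation identity $e_a\circ e_b=e_{a+b}$, followed by equality through a dimension count after extension of scalars, is the standard route, and the three hinges of the descent are all stated and justified correctly in your write-up: that $I+I$ is again closed under doubling, so that Lemma~5 of \cite{Cascudo} applies to it and yields $\dim_{\Integers_2}\mathcal{B}(-(I+I))|_{\Integers_2}=|I+I|$; that $\code{C}\otimes_{\Integers_2}\mathbb{F}=\mathcal{B}(-I)$, by the inclusion plus equality of dimensions (rank over $\Integers_2$ is preserved under extension of scalars, and the vectors $e_a$ are independent by a Vandermonde argument); and that Schur squaring commutes with base change, $\code{C}^2\otimes_{\Integers_2}\mathbb{F}=(\code{C}\otimes_{\Integers_2}\mathbb{F})^2$, by bilinearity. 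You also correctly isolated the only genuinely delicate point --- that the binary Schur products must span the \emph{full} subfield subcode rather than a proper subcode --- and the base-change identity is precisely the right tool for it.
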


Therefore, we are now ready to construct a cyclic $\Integers_{2^L}$-additive code $\code{C} \subseteq \Integers_{2^L}^n$ such that its respective $\Integers_{2^L}$-linear code is linear. We proceed as follows:
\begin{enumerate}
    \item[1.] Start with a binary cyclic code $\code{C}_1 \subseteq \Integers_2^n$ and calculate $\code{C}_1^2$. 
    \item[2.] For each level $k = 2,\dots, L$, let $\code{C}_k = \code{C}_{k-1}^2$. Notice that each $\code{C}_k$ is also cyclic due to Theorem~\ref{maincascudosth}.
    \item[3.] Define the $\Integers_{2^L}$-additive code as $\code{C}_{\text{cyc}} \eqdef \code{C}_1+2\code{C}_2+\dots+2^{L-1}\code{C}_L$, which is cyclic according to Proposition~\ref{prop:cyclicity-special-case}. 
\end{enumerate}

Theorem~\ref{thm:concatenated-cartesian-product} then guarantees that $\gray(\code{C}_{\text{cyc}})$ is a linear $\Integers_{2^L}$-linear code.

\begin{example}
\label{ex:cyclic-nested-construction}
Let $\code{C}_1$ be a cyclic $[7,3,4]$-code in $\frac{\Integers_2 [x]}{\left\langle x^7 -1 \right\rangle}$, where $\code{C}_1 = \langle x^4 +x^2 +x +1\rangle = \linebreak \langle (x+1)(x^3 + x^2 +1)\rangle$. In addition, let $\beta$ be a $7$-primitive root of unity, so $J_1 = \{0,3,5,6\}$ is the defining set of $\code{C}_1$ and $I_1 =\Integers_{7} \setminus J_1 =\{1,2,4\}$. Then,
$x^4 +x^2 +x +1 = \tfrac{x^n -1}{ \prod_{\ell \in I_1} (x -\beta^{\ell})}$.    

Observe that $I_1+I_1=I_2=\{1,2,3,4,5,6\}$ and $J_2=\{0\}$. From Theorem~\ref{maincascudosth}, the cyclic code $\code{C}_2=\code{C}_1^2$ is a $[7,6,2]$-code generated by
$x+1 =\tfrac{x^7 -1 }{\prod_{\ell \in I_1 + I_1} (x - \beta^{\ell})} = \tfrac{x^7 -1}{\prod_{\ell \in I_2} (x - \beta^{\ell})}$. 
For the next step, $I_2 + I_2 = \mathbb{Z}_7$ and the generator polynomial of $\code{C}_3=\code{C}_2^2=\Integers_2^7$ is
$1=\tfrac{x^7 -1 }{\prod_{\ell \in I_2 + I_2} (x - \beta^{\ell})}= \tfrac{x^7 -1 }{\prod_{\ell \in \mathbb{Z}_7} (x - \beta^{\ell})}$.
    
Hence, according to Theorem~\ref{maincascudosth}, $\code{C}_1 \subset \code{C}^2 _1  = \code{C}_2 \subset \code{C}^2 _2 = \code{C}_3 =\Integers_2 ^7$ and from Theorem~\ref{thm:concatenated-cartesian-product}, we know 
that $\gray(\code{C}_{\text{cyc}})$ is linear, for $\code{C}_{\text{cyc}}=\code{C}_1+2\code{C}_2+4\code{C}_3 \subseteq \Integers_8^7$ cyclic. 
\hfill\exampleend
\end{example}

Next, we show that if $\code{C}$ is a $\Integers_{2^L}$-cyclic code and $\gray(\code{C})$ is linear, then $\gray(\code{C})$ is quasi-cyclic, which in particular applies to $\code{C}_{\text{cyc}}$. 
A linear code $\code{A}$ is said to be \textit{quasi-cyclic} of index $s$ if the shift of a codeword by $s$ positions is a codeword of $\code{A}$. Therefore, cyclic codes are quasi-cyclic for $s=1$.

\begin{proposition}\label{imageqccode} Let $\code{C} \subseteq \Integers_{2^L}^n$ be a cyclic code. Then $T^{2^{L-1}} (\gray(\code{C})) = \gray(\code{C})$, where $T$ represents the cyclic-shift operator and $T^{2^{L - 1}} = T^{2^{L-2}}\circ T$. In particular, if $\gray(\code{C})$ is linear, then it is a quasi-cyclic code of index $2^{L-1}.$ 
\end{proposition}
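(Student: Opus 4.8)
The plan is to establish a single intertwining identity relating the binary shift operator $T$ on $\Integers_2^{(2^{L-1})n}$ to the cyclic shift on $\Integers_{2^L}^n$, and then to leverage cyclicity of $\code{C}$. Writing $S$ for the one-step cyclic shift on $\Integers_{2^L}^n$, i.e. $S(c_1,\dots,c_n)=(c_n,c_1,\dots,c_{n-1})$, the key observation is that the generalized Gray map acts coordinate-wise (Definition~\ref{def:generalized_gray}), expanding each symbol $c_j \in \Integers_{2^L}$ into a block $\gray(c_j)$ of length exactly $2^{L-1}$. Hence $\gray(\vect{c})$ is the concatenation of $n$ consecutive blocks of length $2^{L-1}$, and shifting the binary word by one full block, that is, applying $T^{2^{L-1}}$, permutes these blocks cyclically without disturbing their internal order.

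First I would prove the identity $T^{2^{L-1}}(\gray(\vect{c})) = \gray(S(\vect{c}))$ for every $\vect{c} \in \code{C}$. This is a direct block-bookkeeping computation: the left-hand side moves the last block $\gray(c_n)$ to the front and keeps $\gray(c_1),\dots,\gray(c_{n-1})$ in order, which is precisely $(\gray(c_n),\gray(c_1),\dots,\gray(c_{n-1})) = \gray(c_n,c_1,\dots,c_{n-1}) = \gray(S(\vect{c}))$. No carry nor any Gray-map nonlinearity enters here, because the block boundaries align exactly with the $2^{L-1}$-step shift.

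Next I would use that $\code{C}$ is cyclic, so $S(\vect{c}) \in \code{C}$, giving $T^{2^{L-1}}(\gray(\vect{c})) = \gray(S(\vect{c})) \in \gray(\code{C})$; ranging over all $\vect{c}$ yields $T^{2^{L-1}}(\gray(\code{C})) \subseteq \gray(\code{C})$. To upgrade this inclusion to equality, I would note that $T^{2^{L-1}}$ is a bijection of the ambient space $\Integers_2^{(2^{L-1})n}$ and $\gray$ is injective on $\code{C}$, so both $\gray(\code{C})$ and its image under $T^{2^{L-1}}$ have the same finite cardinality $|\code{C}|$; an injection of a finite set into itself is onto, forcing $T^{2^{L-1}}(\gray(\code{C})) = \gray(\code{C})$. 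Equivalently, $S$ is a bijection of the finite set $\code{C}$, and conjugating by the bijection $\gray$ preserves this. The final clause is then immediate: if $\gray(\code{C})$ is moreover linear, the established invariance $T^{2^{L-1}}(\gray(\code{C})) = \gray(\code{C})$ is exactly the defining property of a quasi-cyclic code of index $2^{L-1}$.

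The only point demanding care, the main obstacle such as it is, is the alignment bookkeeping in the intertwining identity: one must be sure that the binary length contributed by each $\Integers_{2^L}$-symbol is precisely $2^{L-1}$ and that the coordinate-wise convention for $\gray$ makes the block structure rigid, so that a shift by $2^{L-1}$ binary positions corresponds to exactly one symbol-level cyclic shift and not to a shift that splits a block. Once this is pinned down the rest is formal, and in particular linearity of $\gray(\code{C})$ is needed only for the concluding quasi-cyclic statement, not for the shift-invariance $T^{2^{L-1}}(\gray(\code{C}))=\gray(\code{C})$ itself.
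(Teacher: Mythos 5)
Your proposal is correct and follows essentially the same route as the paper's proof: both rest on the observation that the coordinate-wise Gray map sends the symbol-level cyclic shift $S(\vect{c})=(c_n,c_1,\dots,c_{n-1})$ to the block shift $T^{2^{L-1}}(\gray(\vect{c}))$, and then invoke cyclicity of $\code{C}$. Your explicit finiteness/bijectivity argument upgrading the inclusion $T^{2^{L-1}}(\gray(\code{C}))\subseteq\gray(\code{C})$ to equality is a small tightening of a step the paper leaves implicit, but it is not a different method.
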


\begin{proof} Consider $\vect{c}=(c_1, \dots, c_{n-1}, c_n) \in \code{C}$ and $\vect{c}' = (c_n, c_1, \dots, c_{n-1})$ be its cyclic shift. Since $\code{C}$ is cyclic, $\vect{c}' \in \code{C}$. Then, by definition, we have that $\gray(\vect{c}) =  (\gray(c_1),\dots,\gray(c_{n-1}),\gray(c_n))$, where $\gray(c_j) = (u_{j,L},\dots, u_{j,L}) + (u_{j,1},u_{j,2},\dots, u_{j,L-1})\mat{Y}$. Observe that $\gray(\vect{c}')=(\gray(c_n), \gray(c_1), \dots,$ $\gray(c_{n-1}))$, which is the cyclic shift of $\gray(\vect{c})$ by $2^{L-1}$ positions.
\end{proof}

Proposition~\ref{imageqccode} extends~\cite[Th.~3.9]{Wolfman}. Indeed, let $R$ be a left-shift operator, which is naturally a permutation in $S_{2^{L-1}n}$\footnote{Group of permutations of $2^{L-1}n$ elements.}. Hence, if $\code{C}$ is a cyclic code and $\gray(\code{C})$ is linear, $R^{2^{L-1}-1}\left(\gray(\code{C})\right)$ is a cyclic code in $\mathbb{F}_2 ^{2^{L-1}n}$, once $T^{2^{L-1}}(\gray(\code{C}))=\gray(\code{C})$. In other words, $\gray(\code{C})$ is a cyclic code up to permutations. Observe that the $\Integers_8$-linear code from Example~\ref{ex:cyclic-nested-construction} is linear and quasi-cyclic.

In the construction of $\code{C}_{\text{cyc}}$, one can notice that if we reach a level $k$ such that $\code{C}_k=\code{C}_k^2$, or more than that, if $\code{C}_k=\code{C}_k^2=\Integers_2^n$, there is a trivial way of proceeding. The latter reflects on the number of level codes $L$, i.e., once we reach the universe code by squaring a cyclic code, that level dictates the cardinality of the alphabet $\mathbb{Z}_{2^L}$ where $\code{C}_{\text{cyc}}$ is defined. Thus, to have linear $\Integers_{2^L}$-linear with larger $L$, one needs to avoid this behavior. From this point forward, we will concentrate on that study.

For instance, the cyclic code $\code{C}_1$ considered in Example~\ref{ex:cyclic-nested-construction} is a simplex code. If $\code{C}$ is a $[n,k]$-simplex code, then it is known that $\code{C}^2$ is has dimension $\nicefrac{k(k+1)}{2}$~\cite[Ex.~2.2.5]{Mirandola12}. Moreover, in~\cite{CascudoCramerMirandolaZemor_15} the authors show that given a random $[n,k]$-code $\code{C}$,  then with high probability $\code{C}^2$ has dimension $\min\left\{n,\nicefrac{k(k+1)}{2} \right\}$. In order to study more general cases, we proceed with the following definition.

\begin{definition}
A chain of codes $\code{C}_1,\dots,\code{C}_L$ 
stabilizes at a level $t$, if $\code{C}_t ^2 = \code{C}_{t+1} = \code{C}_{t+1} ^2$, for the smallest $t \in [1:L]$. 
\end{definition}

The goal is to determine the smallest $t$-th level that the chain stabilizes. We will address this question for cyclic codes with length $n=p^m$, where $p$ is an odd prime and $m$ is a positive integer, and we start with the following result from number theory.   

\begin{theorem}\cite[Th. 40]{Niven}\label{throot}
If $p$ is an odd prime and $g$ is a primitive root of unity modulo $p^2$, then $g$ is a primitive root modulo $p^{\alpha}$ for $\alpha=3,4,5, ...$.  
\end{theorem}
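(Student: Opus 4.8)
The plan is to prove the statement by induction on $\alpha$, reducing everything to a single lifting-the-exponent computation. Since $(\mathbb{Z}/p^\alpha\mathbb{Z})^\ast$ has order $\varphi(p^\alpha)=p^{\alpha-1}(p-1)$, showing that $g$ is a primitive root mod $p^\alpha$ amounts to showing its multiplicative order is exactly $p^{\alpha-1}(p-1)$. The base case $\alpha=2$ is the hypothesis. For the inductive step, assuming $\mathrm{ord}_{p^\alpha}(g)=p^{\alpha-1}(p-1)$, I would first note that the reduction $(\mathbb{Z}/p^{\alpha+1}\mathbb{Z})^\ast\to(\mathbb{Z}/p^\alpha\mathbb{Z})^\ast$ forces $p^{\alpha-1}(p-1)$ to divide $\mathrm{ord}_{p^{\alpha+1}}(g)$, while Lagrange gives $\mathrm{ord}_{p^{\alpha+1}}(g)\mid p^\alpha(p-1)$. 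Hence the order is either $p^{\alpha-1}(p-1)$ or $p^\alpha(p-1)$, and the whole problem collapses to ruling out the smaller value, i.e.\ to showing $g^{p^{\alpha-1}(p-1)}\not\equiv 1\pmod{p^{\alpha+1}}$.

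The key technical step is a separate induction establishing the sharper claim that for every $k\geq 1$,
\[
g^{p^{k-1}(p-1)} = 1 + c_k\,p^k \quad\text{with } p\nmid c_k .
\]
Its base case $k=1$ holds because $g$ primitive mod $p^2$ means $g^{p-1}\not\equiv 1\pmod{p^2}$, whereas Fermat (using that $g$ is a unit, so $p\nmid g$) gives $g^{p-1}\equiv 1\pmod p$. For the step I would raise $1+c_k p^k$ to the $p$-th power and expand binomially: the linear term yields $c_k p^{k+1}$, and I would argue every remaining term lies in $p^{k+2}\mathbb{Z}$, so that $g^{p^k(p-1)}\equiv 1+c_{k+1}p^{k+1}\pmod{p^{k+2}}$ with $c_{k+1}\equiv c_k\pmod p$ and hence $p\nmid c_{k+1}$. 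This is exactly where the hypothesis that $p$ is \emph{odd} is indispensable: the term $\binom{p}{2}c_k^2 p^{2k}$ is divisible by $p^{k+2}$ only because $\binom{p}{2}=\tfrac{p(p-1)}{2}$ carries an extra factor of $p$ when $p$ is odd (for $p=2$ this fails, matching the well-known fact that $3$ is primitive mod $4$ but not mod $8$). The surviving terms $\binom{p}{j}(c_k p^k)^j$ with $j\geq 3$ are dispatched by the cruder bound $jk\geq k+2$.

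Taking $k=\alpha$ in this claim immediately yields $g^{p^{\alpha-1}(p-1)}\equiv 1+c_\alpha p^\alpha\not\equiv 1\pmod{p^{\alpha+1}}$, which discards the smaller candidate order and forces $\mathrm{ord}_{p^{\alpha+1}}(g)=p^\alpha(p-1)=\varphi(p^{\alpha+1})$, completing the outer induction. I expect the main obstacle to be the bookkeeping in the binomial expansion, specifically making the divisibility count for the $\binom{p}{2}$ term rigorous and uniform across all $k\geq 1$: the tightest case is $k=1$, where $p^{2k}=p^{k+1}$ would be one power short, and it is precisely the extra factor of $p$ from the binomial coefficient in the odd-prime setting that restores the needed $p^{k+2}$. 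Everything else is a routine consequence of the order-divisibility structure of cyclic groups.
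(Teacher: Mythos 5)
Your proposal is correct and complete: the paper itself offers no proof of this statement (it is quoted directly from Niven's book as Theorem 40), and your argument is essentially the standard textbook proof found in that reference. Both the outer induction (pinning the order at either $p^{\alpha-1}(p-1)$ or $p^{\alpha}(p-1)$ and ruling out the smaller one) and the inner lifting claim $g^{p^{k-1}(p-1)}=1+c_k p^k$ with $p\nmid c_k$ — including the correct identification of $k=1$ as the tight case, where the extra factor of $p$ in $\binom{p}{2}$ (available only for odd $p$) is what rescues the divisibility by $p^{k+2}$ — are carried out soundly.
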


We focus now on cases where $2$ is a primitive root of unit modulo $p$ and modulo $p^2$. Consequently, based on Theorem~\ref{throot}, so is it modulo $p^m$, for any integer $m\geq 2$. 

\begin{proposition}
Let $2$ be a primitive root of unity modulo $p^m$. Then, $\mathbb{Z}_{p^m}$ may be partitioned on $2$-cyclotomic cosets $\mathbb{C}_0, \mathbb{C}_1 , \mathbb{C}_p , \mathbb{C}_{p^2} ,\ldots, \mathbb{C}_{p^{m-1}}$, where  
$|\mathbb{C}_{p^i}| = \varphi(p^{m-i})$, for any $i \in [0: m]$, where $\varphi(\cdot)$ denotes the Euler's Totient function\footnote{The Euler's totient function is defined as $\varphi(n)=n\prod_{p\mid n} \bigl( 1- \frac{1}{p}\bigr)$, where the product is considered over distinct primes that divide $n$.}.
\end{proposition}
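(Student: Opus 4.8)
The plan is to organize $\mathbb{Z}_{p^m}$ according to the $p$-adic valuation of its elements and to identify each valuation class with a single $2$-cyclotomic coset. First I would record the basic observation that, since $p$ is odd, $2$ is a unit modulo $p^m$; hence multiplication by $2$ is a bijection of $\mathbb{Z}_{p^m}$ that preserves the valuation $v_p(a)=\max\{i : p^i \mid a\}$. Every nonzero $a \in \mathbb{Z}_{p^m}$ can be written uniquely as $a = p^i u$ with $0 \leq i \leq m-1$ and $\gcd(u,p)=1$, where $u$ is a unit modulo $p^{m-i}$; the element $0$ is handled separately and gives the trivial coset $\mathbb{C}_0 = \mathbb{C}_{p^m} = \{0\}$. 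Because the valuation is constant along a $2$-cyclotomic coset, each coset lies entirely inside one valuation class.

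Next I would prove that each valuation class is \emph{exactly} one coset. The key step is the correspondence $a = p^i u \mapsto u$, which turns multiplication by $2$ modulo $p^m$ into multiplication by $2$ modulo $p^{m-i}$, since $2 p^i u \equiv p^i\,(2u \bmod p^{m-i}) \pmod{p^m}$. Thus the coset $\mathbb{C}_{p^i u}$ is in bijection with the orbit of $u$ under multiplication by $2$ in $\mathbb{Z}_{p^{m-i}}^{\ast}$. Here I would use that $2$ remains a primitive root modulo $p^{m-i}$: the reduction map $\mathbb{Z}_{p^m}^{\ast}\to\mathbb{Z}_{p^{m-i}}^{\ast}$ is a surjective group homomorphism, so it carries the generator $2$ to a generator. (Alternatively, this follows from the standing hypothesis that $2$ is a primitive root modulo $p$ and $p^2$, together with Theorem~\ref{throot}.) Consequently $\langle 2 \rangle = \mathbb{Z}_{p^{m-i}}^{\ast}$, and the orbit of \emph{any} unit $u$ is the full unit group, of cardinality $\varphi(p^{m-i})$. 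This simultaneously shows that the entire valuation-$i$ class is the single coset $\mathbb{C}_{p^i}$ (represented by $u=1$, i.e. $a=p^i$) and that $|\mathbb{C}_{p^i}| = \varphi(p^{m-i})$ for $i \in [0:m]$.

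Finally I would assemble the partition. The cosets $\mathbb{C}_{p^i}$ for $i = 0,1,\dots,m-1$ are pairwise disjoint because they sit in distinct valuation classes, and together with $\mathbb{C}_0=\{0\}$ they exhaust $\mathbb{Z}_{p^m}$. As a numerical confirmation, $\sum_{i=0}^{m}\varphi(p^{m-i}) = \sum_{j=0}^{m}\varphi(p^{j}) = p^m$ by the standard identity $\sum_{d\mid n}\varphi(d)=n$ applied to $n=p^m$, which matches $|\mathbb{Z}_{p^m}|$ and certifies that no element is missed or double-counted.

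I expect the only genuinely non-routine point to be the reduction step guaranteeing that $2$ stays a primitive root modulo each $p^{m-i}$; once that is in hand, everything else is bookkeeping with the valuation and the orbit count. Some care is also needed at the boundary cases $i=0$ (the coset of units, of size $\varphi(p^m)$) and $i=m$ (the zero coset, of size $\varphi(1)=1$), to confirm that the formula $|\mathbb{C}_{p^i}|=\varphi(p^{m-i})$ remains valid across the full range $[0:m]$.
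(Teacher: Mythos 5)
Your proof is correct, and it reorganizes the argument in a way that genuinely differs from the paper's. The paper does three things: it asserts as ``straightforward'' that $|\mathbb{C}_{p^i}|=\varphi(p^{m-i})$ (implicitly, this is the multiplicative order of $2$ modulo $p^{m-i}$); it proves pairwise disjointness of the $\mathbb{C}_{p^i}$ by a congruence manipulation (from $p^i 2^r \equiv p^j 2^s \pmod{p^m}$ it derives $p^{m-(i-j)}\equiv 0 \pmod{p^m}$, an absurdity); and it then relies on the identity $\sum_{d\mid p^m}\varphi(d)=p^m$ as the load-bearing step that upgrades ``disjoint cosets whose sizes sum to $p^m$'' into a partition of $\mathbb{Z}_{p^m}$. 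You instead stratify $\mathbb{Z}_{p^m}$ by $p$-adic valuation and prove that each stratum is exactly one coset: writing $a=p^i u$, multiplication by $2$ modulo $p^m$ corresponds to multiplication by $2$ modulo $p^{m-i}$ acting on $u$, and since the reduction map $\mathbb{Z}_{p^m}^{\ast}\to\mathbb{Z}_{p^{m-i}}^{\ast}$ is a surjective homomorphism it carries the generator $2$ to a generator, so the orbit of \emph{any} unit is all of $\mathbb{Z}_{p^{m-i}}^{\ast}$. That gives exhaustion constructively (no counting needed --- your Euler-sum computation is only a consistency check), makes disjointness immediate from distinct valuations, and supplies the justification for the cardinality formula that the paper leaves implicit. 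Note also that your reduction argument uses only the proposition's stated hypothesis (primitive root modulo $p^m$), whereas the alternative route you mention through Theorem~\ref{throot} needs the paper's standing assumption that $2$ is a primitive root modulo $p$ and $p^2$. In short, the paper's version is more compact; yours is more self-contained and pins down exactly why $2$ remains a primitive root at every level $p^{m-i}$, which is the same fact both proofs ultimately rest on.
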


\begin{proof}
Given that $2$ is a primitive root of unity modulo $p^m$, let $\mathbb{C}_{p^i}= \{p^i , p^i 2 , ..., p^i 2^{o_{p_i} -1}\}$. It is straightforward that $o_{p_i} = \varphi(p^{m-i})$, for any $i \in [0:m]$. Notice that $\mathbb{C}_{p^i} \cap \mathbb{C}_{p^j}=\emptyset$, for $i >j$. Indeed, without loss of generality for $r>s$,
\begin{IEEEeqnarray*}{rCl}
p^i 2^r \equiv p^j 2^s \mod p^m &\Rightarrow& p^i 2^{r-s} \equiv p^j \mod p^m \Rightarrow p^{m-i} p^i 2^{r-s} \equiv p^{m-i} p^j \mod p^m \nonumber \\
&\Rightarrow& p^{m-(i-j)}\equiv 0\mod p^m ,  
\end{IEEEeqnarray*}
an absurd. Finally, 
\begin{IEEEeqnarray*}{rCl}
|\mathbb{C}_{0}| + |\mathbb{C}_{1}|+ |\mathbb{C}_{p}| +  ...+|\mathbb{C}_{p^{m-1}}| &=& 1 + \varphi(p^m) +\varphi(p^{m-1}) + ...+\varphi(p)    \nonumber\\
&=& \sum_{d|p^m} \varphi(d) = p^m. \hspace{4.7cm} \qedhere 
\end{IEEEeqnarray*} 
\end{proof}

\begin{lemma}\label{cyclotcosetop}
Let $\mathbb{C}_0 ,\mathbb{C}_1 , \mathbb{C}_p ,\ldots , \mathbb{C}_{p^{m-1}}$ be the $2$-cyclotomic cosets which partition the ring $\mathbb{Z}_{p^m}$. Then,
\begin{enumerate}
\item $\mathbb{C}_{p^i} +\mathbb{C}_{p^i}=2\mathbb{C}_{p^i} =\{0\}\cup \mathbb{C}_{p^i} \cup \mathbb{C}_{p^{i+1}} \cup ...\cup \mathbb{C}_{p^{m-1}}$, for any $i \in [0: m-1]$.  
\item $\mathbb{C}_{p^i} +\mathbb{C}_{p^j} = \mathbb{C}_{p^i} $, for $0\leq i <j \leq m-1$.
\item $(\mathbb{C}_{p^i} \cup \mathbb{C}_{p^j}) +   (\mathbb{C}_{p^i} \cup \mathbb{C}_{p^j})= 2\mathbb{C}_{p^i} = \{0\}\cup \mathbb{C}_{p^i} \cup \mathbb{C}_{p^{i+1}} \cup ...\cup \mathbb{C}_{p^{m-1}}$, for $i <j.$ Furthermore, $(\mathbb{C}_{0} \cup \mathbb{C}_{p^i}) +   (\mathbb{C}_{0} \cup \mathbb{C}_{p^i}) = 2\mathbb{C}_{p^i} = \{0\}\cup \mathbb{C}_{p^i} \cup \mathbb{C}_{p^{i+1}} \cup ...\cup \mathbb{C}_{p^{m-1}}$, for any $i \in [0: m-1]$. 
\end{enumerate}
\end{lemma}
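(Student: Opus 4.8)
The plan is to translate everything into the language of $p$-adic valuation and then reduce the three identities to elementary facts about how valuations add in $\mathbb{Z}_{p^m}$. First I would record the key reinterpretation of the cosets. Since $2$ is a primitive root modulo $p^{m-i}$, the orbit $\{p^i 2^k : 0 \le k < \varphi(p^{m-i})\}$ is exactly the set of elements of $\mathbb{Z}_{p^m}$ divisible by $p^i$ but not by $p^{i+1}$; that is, $\mathbb{C}_{p^i} = \{x \in \mathbb{Z}_{p^m} : \nu(x) = i\}$, where $\nu(x)$ denotes the largest integer $e$ with $p^e \mid x$, and $\mathbb{C}_0 = \{0\}$ accounts for $\nu(x) = m$. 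With this description the right-hand side $\{0\}\cup \mathbb{C}_{p^i}\cup\dots\cup\mathbb{C}_{p^{m-1}}$ is precisely $\{x : \nu(x) \ge i\}$, so each item becomes a statement about attainable valuations of sums.

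For item 1 I would prove the two inclusions separately. The inclusion $\subseteq$ is immediate: writing $a = p^i\alpha$, $b = p^i\beta$ with $\alpha,\beta$ units, we get $a+b = p^i(\alpha+\beta)$, hence $\nu(a+b)\ge i$. For the reverse inclusion I must realize every $c$ with $\nu(c)\ge i$: writing $c = p^i\gamma$, I need units $\alpha,\beta \pmod{p^{m-i}}$ with $\alpha+\beta\equiv\gamma$. If $\gamma\equiv 0 \pmod p$ (which covers $\nu(c)>i$ and $c=0$) any unit $\alpha$ works, since $\beta=\gamma-\alpha\equiv-\alpha$ is then a unit; if $\gamma$ is a unit (the case $\nu(c)=i$) I must choose $\alpha$ avoiding both $0$ and $\gamma$ modulo $p$, which is possible exactly because $p$ is odd, leaving $p-2\ge 1$ admissible residues. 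This last step, expressing a unit as a sum of two units, is where the hypothesis that $p$ is an odd prime is genuinely used, and I expect it to be the only real obstacle in the lemma; everything else is bookkeeping.

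For item 2, with $i<j$ the summands $a\in\mathbb{C}_{p^i}$ and $b\in\mathbb{C}_{p^j}$ have distinct valuations, so $\nu(a+b)=\min\{i,j\}=i$, giving $\mathbb{C}_{p^i}+\mathbb{C}_{p^j}\subseteq\mathbb{C}_{p^i}$. Conversely, for any $c\in\mathbb{C}_{p^i}$ I would fix a nonzero $b\in\mathbb{C}_{p^j}$ and set $a=c-b$; since $\nu(b)=j>i=\nu(c)$ we again get $\nu(a)=i$, so $a\in\mathbb{C}_{p^i}$ and $c=a+b\in\mathbb{C}_{p^i}+\mathbb{C}_{p^j}$, proving equality.

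Finally, item 3 follows formally by distributing the sum over the union, $(\mathbb{C}_{p^i}\cup\mathbb{C}_{p^j})+(\mathbb{C}_{p^i}\cup\mathbb{C}_{p^j}) = (\mathbb{C}_{p^i}+\mathbb{C}_{p^i})\cup(\mathbb{C}_{p^i}+\mathbb{C}_{p^j})\cup(\mathbb{C}_{p^j}+\mathbb{C}_{p^j})$, and then substituting items 1 and 2. The first term already equals $\{x:\nu(x)\ge i\}=2\mathbb{C}_{p^i}$, and since the other two terms, $\mathbb{C}_{p^i}$ and $\{x:\nu(x)\ge j\}$, are contained in it, the union collapses to $2\mathbb{C}_{p^i}$. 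The variant with $\mathbb{C}_0=\{0\}$ is identical: adding $0$ contributes only a copy of $\mathbb{C}_{p^i}$, which is absorbed into $\mathbb{C}_{p^i}+\mathbb{C}_{p^i}$, so the same conclusion holds.
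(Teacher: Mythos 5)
Your proof is correct, and it takes a genuinely different route from the paper's. The paper works entirely inside the power-of-$2$ language: its key step for item~1 is that $p^{j-i}-1$ is coprime to $p$, hence (by primitivity of $2$) equals some power $2^{r_j-s_j}$ modulo $p^m$, which after multiplying through by $p^i 2^{s_j}$ yields the explicit identity $p^i2^{r_j}+p^i2^{s_j}\equiv p^j2^{s_j}$, exhibiting each element of $\mathbb{C}_{p^j}$ as a sum of two elements of $\mathbb{C}_{p^i}$; item~2 is likewise handled by factoring $p^i2^s+p^j2^r=(1+p^{j-i}2^{r-s})p^i2^s$ and observing the first factor is a unit. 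You instead use primitivity once and for all to identify $\mathbb{C}_{p^i}$ with the valuation level set $\{x:\nu(x)=i\}$, after which everything reduces to two elementary facts: the ultrametric behavior of $\nu$ under addition, and the fact that every residue modulo $p^{m-i}$ is a sum of two units when $p$ is odd (your counting of the $p-2\ge 1$ admissible residues, which cleanly isolates the only place oddness of $p$ enters; the paper uses oddness more diffusely, e.g.\ in arguing $2^{s-r}+p^{j-i}\not\equiv 0 \bmod p$). Your approach buys transparency and also repairs two weak points in the paper's write-up: the paper states item~2 as an equality but only verifies the inclusion $\mathbb{C}_{p^i}+\mathbb{C}_{p^j}\subseteq\mathbb{C}_{p^i}$, whereas your ``fix $b$ and set $a=c-b$'' argument gives the reverse inclusion; and the paper's closing step for item~1 (writing $2^kp^j=2^{k-1}p^j+2^{k-1}p^j$) actually decomposes into elements of $\mathbb{C}_{p^j}$, not $\mathbb{C}_{p^i}$, while your unit-sum argument covers that inclusion uniformly. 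What the paper's computation buys in exchange is constructiveness: it exhibits the actual exponents of $2$ realizing each sum, staying in the cyclotomic-coset notation that is reused later for generator polynomials. One small point you should make explicit: the identification $\mathbb{C}_{p^i}=\{x:\nu(x)=i\}$ needs that $2$ remains a primitive root modulo $p^{m-i}$, which follows either from surjectivity of the reduction map $(\mathbb{Z}_{p^m})^*\to(\mathbb{Z}_{p^{m-i}})^*$ or from the cardinality count $|\mathbb{C}_{p^i}|=\varphi(p^{m-i})$ in the proposition preceding the lemma.
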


\begin{proof}
\begin{enumerate}
\item Given $ i < j \leq m -1$, we have $p^{j-i} -1 \not\in \mathbb{C}_{p^t}$, for any $1 \leq t \leq m-1$, namely, $p^{j-i} -1 \in \mathbb{C}_{1}$, i.e., there exists $0\leq s_j \leq r_j \leq \varphi(p^m) -1$ such that $2^{r_j - s_j}\equiv p^{j-i} -1 \mod p^m$. Thus,
\begin{IEEEeqnarray}{rCl}\label{CpiCpi}
2^{r_j - s_j }\equiv p^{j-i} -1 \mod p^m &\Rightarrow& 2^{r_j -s_j} +1 \equiv p^{j-i} \mod p^m \nonumber \\
&\Rightarrow& p^i(2^{r_j -s_j } +1) \equiv p^{j} \mod p^m  \nonumber \\ 
&\Rightarrow& p^i(2^{r_j - s_j } +1)2^{s_j} \equiv p^{j}2^{s_j}  \mod p^m \nonumber \\
&\Rightarrow&      p^i 2^{r_j} + p^i 2^{s_j} \equiv p^{j} 2^{s_j}  \mod p^m.
\end{IEEEeqnarray}
Running $r_j$ and $s_j$ from $0$ to $\varphi(p^{m-j})-1$, all sums of elements of $\mathbb{C}_{p^i}$ yield elements in $\mathbb{C}_{p^j}$, i.e., $ 2\mathbb{C}_{p^i} \subset \mathbb{C}_{p^j}$, for each $i<j\leq m-1$. In particular, assuming $i=j$ in~\eqref{CpiCpi}, we have $2\mathbb{C}_{p^i}  = \mathbb{C}_{p^i}$. Moreover, since $p^i , p^i (p^{m-i} -1) \in \mathbb{C}_{p^i},$ we also have $p^i + p^i (p^{m-i} -1) = p^i (p^{m-i}) = p^m $, i.e., $0\in 2\mathbb{C}_{p^i}$. Thus, it holds that $   2\mathbb{C}_{p^i} \subset \{0\}\cup \mathbb{C}_{p^i} \cup \mathbb{C}_{p^{i+1}} \cup ...\cup \mathbb{C}_{p^{m-1}} $, for any $i \in [0:m-1]$. 
On the other hand, let $2^k p^j \in \mathbb{C}_{p^j}$, where $i\leq j\leq m-1$. Then 
\begin{IEEEeqnarray*}{c}
2^k p^j = 2^{k-1}p^{j-i} 2 p^i =  2^{k-1}p^{j-i}( p^i + p^i) =  (2^{k-1}p^{j-i})p^i + (2^{k-1}p^{j-i}) p^i  ,
\end{IEEEeqnarray*}
and $\mathbb{C}_{p^j} \subset 2\mathbb{C}_{p^i}$. Thus, $0 \in 2\mathbb{C}_{p^i}$, i.e., $\mathbb{C}_0 \subset 2\mathbb{C}_{p^i}$ and the inverse inclusion holds, from where the result follows.

\item For $i< j$, let $p^i 2^s \in \mathbb{C}_{p^i}$ and  $p^j 2^r \in \mathbb{C}_{p^j}$. If $s\leq r$, $p^i 2^s + p^j 2^r = (1 + p^{j-i}2^{r-s})p^i 2^s\mod p^m$; otherwise $p^i 2^s + p^j 2^r = (2^{s-r} + p^{j-i})p^i 2^r \mod p^m.$ Since $1 + p^{j-i}2^{r-s} \not\equiv 0 \mod p$ and $(2^{s-r} + p^{j-i})\not\equiv 0 \mod p,$ once $p$ is a odd prime number.

\item For $i\leq j$, based on items $[(i)]$ and $[(ii)]$ we have $(\mathbb{C}_{p^i} \cup \mathbb{C}_{p^j}) +   (\mathbb{C}_{p^i} \cup \mathbb{C}_{p^j}) = (\mathbb{C}_{p^i} +\mathbb{C}_{p^i})+ (\mathbb{C}_{p^i} +\mathbb{C}_{p^j}) + (\mathbb{C}_{p^i} +\mathbb{C}_{p^j}) + (\mathbb{C}_{p^j} +\mathbb{C}_{p^j}) = (\mathbb{C}_{p^i} +\mathbb{C}_{p^i})+(\mathbb{C}_{p^i} +\mathbb{C}_{p^i})+   (\mathbb{C}_{p^j} +\mathbb{C}_{p^j})= (\mathbb{C}_{p^i} +\mathbb{C}_{p^i})+   (\mathbb{C}_{p^j} +\mathbb{C}_{p^j})= \mathbb{C}_{p^i} +\mathbb{C}_{p^i}. $ \hfill \qedhere 
%
\end{enumerate} 
\end{proof}

\begin{example}\label{Exsqcycliccodes}
Let $q=2$, $L=3$, $n=5^3 = 125$, and $\code{C}$ be a $\mathbb{Z}_8$-cyclic code, with associated cyclic codes $\code{C}_1 \subset \code{C}_2 \subset \code{C}_3$. Based on Theorem~\ref{maincascudosth}, we analyze the squares of cyclic codes closed under Schur product. 

Notice that $2$ is a primitive root of unit modulo $5$ and $25$; consequently, via Theorem~\ref{throot}, 2 is a primitive root of unit modulo $5^m$, for any $m$ positive integer. The $2$-cyclotomic cosets modulo 125 are $\mathbb{C}_0$ ,$\mathbb{C}_1$, $\mathbb{C}_5$, $\mathbb{C}_{25}$, and the associated minimal polynomials are, respectively
\begin{IEEEeqnarray*}{rCl}
\begin{array}{ll}
p^0 (x)=x+1,& p^1 (x) =x^{100} + x^{75} + x^{50} + x^{25}+1,\\ 
p^{5} (x) = x^{20}+x^{15}+x^{10}+x^5 +1,  & p^{25} (x)= x^4 + x^3 + x^2 + x+1.    
\end{array}
\end{IEEEeqnarray*}

Let $g_i (x)$ be the generator of the associated cyclic codes for $i=1,2,3$. Besides that, define $I_2 = 2I_1$ and $I_3 = 2I_2$. From Theorem~\ref{maincascudosth} and Lemma~\ref{cyclotcosetop}, we present on Table~\ref{tab:tableofcycliccodes} all possible scenarios of associated square cyclic codes closed under Schur product.
\exampleend
\end{example}

\begin{sidewaystable}[t]
\caption{All binary cyclic codes $\code{C}_1 = \langle g_1 (x)\rangle$ and the respective generator polynomials of their squares }
  \label{tab:tableofcycliccodes}
\centering
\begin{tabular}{|c|c|c|c|c|c|}
\hline
$I_1$                                       & $g_1 (x)$                               & $I_2$                                      & $g_2 (x)$                               & $I_3$                                      & $g_3 (x)$                               \\ \hline
$\mathbb{C}_0$                              & $p^1 (x) \cdot p^5 (x) \cdot p^{25}(x)$ & $ \mathbb{C}_0$                            & $p^1 (x) \cdot p^5 (x) \cdot p^{25}(x)$ & $ \mathbb{C}_0$                            & $p^1 (x) \cdot p^5 (x) \cdot p^{25}(x)$ \\ 
$\mathbb{C}_0 \cup \mathbb{C}_1$            & $p^5 (x)\cdot p^{25}(x)$                & $\mathbb{Z}_{125}$                         & $1$                                     & $\mathbb{Z}_{125}$                         & $1$                                     \\ 
$\mathbb{Z}_{125}\setminus \mathbb{C}_{25}$ & $p^{25}(x)$                             & $ \mathbb{Z}_{125}$                        & $1$                                     & $ \mathbb{Z}_{125}$                        & $1$                                     \\ 
$ \mathbb{Z}_{125}$                         & $1$                                     & $\mathbb{Z}_{125}$                         & $1$                                     & $ \mathbb{Z}_{125}$                        & $1$                                     \\ 
$\mathbb{C}_0 \cup \mathbb{C}_5 $           & $p^1 (x) \cdot p^{25}(x)$               & $\mathbb{Z}_{125}\setminus \mathbb{C}_{1}$ & $p^1 (x)$                               & $\mathbb{Z}_{125}\setminus \mathbb{C}_{1}$ & $p^1 (x)$                               \\ 
$\mathbb{C}_0 \cup \mathbb{C}_{25} $        & $p^1 (x) \cdot p^{5}(x)$                & $\mathbb{C}_0 \cup \mathbb{C}_{25} $       & $p^1 (x) \cdot p^{5}(x)$                & $\mathbb{C}_0 \cup \mathbb{C}_{25} $       & $p^1 (x) \cdot p^{5}(x)$                \\ 
$\mathbb{Z}_{125}\setminus \mathbb{C}_{1}$  & $p^1 (x)$                               & $\mathbb{Z}_{125}\setminus \mathbb{C}_{1}$ & $p^1 (x)$                               & $\mathbb{Z}_{125}\setminus \mathbb{C}_{1}$ & $p^1 (x)$                               \\ 
$\mathbb{C}_1$                              & $p^0 (x) \cdot p^5 (x) \cdot p^{25}(x)$ & $\mathbb{Z}_{125}$                         & $1$                                     & $\mathbb{Z}_{125}$                         & $1$                                     \\ 
$\mathbb{C}_1 \cup \mathbb{C}_5$            & $p^0 (x) \cdot p^{25}(x)$               & $\mathbb{Z}_{125}$                         & $1$                                     & $\mathbb{Z}_{125}$                         & $1$                                     \\ 
$\mathbb{C}_1 \cup \mathbb{C}_{25}$         & $p^0 (x) \cdot p^{5}(x)$                & $\mathbb{Z}_{125}$                         & $1$                                     & $\mathbb{Z}_{125}$                         & $1$                                     \\ 
$\mathbb{Z}_{125}\setminus \mathbb{C}_{0}$  & $p^0 (x)$                               & $ \mathbb{Z}_{125}$                        & $1$                                     & $ \mathbb{Z}_{125}$                        & $1$                                     \\ 
$\mathbb{C}_5$                              & $p^0 (x) \cdot p^1 (x) \cdot p^{25}(x)$ & $ \mathbb{Z}_{125}\setminus \mathbb{C}_1$  & $p^1 (x)$                               & $ \mathbb{Z}_{125}\setminus \mathbb{C}_1$  & $p^1 (x)$                               \\ 
$\mathbb{C}_5 \cup \mathbb{C}_{25}$         & $p^0 (x) \cdot p^1 (x)$                 & $\mathbb{Z}_{125}\setminus \mathbb{C}_{1}$ & $p^1 (x)$                               & $\mathbb{Z}_{125}\setminus \mathbb{C}_{1}$ & $p^1 (x)$                               \\ 
$\mathbb{C}_{25}$                           & $p^0 (x) \cdot p^1 (x) \cdot p^{5}(x)$  & $\mathbb{C}_0 \cup \mathbb{C}_{25}$        & $p^1 (x) \cdot p^{5}(x)$                & $\mathbb{C}_0 \cup \mathbb{C}_{25}$        & $p^1 (x) \cdot p^{5}(x)$                \\ \hline
\end{tabular}
\end{sidewaystable}

In Example~\ref{Exsqcycliccodes}, each chain of square cyclic codes stabilizes at the second level, i.e., $\code{C}_2 ^2 = \code{C}_{2}$. Still, if $L$ increases and the same parameters are kept, the respective larger chain of square cyclic codes also stabilizes at the second level (see the second row of Table~\ref{tab:tableofcycliccodes}). 
We show that $2$ being a primitive root of unity modulo $p^m$ is necessary for the chain of square cyclic codes to stabilize at the second level.
\begin{proposition}
Let $2$ be a primitive root of unity modulo $p^m$. All chains of cyclic codes closed under the Schur product stabilize at the second level.
\end{proposition}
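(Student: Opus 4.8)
The plan is to translate the entire statement into the additive combinatorics of $2$-cyclotomic cosets and then exploit the structural description in Lemma~\ref{cyclotcosetop}. Recall that a binary cyclic code of length $n=p^m$ is determined by its defining set, equivalently by the complementary set $I\subseteq\Integers_{p^m}$, which is necessarily a union of $2$-cyclotomic cosets (the roots of a polynomial in $\Integers_2[x]$ are closed under $x\mapsto x^2$). By Theorem~\ref{maincascudosth}, squaring a cyclic code corresponds to replacing $I$ by the sumset $I+I=2I$. Hence, writing $I_1$ for the set attached to $\code{C}_1$ and $I_{k+1}=2I_k$ for the successive squares of the chain, proving that the chain stabilizes at the second level amounts to establishing the purely set-theoretic identity $2(2I_1)=2I_1$ for every union of cosets $I_1$. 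The hypothesis that $2$ is a primitive root modulo $p^m$ enters precisely here: it is what guarantees the partition $\mathbb{C}_0,\mathbb{C}_1,\mathbb{C}_p,\dots,\mathbb{C}_{p^{m-1}}$ and the sum rules of Lemma~\ref{cyclotcosetop}.

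First I would compute $2I$ for an arbitrary union $I=\mathbb{C}_0^{\varepsilon}\cup\bigcup_{i\in S}\mathbb{C}_{p^i}$, where $S\subseteq[0:m-1]$ collects the indices of the nonzero cosets appearing in $I$ and $\varepsilon\in\{0,1\}$ records whether $\mathbb{C}_0$ is present. Set $i^\ast=\min S$. Expanding $2I=\bigcup(\mathbb{C}_a+\mathbb{C}_b)$ over all cosets $\mathbb{C}_a,\mathbb{C}_b\subseteq I$ and invoking Lemma~\ref{cyclotcosetop}, the self-sum $\mathbb{C}_{p^{i^\ast}}+\mathbb{C}_{p^{i^\ast}}$ already equals the whole ``tail'' $\{0\}\cup\mathbb{C}_{p^{i^\ast}}\cup\dots\cup\mathbb{C}_{p^{m-1}}$ by item~1, while every remaining pairwise sum is, by items~1 and~2, either $\{0\}$ or a coset $\mathbb{C}_{p^{\min(a,b)}}$ with $\min(a,b)\ge i^\ast$, hence already contained in that tail. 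Therefore $2I=\{0\}\cup\bigcup_{i\ge i^\ast}\mathbb{C}_{p^i}$ whenever $I$ contains at least one nonzero coset; the degenerate cases $I=\emptyset$ and $I=\mathbb{C}_0$ both give $2I=I$ directly.

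The key observation is then that $2I$ is itself a tail set whose smallest nonzero coset index is again $i^\ast$. Applying the same computation to $2I$ in place of $I$ leaves $i^\ast$ unchanged, so $2(2I)=\{0\}\cup\bigcup_{i\ge i^\ast}\mathbb{C}_{p^i}=2I$. In code-theoretic terms, $\code{C}_2=\code{C}_1^2$ corresponds to $I_2=2I_1$, and the identity $2I_2=I_2$ yields $\code{C}_2^2=\code{C}_2$, i.e., the chain stabilizes at the second level regardless of the initial code $\code{C}_1$.

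The main obstacle --- really the only nontrivial point --- is the first step: showing that a single doubling collapses any union of cosets onto the canonical tail $\{0\}\cup\bigcup_{i\ge i^\ast}\mathbb{C}_{p^i}$. Once this ``absorption'' is established, idempotence of doubling on tail sets, and hence stabilization at level two, is immediate. I would be careful to justify that no pairwise sum can ever reach a coset with index strictly below $i^\ast$ --- this is exactly item~2 of Lemma~\ref{cyclotcosetop}, where the smaller index always wins --- since that is what pins $i^\ast$ down as an invariant of the doubling operation and prevents the chain from growing beyond $\code{C}_2$.
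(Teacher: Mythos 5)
Your proposal is correct and follows essentially the same route as the paper's own proof: both reduce squaring to sumset doubling of the exponent set via Theorem~\ref{maincascudosth}, expand $I_1+I_1$ into pairwise coset sums using Lemma~\ref{cyclotcosetop}, identify $2I_1$ as the tail $\{0\}\cup\mathbb{C}_{p^{i^\ast}}\cup\dots\cup\mathbb{C}_{p^{m-1}}$ anchored at the minimal index, and conclude that doubling fixes this tail, giving stabilization at the second level. Your explicit treatment of the degenerate cases and of the invariance of the minimal index $i^\ast$ is a slight tightening of the paper's argument, but the method is the same.
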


\begin{proof}
Assume a $L-$length chain of cyclic codes $\code{C}_1 \subset \code{C}_2 \subset ...\subset \code{C}_L$, where the generator polynomial $g_1 (x)$ has defining set $J_1 = \mathbb{Z}_{p^m} \setminus \left\{\{0\} \cup \mathbb{C}_{p^{i_1}} \cup \ldots \cup \mathbb{C}_{p^{i_r}} \right\}$ or $J_1 ^* = \mathbb{Z}_{p^m} \setminus \left\{ \mathbb{C}_{p^{i_1}} \cup \ldots \cup \mathbb{C}_{p^{i_r}} \right\}$, where $i_1 < i_2 < ...< i_r \in \mathbb{Z}_{p^m}$. Then, $I_1 ^*  =\left\{ \mathbb{C}_{p^{i_1}} \cup \ldots \cup \mathbb{C}_{p^{i_r}} \right\}$ or $I_1  =\left\{\{0\}\cup \mathbb{C}_{p^{i_1}} \cup \ldots \cup \mathbb{C}_{p^{i_r}} \right\}$. We prove only to $I_1 ^* $, since the verification for $I_1 $ is the same.

According to Lemma~\ref{cyclotcosetop}, we have the following description to $I_2 = 2 I_1 ^* = I_1 ^* + I_1 ^*$:
{\small
\begin{eqnarray*}
I_2&=& \left(\mathbb{C}_{p^{i_1}} + \mathbb{C}_{p^{i_1}} \right) \cup \left(\mathbb{C}_{p^{i_1}} + \mathbb{C}_{p^{i_2}} \right)\cup \ldots \cup \left(\mathbb{C}_{p^{i_1}} + \mathbb{C}_{p^{i_r}}\right) \cup\\
&\vdots&\\
&\cup& \left(\mathbb{C}_{p^{i_r}} + \mathbb{C}_{p^{i_1}} \right) \cup \left(\mathbb{C}_{p^{i_r}} + \mathbb{C}_{p^{i_2}} \right)\cup \ldots \cup \left(\mathbb{C}_{p^{i_r}} + \mathbb{C}_{p^{i_r}}\right)\\
&=& 2\mathbb{C}_{p^{i_1}} \cup  \mathbb{C}_{p^{i_1}}\cup \ldots \cup \mathbb{C}_{p^{i_1}} \cup \mathbb{C}_{p^{i_2}} \cup  2\mathbb{C}_{p^{i_2}}\cup \ldots \cup \mathbb{C}_{p^{i_2}} \cup\ldots \cup\mathbb{C}_{p^{i_1}} \cup \mathbb{C}_{p^{i_2}} \cup \ldots \cup 2\mathbb{C}_{p^{i_r}}\\
&=& \mathbb{C}_{p^{i_1}} \cup \mathbb{C}_{p^{i_2}} \cup\ldots \cup \mathbb{C}_{p^{i_r}} \cup (2\mathbb{C}_{p^{i_1}} \cup \ldots \cup 2\mathbb{C}_{p^{i_r}})\\
&=& \mathbb{C}_{p^{i_1}} \cup \mathbb{C}_{p^{i_2}} \cup\ldots \cup \mathbb{C}_{p^{i_r}} \cup 2\mathbb{C}_{p^{i_1}}\\
&=& \mathbb{C}_{p^{i_1}} \cup \mathbb{C}_{p^{i_2}} \cup \ldots \cup \mathbb{C}_{p^{i_r}} \cup (\{0\} \cup \mathbb{C}_{p^{i_1}} \cup  \mathbb{C}_{p^{i_1 +1}} \cup \ldots \cup \mathbb{C}_{p^{m-1}})  = 2\mathbb{C}_{p^{i_1}}.
\end{eqnarray*}}

Therefore, $I_3 = 2I_2 = 2\mathbb{C}_{p^{i_1}} + 2\mathbb{C}_{p^{i_1}} = 2\mathbb{C}_{p^{i_1}}$, namely, $I_3 =I_2$. Naturally, one observes that $I_i = I_{i-1}$, for all $i\geq 3$, and the result follows.
\end{proof}

\begin{example}
Let $q=2, L=3$, and $n=7^3 =343$. Note that $2$ is not a primitive root of unit modulo $343$.  Let $\mathbb{C}_1$, where $|\mathbb{C}_1| = 147< \varphi(343)$. Assuming $I_1 = \mathbb{C}_1$, then $I_2 = 2 I_1 = \mathbb{C}_1 \cup \mathbb{C}_3$ and $I_3 = 2I_2 = \mathbb{Z}_{343}$. The chain of square cyclic codes stabilizes at the third level, not at the second. 
\exampleend
\end{example}

\section{Families of $\mathbb{Z}_{2^L}$-additive codes and their $\mathbb{Z}_{2^L}$-linear codes}
\label{sec:families_gray_codes}

We work now with some known families of $\mathbb{Z}_{2^L}$-additive codes and verify whether their $\Integers_{2^L}$-linear codes are linear or nonlinear with results from Secs.~\ref{subsec:linearity-decomposition} and~\ref{subsec:nonlinearity}. Some codes presented here had their (non)linearity verified with different arguments in~\cite{nonlinearity, CordobaVelaVillanueva19}.

\subsection{Hadamard Codes}

\begin{definition}[Hadamard code]
Let $\set{T}_i = \left\{j\cdot 2^{i-1}: 0\leq j \leq 2^{L-i+1} -1\right\}$ for all $i \in [1:L]$. Let $t_1, t_2,..., t_L$ be non-negative integers with $t_1 \geq 1.$ Consider the matrix $\mat{A}^{t_1 , ..., t_L}$ whose columns are vectors of the form $\vect{z}^T,$  where $\vect{z}\in \set{T}_L ^{t_L} \times \dots \times  \set{T}_2 ^{t_2} \times \set{T}_1 ^{t_1 -1} \times \{1\}$. The code generated by $\mat{A}^{t_1 , ..., t_L}$ is the $\mathbb{Z}_{2^L}$-additive Hadamard code $\code{H}^{t_1, t_2 , \ldots, t_L}$.    
\end{definition}

\begin{example}\cite[Ex.~22]{CordobaVelaVillanueva19}
\label{ex:hadamard-generator}
For $L=3$, we have $\set{T}_1 =\{0,1,2,3,4,5,6,7\},\,\set{T}_2 =\{0,2,4,6\}$, and $\set{T}_3=\{0,4\}$. From these sets, it is possible to define the following matrices corresponding to generator matrices of $\mathbb{Z}_8$-additive Hadamard codes:
\begin{IEEEeqnarray*}{c}
   \mat{A}^{1,0,1}=\begin{pNiceMatrix}
    0 & 4 \\
    1 & 1 
\end{pNiceMatrix}, \, \mat{A}^{2,0,0}=\begin{pNiceMatrix}
     0 & 1 & 2 & 3 & 4 & 5 & 6 & 7 \\
     1 & 1 & 1 & 1 & 1 & 1 & 1 & 1 
\end{pNiceMatrix}, \, \mat{A}^{1,1,1}=\begin{pNiceMatrix}
     0 & 0 & 0 & 0 & 4 & 4 & 4 & 4 \\
     0 & 2 & 4 & 6 & 0 & 2 & 4 & 6  \\
     1 & 1 & 1 & 1 & 1 & 1 & 1 & 1 
\end{pNiceMatrix}. 
\end{IEEEeqnarray*}
The decomposition code of the $\mathbb{Z}_{8}$-additive Hadamard code $\code{H}^{1,0,1}$ is
\begin{IEEEeqnarray*}{rCl}
    \code{B} &=& \{(0,0,0,0,0,0), (1,1,0,0,0,0), (0,0,1,1,0,0), (1,1,1,1,0,0), (0,0,0,0,1,1),   \\
    & & (1,1,0,0,1,1), (0,0,1,1,1,1), (1,1,1,1,1,1), (0,0,0,0,0,1), (1,1,0,0,0,1),   \\
    & & (0,0,1,1,0,1), (1,1,1,1,0,1), (0,0,0,0,1,0), (1,1,0,0,1,0), (0,0,1,1,1,0), \\
    & & (1,1,1,1,1,0)\} \subseteq \mathbb{Z}_2^6. 
\end{IEEEeqnarray*}
For all $\vect{c} = \sum_{i=1}^3 2^{i-1}\vect{u}_i,\vect{d} = \sum_{i=1}^3 2^{i-1}\vect{v}_i \in \code{H}^{1,0,1}$, the vectors of the form $(\vect{0},\vect{u}_1 \circ \vect{v}_1, \vect{u}_2 \circ \vect{v}_2)$ define the set $\{(0,0,0,0,0,0),(0,0,0,0,1,1),(0,0,1,1,0,0), (0,0,1,1,1,1)\} \in \code{B}$ and by Theorem~\ref{thm:iff_condition_concatenated} the $\Integers_{8}$-linear code $\gray(\code{H}^{1,0,1})$ is linear.\hfill\exampleend

\end{example}

We recall the recursive construction of Hadamard codes from~\cite{CordobaVelaVillanueva19}. Let $\mat{A}^{1,0,\dots,0}=\langle 1 \rangle$, i.e., $\code{H}^{1,0,\dots,0} = \Integers_{2^L}$. Given $\mat{A}=\mat{A}^{t_1, \dots, t_L}$, we construct for any $i \in [1: L]$,
\begin{IEEEeqnarray}{c}
\label{eq:recursive-hadamard}
\mat{A}_i = \begin{pNiceMatrix}
0 \cdot \vect{2^{i-1}} & 1 \cdot \vect{2^{i-1}} & \dots & (2^{L-i+1}-1) \cdot \vect{2^{i-1}} \\
\mat{A} & \mat{A} & \dots & \mat{A}
\end{pNiceMatrix}.
\end{IEEEeqnarray}
Consider now $\mat{A}_i = \mat{A}^{t_1', \dots, t_L'}$, where $t_j'=t_j$ for $j \neq i$ and $t_i'=t_i+1$. Therefore, the $\Integers_{2^L}$-additive Hadamard code $\code{H}^{t_1, \dots, t_L}$ is the one generated by $\mat{A}^{t_1, \dots, t_L}$.

\begin{example}
\label{ex:hadamard-recursive}
The generator matrix $\mat{A}^{2,0,0}$ of the $\Integers_8$-additive Hadamard code $\code{H}^{2,0,0}$, as previously presented in Example~\ref{ex:hadamard-generator}, can be recursively constructed from $\mat{A}^{1,0,0}$ as
\begin{IEEEeqnarray*}{c}
\mat{A}^{2,0,0} = \begin{pNiceMatrix}
0 & 1 & \dots & 7 \\
\mat{A}^{1,0,0}  & \mat{A}^{1,0,0}  & \dots & \mat{A}^{1,0,0} 
\end{pNiceMatrix} = \begin{pNiceMatrix}
     0 & 1 & 2 & 3 & 4 & 5 & 6 & 7  \\
     1 & 1 & 1 & 1 & 1 & 1 & 1 & 1 
\end{pNiceMatrix}.
\end{IEEEeqnarray*}
\\[-10mm]\hfill\exampleend
\end{example}

Studies on the linearity of $\Integers_{2^L}$-linear codes obtained from $\Integers_{2^L}$-additive Hadamard codes can be found in~\cite{CordobaVelaVillanueva19,FernandezCVelaVillanueva_16,FernandezCVelaVillanueva_20}. It is known that $\gray(\code{H}^{1,0,\dots,0})$ is linear, from~\cite[Ex.~3]{CordobaVelaVillanueva19}.  
We now demonstrate the nonlinearity of a $\Integers_{2^L}$-linear code obtained from the $\Integers_{2^L}$-additive Hadamard code $\code{H}^{k,0,\dots,0}$, for $k \geq 2$ and $L \geq 3$.

\begin{theorem}
\label{thm:nonlinear-gray-hadamard}
The $\Integers_{2^L}$-linear code $\gray(\code{H}^{k,0,\dots,0})$ obtained from the Hadamard code $\code{H}^{k,0,\dots,0}$, with $k \geq 2$ and $L\geq 3$ is nonlinear.
\end{theorem}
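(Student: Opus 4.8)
The plan is to produce a single pair of codewords that violates the linearity criterion of Theorem~\ref{tapiavega}: I will exhibit $\vect{c},\vect{d}\in\code{H}^{k,0,\dots,0}$ with $2(\vect{c}\odot\vect{d})\notin\code{H}^{k,0,\dots,0}$. To prepare for this, I first unwind the definition. Since $t_1=k$ and $t_2=\cdots=t_L=0$, the generator matrix $\mat{A}^{k,0,\dots,0}$ has its columns indexed by all $\vect{z}=(z_1,\dots,z_{k-1},1)$ with $z_i\in\set{T}_1=\Integers_{2^L}$, and the codeword generated by a coefficient tuple $(\alpha_1,\dots,\alpha_k)$ evaluates to the affine form $\sum_{i=1}^{k-1}\alpha_i z_i+\alpha_k$ at the column $\vect{z}$. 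In particular, each generator row is itself one such evaluation.

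I would then take $\vect{c}$ to be the generator row that reads off the coordinate $z_1$ (the first row of $\mat{A}^{k,0,\dots,0}$, which exists precisely because $k\geq 2$), and $\vect{d}$ to be the all-ones generator row (the last row). At the column $\vect{z}$ the product $\vect{c}\odot\vect{d}$ equals $z_1\odot 1$, which is simply the least significant bit of $z_1$; hence $2(\vect{c}\odot\vect{d})$ takes the value $2$ on every column with $z_1$ odd and $0$ on every column with $z_1$ even.

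It then remains to verify that this vector is not a codeword, for which it suffices to inspect the three columns $\vect{z}=(0,0,\dots,0,1),(1,0,\dots,0,1),(2,0,\dots,0,1)$; these occur because $z_1$ runs over all of $\Integers_{2^L}$. If $2(\vect{c}\odot\vect{d})$ coincided with the codeword of some tuple $(\gamma_1,\dots,\gamma_k)$, then these three columns would force $\gamma_k=0$, then $\gamma_1=2$, and then $2\gamma_1\equiv 0\pmod{2^L}$, i.e.\ $4\equiv 0\pmod{2^L}$, which is false for $L\geq 3$. By Theorem~\ref{tapiavega} this makes $\gray(\code{H}^{k,0,\dots,0})$ nonlinear.

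The main obstacle is not the computation but the choice of detecting criterion. The weaker necessary condition of Corollary~\ref{coro:princ} (the associated codes being closed under the Schur product) is genuinely too coarse here: for these codes the associated codes may already satisfy $\code{C}_i\circ\code{C}_i\subseteq\code{C}_{i+1}$ (as happens already for $\code{H}^{2,0,0}$ over $\Integers_8$), so the failure of linearity comes entirely from the carry term that the full criterion of Theorem~\ref{tapiavega} (equivalently Theorem~\ref{thm:iff_condition_concatenated}) records but the associated codes do not. The delicate point is therefore to select $\vect{c},\vect{d}$ whose real-arithmetic doubling $2(\vect{c}\odot\vect{d})$ escapes the code; the pair consisting of the $z_1$-row and the all-ones row is the natural candidate, since $z_1\odot 1$ is a nonlinear function of $z_1$ over $\Integers_{2^L}$ whereas every codeword is affine in $\vect{z}$. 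The two hypotheses are exactly what keep the obstruction alive: $k\geq 2$ guarantees a genuine $z_1$-row, and $L\geq 3$ guarantees $4\not\equiv 0\pmod{2^L}$.
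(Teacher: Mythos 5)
Your proof is correct, and it takes a genuinely different route from the paper's. The paper first settles the base case $k=2$, $L=3$ by explicitly listing the associated code $\code{C}_2$ of $\code{H}^{2,0,0}$, exhibiting $\vect{u},\vect{v}\in\code{C}_2$ with $4(\vect{u}\circ\vect{v})\notin\code{H}^{2,0,0}$, and invoking Proposition~\ref{prop:connections-hammons-etal}; it then treats general $k\geq 2$, $L\geq 3$ via the recursive block structure of $\mat{A}^{k,0,\dots,0}$, arguing that the leading columns reproduce the base-case code, so nonlinearity is inherited by the whole Gray image. You instead argue uniformly in $k$ and $L$: codewords are evaluations of the affine forms $\vect{z}\mapsto\sum_{i=1}^{k-1}\gamma_i z_i+\gamma_k$ on the column labels, the carry $2(\vect{c}\odot\vect{d})$ of the $z_1$-row and the all-ones row equals twice the least significant bit of $z_1$, and three column evaluations show this vector cannot be such an affine form because $4\not\equiv 0\pmod{2^L}$ when $L\geq 3$. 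Your route buys a shorter, enumeration-free and projection-free argument; it also silently repairs a looseness in the paper's reduction step, which for $L>3$ appeals to ``the first eight columns'' giving $\code{H}^{2,0,0}$, although the restricted code there is defined over $\Integers_{2^L}$ rather than $\Integers_8$ (the relevant block is the first $2^L$ columns, so strictly the base case would have to be re-derived for each $L$ --- your computation does precisely that, for all $L$ at once). What the paper's template buys in exchange is reuse: the same base-case-plus-block-structure argument is applied again to the MacDonald codes in Sec.~\ref{sec:macdonald}.

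One claim in your closing discussion is false, although it sits outside the proof and does not affect its validity: you assert that the associated codes of $\code{H}^{2,0,0}$ over $\Integers_8$ satisfy $\code{C}_i\circ\code{C}_i\subseteq\code{C}_{i+1}$, so that Corollary~\ref{coro:princ} could not detect the nonlinearity. In fact $\vect{u}=(0,0,1,1,0,0,1,1)$ and $\vect{v}=(0,1,0,1,0,1,0,1)$ lie in $\code{C}_2$, and $\vect{u}\circ\vect{v}=(0,0,0,1,0,0,0,1)$ has Hamming weight $2$, whereas every vector of $\code{C}_3$ is the pattern of most significant bits of some affine map $j\mapsto\alpha_1 j+\alpha_2$ on $\Integers_8$ and therefore has weight $0$, $4$, or $8$. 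Hence $\code{C}_2\circ\code{C}_2\not\subseteq\code{C}_3$, and Corollary~\ref{coro:princ} alone already certifies the nonlinearity of $\gray(\code{H}^{2,0,0})$. (Your remark agrees with the two checkmarks recorded for $\langle(0,1,\dots,7),(1,\dots,1)\rangle$ in Table~\ref{tab:table-gray-linearity}, but that row, which declares this code linear, is itself inconsistent with Theorem~\ref{thm:nonlinear-gray-hadamard}.) The carry-based obstruction you exploit is valid; it is just not the only obstruction present.
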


\begin{proof} Consider the case $k=2$ and $L=3$, which is the $\Integers_8$-additive Hadamard code $\code{H}^{2,0,0}$ (see Example~\ref{ex:hadamard-recursive}). Notice that the associated code $\code{C}_2$ is given by
\begin{IEEEeqnarray*}{rCl}
\code{C}_2 & = & \{ (0,0,0,0,0,0,0,0), (0,0,1,1,0,0,1,1), (0,1,0,1,0,1,0,1), (0,1,1,0,0,1,1,0),  \\
& & (1,1,1,1,1,1,1,1), (1,1,0,0,1,1,0,0), (1,0,1,0,1,0,1,0, (1,0,0,1,1,0,0,1) \}.    
\end{IEEEeqnarray*}
We can verify that for $\vect{u} = (0,0,1,1,0,0,1,1), \vect{v} = (0,1,0,1,0,1,0,1) \in \code{C}_2$, $4( \vect{u} \circ \vect{v} )\notin \code{H}^{2,0,0}$ and from Proposition~\ref{prop:connections-hammons-etal}, the code $\gray(\code{H}^{2,0,0})$ is nonlinear. Due to the recursive nature of the generator matrix of the Hadamard code, for $L>3$ and $k \geq 2$ the first eight columns of $\mat{A}^{k,0,\dots,0}$ are composed by row blocks of $\vect{0}$ and the last row block is $\mat{A}^{2,0,0}$, meaning that the first eight coordinates are exactly the coordinates of $\code{H}^{2,0,0}$. Since $\gray(\code{H}^{2,0,0})$ is nonlinear, the proof is complete.
\end{proof}

The result of Theorem~\ref{thm:nonlinear-gray-hadamard} will be of particular interest for the nonlinearity of $\Integers_{2^L}$-linear codes obtained from MacDonald codes (more details in Sec.~\ref{sec:macdonald}).

\subsection{Simplex Codes}

Next we present the constructions of $\mathbb{Z}_{2^L}$-additive simplex codes of type $\alpha$ and $\beta$~\cite{nonlinearity}.

\begin{definition}[Simplex code of type $\alpha$]
\label{def:simplex_code_alpha}
The $\mathbb{Z}_{2^L}$-additive simplex code of type $\alpha$, denoted as $\code{S}_k ^{\alpha}$, is the code whose generator matrix is inductively constructed as
\begin{IEEEeqnarray}{c}
\mat{G}_1 ^{\alpha} = \begin{pNiceMatrix} 0 & 1 & 2 & \ldots & 2^L -1 \end{pNiceMatrix}, \text{ and } \mat{G}_k ^{\alpha}=\left(\begin{array}{ccccc}
     \mathbf{0}& \mathbf{1}& \mathbf{2}& \ldots & \mathbf{2^L -1}  \\
     \mat{G}_{k-1}^{\alpha} & \mat{G}_{k-1}^{\alpha} & \mat{G}_{k-1}^{\alpha} & \ldots & \mat{G}_{k-1}^{\alpha}
\end{array}\right),       
\end{IEEEeqnarray}
$L \geq 2$, where $\mathbf{0}, \mathbf{1}, \mathbf{2}, \ldots, \mathbf{2^L -1}$ are the all-zero, all-one, $\ldots$ , all-$(2^L -1)$ vectors, respectively.
\end{definition}

\begin{definition}[Simplex code of type $\beta$]
\label{def:simplex_code_beta}
The $\mathbb{Z}_{2^L}$-additive simplex code of type $\beta$, denoted as $\code{S}_k^{\beta}$, is the code whose generator matrix is inductively constructed as
\begin{IEEEeqnarray}{rCl}
\label{eq:generator-simplexbeta}
\mat{G}_1^{\beta} &=& \begin{pNiceMatrix} 1 \end{pNiceMatrix},\,\, \mat{G}_2 ^{\beta}=\begin{pNiceMatrix}
     \mathbf{1}& 0& 2& \ldots & 2^L -2  \\
     0 \,\,1\,\, 2\,\, \ldots\,\, 2^L -1 & 1 & 1 & \ldots & 1
\end{pNiceMatrix}, \text{ and} \nonumber \\
\mat{G}_k ^{\beta}&=&\begin{pNiceMatrix}
     \mathbf{1}& \mathbf{0} & \mathbf{2} & \ldots & \mathbf{2^L -2}  \\
     \mat{G}_{k-1}^{\alpha} & \mat{G}_{k-1}^{\beta} & \mat{G}_{k-1}^{\beta} & \ldots &\mat{G}_{k-1}^{\beta}  \\
\end{pNiceMatrix},  
\end{IEEEeqnarray}
for $L\geq 2$, where $\mathbf{0}, \mathbf{1}, \mathbf{2}, \ldots, \mathbf{2^L -1}$ are the all-zero, all-one, $\ldots$ , all-$(2^L -1)$ vectors, respectively.
\end{definition}

The Gray image $\gray(\code{S}_k ^{\beta})$ of the simplex codes of type $\beta$ is uniformly packed~\cite[p.~240]{GuptaBhandariLal_05}. That is, given the covering radius  $\rho$ of $\gray(\code{S}_k ^{\beta})$, there exists $\kappa_i \in \Rationals$, $i\in[1:\rho]$, such that for any $\vect{x} \in \Integers_2^n$, $\sum_{i=0}^{\rho} \kappa_i f_i(\vect{x}) =1$, where $f_i(\vect{x})$ is the number of codewords of $\gray(\code{S}_k ^{\beta})$ at distance $i \in [1:\rho]$ from $\vect{x}$~\cite{BassalygoZaitsevZinovev_74}.

\begin{example}\label{excode}
For $L=3$ and $k=1$, let $\mat{G}_1^{\alpha}=(0\,\,1\,\,2\,\,3\,\,4\,\,5\,\,6\,\,7)\in \mathbb{Z}_8 ^8$ be the generator matrix of $\code{S}_1 ^{\alpha}$.
Thus, the chain of binary linear codes $\code{C}_1 \subseteq \code{C}_2 \subseteq \code{C}_3$ associated with $\code{S}_1 ^{\alpha}$ (see Example~\ref{ex:tela-preta-gustavo}) is
\begin{IEEEeqnarray*}{rCl}
\code{C}_1&=&\left\{(0,0,0,0,0,0,0,0),(0,1,0,1,0,1,0,1)\right\},\\
\code{C}_2&=&\left\{(0,0,0,0,0,0,0,0),(0,1,0,1,0,1,0,1),(0,0,1,1,0,0,1,1) ,(0,1,1,0,0,1,1,0)\right\}\\
\code{C}_3&=&\left\{(0,0,0,0,0,0,0,0),(0,1,0,1,0,1,0,1),(0,0,1,1,0,0,1,1) ,(0,1,1,0,0,1,1,0),\right.\\
&&\left.(0,0,0,0,1,1,1,1), (0,0,1,0,1,1,0,1),(0,1,0,1,1,0,1,0),(0,1,1,1,1,0,0,0) \right\}.
\end{IEEEeqnarray*}

Notice that $(0,1,0,1,0,1,0,1),(0,0,1,1,0,0,1,1)\in \code{C}_2$, but 
\begin{IEEEeqnarray*}{c}
(0,1,0,1,0,1,0,1) \circ (0,0,1,1,0,0,1,1)= (0,0,0,1,0,0,0,1)\not\in \code{C}_3, 
\end{IEEEeqnarray*}
namely, $\code{C}_2^2 \not\subset \code{C}_3$. Therefore, from Corollary~\ref{coro:princ}, $\gray\left(\code{S}_1 ^{\alpha}\right)$ is nonlinear. \hfill\exampleend
\end{example}

\begin{example}
\label{ex31}
For $L=k=2$, the $\Integers_{2^L}$-linear code from the $\mathbb{Z}_4$-additive code $\code{S}_2 ^{\alpha}$ is not linear. Indeed, from the generator matrix 
\begin{IEEEeqnarray}{c}
\label{eq:generator-simplex}
\mat{G}_{2} ^{\alpha} = \left(\begin{array}{cccccccccccccccc} 0& 0& 0& 0 &1& 1& 1& 1&2& 2& 2& 2&3& 3& 3& 3 \\ 0&1&2&3&0&1&2&3&0&1&2&3&0&1&2&3\end{array} \right),
\end{IEEEeqnarray}
and consider $\code{C}_1 ,\code{C}_2$ the respective binary codes associated to $\code{S}_2 ^{\alpha}$. We can verify that, given the following codewords of $\code{C}_1$ obtained from the basis (row) vectors in~\eqref{eq:generator-simplex},
\begin{IEEEeqnarray*}{c}
\vect{u}_1= (0,0,0,0,1,1,1,1,0,0,0,0,1,1,1,1), ~ \vect{v}_1 =(0,1,0,1,0,1,0,1,0,1,0,1,0,1,0,1),
\end{IEEEeqnarray*} 
$\vect{u}_1 \circ \vect{v}_1 = (0,0,0,0,0,1,0,1,0,0,0,0,0,1,0,1) \not\in \code{C}_2$. From Corollary~\ref{coro:princ}, $\gray\left(\code{S}_2 ^{\alpha}\right)$ is nonlinear.
\hfill\exampleend
\end{example}

We present now an alternative proof of~\cite[Cor.~3.2]{nonlinearity} and \cite[Cor.~3.3]{nonlinearity} using results from Sec.~\ref{sec:linearity_z2l}. The main idea is that the binary 
codes associated with $\mathbb{Z}_{2^L}$-additive simplex codes of type $\alpha$ and $\beta$, with $L\geq 3$, are not closed under Schur product. 
 
\begin{theorem}[{\cite[Cor.~3.2]{nonlinearity}}~revisited]
\label{th31}
The $\Integers_{2^L}$-linear code $\gray(\code{S}_k ^{\alpha})$ obtained from a $\mathbb{Z}_{2^L}$-additive simplex code $\code{S}_k ^{\alpha}$ is nonlinear for all $L\geq 2$ and $k\geq  1$; except for $L=2$ and $k=1$.
\end{theorem}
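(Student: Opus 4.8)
The plan is to prove nonlinearity by producing, in each relevant case, a pair of codewords $\vect{c},\vect{d}\in\code{S}_k^{\alpha}$ with $2(\vect{c}\odot\vect{d})\notin\code{S}_k^{\alpha}$, so that Theorem~\ref{tapiavega} directly gives that $\gray(\code{S}_k^{\alpha})$ is not linear. (Equivalently one may phrase everything through Corollary~\ref{coro:princ}, showing the associated codes are not closed under the Schur product, as in Examples~\ref{excode} and~\ref{ex31}; both formulations detect the same obstruction.) I would organize the argument as a monotonicity step in $k$ together with two base cases.

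First I would record a monotonicity lemma: if $\gray(\code{S}_{k-1}^{\alpha})$ is nonlinear then so is $\gray(\code{S}_{k}^{\alpha})$. The key fact is purely structural. By Definition~\ref{def:simplex_code_alpha}, in the first block of columns of $\mat{G}_k^{\alpha}$ (the block under the all-zero vector $\mathbf{0}$) the top row is identically zero and the remaining rows are $\mat{G}_{k-1}^{\alpha}$; hence the restriction of $\code{S}_k^{\alpha}$ to these coordinates is exactly $\code{S}_{k-1}^{\alpha}$. Given failing codewords $\vect{c}',\vect{d}'\in\code{S}_{k-1}^{\alpha}$ with $2(\vect{c}'\odot\vect{d}')\notin\code{S}_{k-1}^{\alpha}$ (these exist by Theorem~\ref{tapiavega}), I would lift them to $\code{S}_k^{\alpha}$ by taking the codewords using only the bottom $k-1$ generator rows, i.e. the $2^L$-fold repetitions $(\vect{c}',\dots,\vect{c}')$ and $(\vect{d}',\dots,\vect{d}')$. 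Since $\odot$ acts coordinate-wise, $2(\vect{c}\odot\vect{d})$ is the corresponding repetition of $2(\vect{c}'\odot\vect{d}')$; were it a codeword of $\code{S}_k^{\alpha}$, its restriction to the first block would place $2(\vect{c}'\odot\vect{d}')$ in $\code{S}_{k-1}^{\alpha}$, a contradiction.

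It then remains to settle the two base cases. For $L=2$ and $k=2$ the nonlinearity is precisely Example~\ref{ex31}, which I would invoke; with the monotonicity step this covers every $(2,k)$ with $k\ge 2$. For $L\ge 3$ and $k=1$ I would take $\vect{c}=\mathbf{g}:=(0,1,2,\dots,2^L-1)$ and $\vect{d}=2\mathbf{g}$, and evaluate $2(\vect{c}\odot\vect{d})$ coordinate-wise using that $\odot$ is the bitwise AND of binary expansions. At the coordinate where $\mathbf{g}$ takes the value $1$ one gets $2(1\wedge 2)=0$, whereas at the coordinate where $\mathbf{g}$ takes the value $3$ one gets $2(3\wedge 6)=2\cdot 2=4\neq 0$ (here $6=2\cdot 3<2^L$ uses $L\ge 3$). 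A nonzero codeword $j\mathbf{g}$ has entry $j$ at the coordinate of value $1$, so vanishing there forces $j=0$ and hence the whole word to vanish, contradicting the nonzero entry at the coordinate of value $3$. Thus $2(\vect{c}\odot\vect{d})\notin\code{S}_1^{\alpha}$, matching the phenomenon seen for $L=3$ in Example~\ref{excode}. Monotonicity extends this to all $k\ge 1$ whenever $L\ge 3$, and the two families of base cases jointly account for every pair except $(L,k)=(2,1)$, which is genuinely excluded since there the only required closure $\code{C}_1\circ\code{C}_1\subseteq\code{C}_2$ holds.

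The main obstacle I anticipate is the bookkeeping in the base case $L\ge 3$: one must show the single vector $2(\vect{c}\odot\vect{d})$ lies outside the code for all $L\ge 3$ at once, rather than verifying one value of $L$, and this rests on evaluating the bitwise-AND operation $\odot$ uniformly in $L$. Choosing the two diagnostic coordinates (values $1$ and $3$ of $\mathbf{g}$) is what makes this uniform and avoids a case split on $L$. A secondary point to check carefully is that the restriction of $\code{S}_k^{\alpha}$ to its first column block is genuinely all of $\code{S}_{k-1}^{\alpha}$, which is what legitimizes the monotonicity reduction; this follows from the recursive shape of $\mat{G}_k^{\alpha}$ once one observes that the top row vanishes on that block.
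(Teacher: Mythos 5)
Your nonlinearity argument is correct, and it follows the same underlying strategy as the paper---exploit the recursive structure of $\mat{G}_k^{\alpha}$ to reduce to a small base case where the Schur-product obstruction is exhibited---but you execute it more carefully in two places. You make the reduction in $k$ an explicit induction: the restriction of $\code{S}_k^{\alpha}$ to the first column block is exactly $\code{S}_{k-1}^{\alpha}$, so blockwise repetition of a failing pair transports the failure of the condition of Theorem~\ref{tapiavega} from $\code{S}_{k-1}^{\alpha}$ up to $\code{S}_{k}^{\alpha}$; the paper only gestures at this step (``given the recursive structure'') for $L=2$, $k>2$. More importantly, your base case for $L\geq 3$, $k=1$ (the pair $\vect{c}=\vect{g}=(0,1,\ldots,2^L-1)$, $\vect{d}=2\vect{g}$, checked at the coordinates where $\vect{g}$ equals $1$ and $3$) is uniform in $L$, whereas the paper invokes Example~\ref{excode}, which is literally only the case $L=3$, and extends to $L>3$ via an appeal to ``the first eight coordinates'' of $\code{S}_1^{\alpha}$ that is left unverified (for $L>3$ the restriction to those coordinates is $\left\{j\cdot(0,1,\ldots,7) \bmod 2^L : j\in\Integers_{2^L}\right\}$, whose associated codes are not those of Example~\ref{excode}). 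Your computation closes that gap; working with Theorem~\ref{tapiavega} directly instead of Corollary~\ref{coro:princ} is an interchangeable choice here.

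The genuine flaw is your final claim that the case $(L,k)=(2,1)$ is settled ``since there the only required closure $\code{C}_1\circ\code{C}_1\subseteq\code{C}_2$ holds.'' Closure of the associated codes under the Schur product is the \emph{necessary} condition of Corollary~\ref{coro:princ}; it is not sufficient for linearity, and the paper's own Example~\ref{ex:nordstrom-robinson-condition} is precisely a counterexample to sufficiency: the octacode satisfies $\code{C}_1\circ\code{C}_1\subseteq\code{C}_2$, yet the Nordstrom-Robinson code $\gray(\code{O})$ is nonlinear. Since the paper's proof of this theorem includes showing that $\gray(\code{S}_1^{\alpha})$ \emph{is} linear for $L=2$, your treatment of the exceptional case does not do that job as written. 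The repair is one line: here $\code{C}_1\circ\code{C}_1=\code{C}_1=\left\{(0,0,0,0),(0,1,0,1)\right\}$ and $2\cdot(0,1,0,1)=(0,2,0,2)=2\cdot(0,1,2,3)\in\code{S}_1^{\alpha}$, so $2(\code{C}_1\circ\code{C}_1)\subseteq\code{S}_1^{\alpha}$ and Proposition~\ref{prop:connections-hammons-etal} (a genuine if-and-only-if condition, unlike the corollary) yields linearity---which is exactly the paper's argument for this case.
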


\begin{proof}
We start by showing that for $L=2$ and $k=1$, $\gray(\code{S}_1^{\alpha})$ is linear. Indeed, such code is generated by $\mat{G}_1^{\alpha} = \begin{pNiceMatrix} 0 & 1 & 2 & 3 \end{pNiceMatrix}$, and $\code{C}_1^2=\code{C}_1=\{(0,0,0,0),(0,1,0,1)\}$. Notice that $2\code{C}_1^2 \subseteq \code{S}_1^{\alpha}$ and from Proposition~\ref{prop:connections-hammons-etal}, it follows that $\gray(\code{S}_1^{\alpha})$ is linear, for $L=2$.

For $L=k=2$, Example~\ref{ex31} shows that $\gray(\code{S}_2 ^{\alpha})$ is nonlinear, and given the recursive structure of the simplex code~\ref{def:simplex_code_alpha}, for $L=2$ and any $k>2$, $\gray(\code{S}_k ^{\alpha})$ is also nonlinear.

For $L\geq 3$ and $k\geq 1$, let $\mat{G}_k ^{\alpha}$ be the generator matrix of a $\mathbb{Z}_{2^L}$-additive simplex code $\code{S}_k ^{\alpha}$. The first $2^L$ coordinates of each row are composed by zeros; except the last one, which has $0,\,1,\,2,\,\ldots\, ,2^L -1 $ on its first $2^L$ coordinates. Thus, we can see that the first $2^L$ coordinates of any codeword of $\code{S}_k ^{\alpha}$ ary determined by the codewords of $\code{S}_1 ^{\alpha}$ and if the chain of binary codes associated with $\code{S}_1 ^{\alpha}$ is not closed under Schur product, consequently the chain associated with $\code{S}_k^{\alpha}$ is not closed under Schur product either. Based on Corollary~\ref{coro:princ}, we can then conclude that $\gray\left(\code{S}_k ^{\alpha}\right)$ is nonlinear.

It is enough to verify that the chain of binary codes associated to the first eight coordinates of $\code{S}_1 ^{\alpha}$ is not closed under Schur product, for $L\geq 3$. These associated codes are the ones from Example~\ref{excode} and already proven not to be closed under Schur product. Thus, from Corollary~\ref{coro:princ}, the respective $\Integers_{2^L}$-linear code is nonlinear. 
\end{proof}

\begin{theorem}[{\cite[Cor.~3.3]{nonlinearity}}~revisited]
\label{th32}
The $\Integers_{2^L}$-linear code $\gray(\code{S}_k ^{\beta})$ obtained from a  $\mathbb{Z}_{2^L}$-additive simplex $\code{S}_k ^{\beta}$ code is nonlinear for all $L\geq 2$ and $k\geq  2$.    
\end{theorem}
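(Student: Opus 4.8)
The plan is to invoke the carry criterion of Theorem~\ref{tapiavega}: $\gray(\code{S}_k^{\beta})$ is linear if and only if $2(\vect{c}\odot\vect{d})\in\code{S}_k^{\beta}$ for all $\vect{c},\vect{d}\in\code{S}_k^{\beta}$. I would exhibit a single pair of codewords whose carry term $2(\vect{c}\odot\vect{d})$ falls outside the code, and then propagate this one violation from $\code{S}_2^{\beta}$ to every $\code{S}_k^{\beta}$, $k\geq 2$, using the recursive block structure~\eqref{eq:generator-simplexbeta}. Equivalently, the obstruction can be phrased through the associated codes and Corollary~\ref{coro:princ}, exactly as in the type-$\alpha$ argument of Theorem~\ref{th31}.

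First I would settle the base case $k=2$. Write $\vect{g}_0=(\mathbf{1},0,2,\dots,2^L-2)$ and $\vect{g}_1=(0,1,\dots,2^L-1,\mathbf{1})$ for the two generating rows of $\mat{G}_2^{\beta}$. Since $\odot$ is the coordinate-wise bitwise AND coming from~\eqref{operacoes}, on the first $2^L$ coordinates the top value $1$ meets the bottom value $j$ and $1\odot j=j\bmod 2$, while on the remaining coordinates every even value meets $1$ and yields $0$; hence $2(\vect{g}_0\odot\vect{g}_1)=(0,2,0,2,\dots,0,2\mid \vect{0})$. I would then show this vector is not a codeword by matching its first block against a general codeword $a\vect{g}_0\oplusq{2^L} b\vect{g}_1$: the value-$0$ coordinate forces $a=0$ and the value-$1$ coordinate forces $b=2$, after which the value-$2$ coordinate already fails for $L\geq 3$ (it would demand $4\equiv 0 \bmod 2^L$), whereas for $L=2$ the forced pair $a=0,\,b=2$ is consistent on the first block but produces the all-twos vector rather than $\vect{0}$ on the even-valued tail. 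Either way $2(\vect{g}_0\odot\vect{g}_1)\notin\code{S}_2^{\beta}$, so $\gray(\code{S}_2^{\beta})$ is nonlinear for every $L\geq 2$.

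For the inductive step ($k\geq 3$) I would use the second column-block of~\eqref{eq:generator-simplexbeta}, whose top row is $\mathbf{0}$ and whose lower part is $\mat{G}_{k-1}^{\beta}$. Restricting codewords of $\code{S}_k^{\beta}$ to this block annihilates the top generator and sends the remaining generators onto the rows of $\mat{G}_{k-1}^{\beta}$, so the restriction is a surjection $\code{S}_k^{\beta}\to\code{S}_{k-1}^{\beta}$. Because both the product $\odot$ and the binary decomposition act coordinate-wise, this restriction commutes with the formation of the carry term $2(\vect{c}\odot\vect{d})$. Lifting the witnessing pair of $\code{S}_{k-1}^{\beta}$ to arbitrary preimages $\vect{c},\vect{d}\in\code{S}_k^{\beta}$, the vector $2(\vect{c}\odot\vect{d})$ restricts on the $\mathbf{0}$-block to the $\code{S}_{k-1}^{\beta}$ carry term, which by the induction hypothesis lies outside $\code{S}_{k-1}^{\beta}$; hence $2(\vect{c}\odot\vect{d})\notin\code{S}_k^{\beta}$ and $\gray(\code{S}_k^{\beta})$ is nonlinear. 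Iterating down the successive $\mathbf{0}$-blocks reduces every $k\geq 2$ to the base case.

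The main obstacle is the base-case non-membership, which I must argue uniformly in $L$: the subtle point is that the first block over-determines $(a,b)$, and one has to check that the unique forced solution $a=0,\,b=2$ is inconsistent with the target (on the value-$2$ coordinate when $L\geq 3$, and on the even-valued tail when $L=2$). The only other step requiring care is verifying that coordinate restriction to the $\mathbf{0}$-block genuinely commutes with the binary decomposition, so that membership in the code is inherited by the restriction; this is routine, but it is precisely what legitimizes the inductive descent.
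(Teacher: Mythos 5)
Your proposal is correct, and it follows the same broad strategy as the paper (exhibit a violation for $k=2$, then propagate it through the recursive block structure of $\mat{G}_k^{\beta}$), but both steps are executed differently. For the base case, the paper works only with $L=2$, $k=2$: it enumerates the associated codes $\code{C}_1,\code{C}_2$ of $\code{S}_2^{\beta}\subset\Integers_4^6$, finds a Schur product $\vect{u}_1\circ\vect{v}_1\notin\code{C}_2$, and invokes Corollary~\ref{coro:princ}; it then extends to $L>2$ by claiming that the $L=2$ matrix $\mat{G}_2^{\beta}$ is, up to permutation, a submatrix of the general-$L$ matrix $\mat{G}_2^{\beta}$. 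You instead verify the carry criterion of Theorem~\ref{tapiavega} directly and uniformly in $L$: the computation $2(\vect{g}_0\odot\vect{g}_1)=(0,2,0,2,\dots,0,2\mid\vect{0})$ and the over-determination of $(a,b)$ by the first block (with the $L\geq 3$ failure at the value-$2$ coordinate and the $L=2$ failure on the tail) are accurate, and they dispose of all $L\geq 2$ at once. This buys you something real: the paper's cross-$L$ submatrix step is delicate, because the associated code $\code{C}_2$ of the $\Integers_{2^L}$-code restricted to those six coordinates need not coincide with the $\code{C}_2$ computed over $\Integers_4$, so the $L=2$ violation does not transfer completely for free; your uniform computation sidesteps that issue entirely. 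For the propagation in $k$, the paper iterates the block expansion of $\mat{M}_0^{\beta}$ down to an embedded copy of $\mat{G}_2^{\beta}$, while you organize the same idea as a formal induction via the coordinate restriction $\code{S}_k^{\beta}\to\code{S}_{k-1}^{\beta}$ onto the $\mathbf{0}$-block; your observation that this restriction is surjective and commutes with $\vect{c},\vect{d}\mapsto 2(\vect{c}\odot\vect{d})$ is exactly what makes the descent rigorous, and it is the cleaner formulation of the paper's ``the first $3\cdot 2^{L-1}$ coordinates coincide with $\code{S}_2^{\beta}$'' argument.
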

\begin{proof}
%
Consider the case $L=2$ and $k=2$. The generator matrix of $\code{S}_2 ^{\beta}$ is 
\begin{IEEEeqnarray}{c}
\label{eq:generator-k2-l2}
\mat{G}_2^{\beta} =\begin{pNiceMatrix}
   1 & 1 & 1 & 1 & 0 & 2  \\
   0 & 1 & 2 & 3 & 1 & 1  
\end{pNiceMatrix}.   
\end{IEEEeqnarray}
Therefore, the code $\code{S}_2 ^{\beta}$ has the following two associated codes
\begin{eqnarray*}
\code{C}_1 &=&\left\{(0,0,0,0,0,0),(0,1,0,1,1,1),(1,0,1,0,1,1),(1,1,1,1,0,0) \right\},\\
\code{C}_2 &=&\left\{(0,0,0,0,0,0),(0,1,0,1,1,1),(1,0,1,0,1,1),(1,1,1,1,0,0),(0,0,1,1,0,0),\right.\\
&&\left. (0,1,1,0,1,1),(0,0,1,1,1,0),(0,0,0,0,0,1),(0,1,1,0,0,1),(0,1,0,1,1,0),\right.\\
&&\left. (1,0,0,1,1,1),(1,1,0,0,0,0),(1,0,0,1,0,1),(1,0,1,0,1,0),(1,1,0,0,1,0),\right.\\
&&\left.(1,1,1,1,0,1)\right\}.
\end{eqnarray*}
Notice that, for $(0,1,0,1,1,1), (1,0,1,0,1,1)\in \code{C}_1$, we have $(0,1,0,1,1,1)\circ (1,0,1,0,1,1)=(0,0,0,0,1,1)\not\in \code{C}_2$, namely, $\code{C}_1  \subset \code{C}_2$, but $\code{C}_1 ^{2}\not\subset \code{C}_2$. We can then conclude, from Corollary~\ref{coro:princ} that $\gray(\code{S}_2 ^{\beta})$ is a nonlinear code. For $L \geq 2$ and $k \geq 2$, excluding $L=k=2$, observe that in~\eqref{eq:generator-simplexbeta}, there is a submatrix/block $\mat{M}_0^{\beta} = \begin{pNiceMatrix} \mathbf{0} \\ 
\mat{G}^{\beta}_{k-1} \end{pNiceMatrix}$ in the matrix $\mat{G}_k^{\beta}$ generating $\code{S}_k ^{\beta}$. By expanding $\mat{M}_0^{\beta}$, we get
\begin{IEEEeqnarray*}{c}
\label{eq:expansion-M0}
\mat{M}_0^{\beta} = \begin{pNiceMatrix} \mathbf{0} \\ 
\mat{G}^{\beta}_{k-1} \end{pNiceMatrix} =   \begin{pNiceMatrix} \mathbf{0} & \mathbf{0} & \dots & \mathbf{0}  \\ 
\vect{1} & \vect{0} & \dots & \vect{2^{L}-2} \\
\mat{G}^{\alpha}_{k-2} & \mat{G}^{\beta}_{k-2} & \dots & \mat{G}^{\beta}_{k-2} \end{pNiceMatrix} \stackrel{(\dots)}{=}  \begin{pNiceMatrix} \mathbf{0} & \dots & \mathbf{0} & \dots & \mathbf{0}  \\ 
\vect{1} & \dots & \vect{0} & \dots & \vect{2^{L}-2} \\
\vdots & \dots & \vdots & \dots & \vdots \\
\mat{G}^{\alpha}_{2} & \dots & \mat{G}^{\beta}_{2} & \dots & \mat{G}^{\beta}_{2} \end{pNiceMatrix},
\end{IEEEeqnarray*}
and thus, we can see that there are $3 \cdot 2^{L-1}$ coordinates of the code $\code{S}_k ^{\beta}$ that coincide with the codewords of $\code{S}_2^{\beta}$. Recall that its generator matrix is

\begin{IEEEeqnarray*}{c}
\mat{G}_2 ^{\beta}=\begin{pNiceMatrix}
\label{eq:generator-simplex-beta}
     \mathbf{1}& 0& 2& \ldots & 2^L -2  \\
     0 \,\,1\,\, 2\,\, \ldots\,\, 2^L -1 & 1 & 1 & \ldots & 1
\end{pNiceMatrix}
\end{IEEEeqnarray*}
and up to a permutation, $\mat{G}_2 ^{\beta}$ for $L=2$ as in~\eqref{eq:generator-k2-l2} is a submatrix of $\mat{G}_2 ^{\beta}$ for general $L>2$. Since the associated codes are not closed under Schur product for $L=2$ and $k=2$, the same conclusion applies and $\gray(\code{S}_k ^{\beta})$ is nonlinear for all $L\geq 2$ and  $k\geq  2$.
\end{proof}

\subsection{MacDonald Codes}
\label{sec:macdonald}

The constructions of $\mathbb{Z}_{2^L}$-additive simplex codes of types $\alpha$ and $\beta$ imply in $\mathbb{Z}_{2^L}$-additive MacDonald codes of types $\alpha$ and $\beta$, respectively. They were initially proposed in~\cite{ColbournGupta} for the quaternary case, and extended by~\cite{nonlinearity} for any $L>2$.

\begin{definition}[MacDonald codes]
\label{def:macdonald-codes}
MacDonald codes of type $\alpha$ and $\beta$, denoted respectively by $\code{M}_{k,u}^{\alpha}$ and $\code{M}_{k,u}^{\beta}$, are generated by the following matrices over $\Integers_{2^L}$
\begin{IEEEeqnarray}{C}
\mat{G}_{k,u} ^{\alpha} = 
\begin{pNiceMatrix}
     \mathbf{0} & \mathbf{1}& \mathbf{2}& \ldots & \mathbf{2^L -1}  \\
     \mat{G}_{k-1,u}^{\alpha} & \mat{G}_{k-1}^{\alpha} & \mat{G}_{k-1}^{\alpha} & \ldots & \mat{G}_{k-1}^{\alpha}
\end{pNiceMatrix},~
\mat{G}_{k,u} ^{\beta} = 
\begin{pNiceMatrix}
     \mathbf{1}& \mathbf{0}& \mathbf{2}& \ldots & \mathbf{2^L -2}  \\
     \mat{G}_{k-1}^{\alpha} & \mat{G}_{k-1,u}^{\beta} & \mat{G}_{k-1}^{\beta} & \ldots & \mat{G}_{k-1}^{\beta}
\end{pNiceMatrix}, \nonumber \\
\end{IEEEeqnarray}
where $\mat{G}_{k,u} ^{\alpha}$ is obtained from $\mat{G}_k ^{\alpha}$ deleting the columns associated with $\mat{G}_u ^{\alpha}$, for $u \in [1: k-1]$, namely, $\mat{G}_{k,u} ^{\alpha} = \left(G_k ^{\alpha} \,\, \setminus \,\, \frac{\mathbf{0}}{\mat{G}_{u} ^{\alpha}}\right)$, where $(\mat{A}\setminus \mat{B})$ denotes the submatrix of $\mat{A}$ obtained deleting the submatrix $\mat{B}$ from $\mat{A}$. The definition of $\mat{G}_{k,u} ^{\beta}$ is analogous.
\end{definition}

\begin{example} \label{ex:mac-donald-codes} For $u=k-1$, the MacDonald codes $\code{M}_{k,k-1}^{\alpha}$ and $\code{M}_{k,k-1}^{\beta}$ are respectively generated by
\begin{IEEEeqnarray*}{C}
\label{MacDonald}
\mat{G}_{k,k-1} ^{\alpha} = 
\begin{pNiceMatrix}
     \mathbf{1}& \mathbf{2}& \ldots & \mathbf{2^L -1}  \\
    \mat{G}_{k-1}^{\alpha} & \mat{G}_{k-1}^{\alpha} & \ldots & \mat{G}_{k-1}^{\alpha}
\end{pNiceMatrix},~
\mat{G}_{k,k-1} ^{\beta} = 
\begin{pNiceMatrix}
     \mathbf{1}&  \mathbf{2}& \ldots & \mathbf{2^L -2}  \\
     \mat{G}_{k-1}^{\alpha} & ~ ~ ~ ~ ~ \mat{G}_{k-1}^{\beta} & \ldots & \mat{G}_{k-1}^{\beta}
\end{pNiceMatrix}.  ~ ~ ~ ~ ~\exampleend
\end{IEEEeqnarray*} 
\end{example}

Matrices in Example~\ref{ex:mac-donald-codes} can be written in terms of $A^{k,0,\dots,0}$~\eqref{eq:recursive-hadamard}, according to~\cite{nonlinearity}. Observe that $\mat{G}_k^\alpha$ coincides with $\mat{G}_{k,u}^\alpha$, for $u \in [1: k-1]$, after deleting the first $2^{Lu}$ columns of $\mat{G}_k^\alpha$, i.e., the submatrix such that its first $k-u$ rows are all zeros. Analogously, $\mat{G}_k^\beta$ coincides with $\mat{G}_{k,u}^\beta$, for $u \in [1: k-1]$. Furthermore, 
$
\begin{pNiceMatrix}
\vect{1} \\
\mat{G}_{k-1}^{\alpha} 
\end{pNiceMatrix} = 
\hat{\mat{A}}^{k,0,\dots,0},$
where $\hat{\mat{A}}^{k,0,\dots,0}$
refers to the matrix $\mat{A}^{k,0,\dots,0}$ after moving the last row to the top of the matrix. Then, the generator matrices of the $\Integers_{2^L}$-additive MacDonald codes of type $\alpha$ and $\beta$ can be rewritten as~\cite[(10) and (11)]{nonlinearity}, for $k \in \Naturals$ and $u \in [1: k-1]$,
\begin{IEEEeqnarray}{c}
\label{eq:mac-alpha-reduced}
\mat{G}_{k,u} ^{\alpha} = 
\begin{pNiceMatrix}[margin]
     \mathbf{0} & \Block{2-1}{\hat{\mat{A}}^{k,0,\dots,0}} & \mathbf{2}& \ldots & \mathbf{2^L -1}  \\
     \mat{G}_{k-1,u}^{\alpha} &  \hspace*{1cm}  & \mat{G}_{k-1}^{\alpha} & \ldots & \mat{G}_{k-1}^{\alpha}
\end{pNiceMatrix}, \nonumber \\ 
\mat{G}_{k,u} ^{\beta} = 
\begin{pNiceMatrix}[margin]
     \Block{2-1}{\hat{\mat{A}}^{k,0,\dots,0}} & \mathbf{0}& \mathbf{2}& \ldots & \mathbf{2^L -2}  \\
     \hspace*{1cm} & \mat{G}_{k-1,u}^{\beta} & \mat{G}_{k-1}^{\beta} & \ldots & \mat{G}_{k-1}^{\beta}
\end{pNiceMatrix}.
\end{IEEEeqnarray}

We present next an alternative proof of~\cite[Cor.~5.1]{nonlinearity}, which deals with the nonlinearity of $\mathbb{Z}_{2^L}$-linear MacDonald codes of type $\alpha$ and $\beta$. 
\begin{theorem}[{\cite[Cor.~5.1]{nonlinearity}}~revisited]
The $\Integers_{2^L}$-linear codes $\gray(\code{M}_{k,u} ^{\alpha})$ and $\gray(\code{M}_{k,u} ^{\beta})$ obtained from the MacDonald codes $\code{M}_{k,u} ^{\alpha}$ and $\code{M}_{k,u} ^{\beta}$, respectively, are nonlinear for all $L\geq 2$, $k\geq 2$, and $u \in [1: k-1]$.    
\end{theorem}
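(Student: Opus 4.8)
The plan is to argue by contraposition through Corollary~\ref{coro:princ}: it suffices to exhibit, inside each MacDonald code, a pair of codewords whose layer vectors witness a failure of the Schur-closure condition $\code{C}_i \circ \code{C}_i \subseteq \code{C}_{i+1}$. The mechanism I would exploit is that both reduced generator matrices in~\eqref{eq:mac-alpha-reduced} contain the block $\hat{\mat{A}}^{k,0,\dots,0}$, so projecting either $\code{M}_{k,u}^{\alpha}$ or $\code{M}_{k,u}^{\beta}$ onto the coordinates carried by that block recovers exactly the Hadamard code $\code{H}^{k,0,\dots,0}$ (moving the last row of $\mat{A}^{k,0,\dots,0}$ to the top does not change the generated code). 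The first step is therefore a \emph{projection lemma}: since binary decomposition commutes with puncturing to a coordinate set $S$, the associated codes of a projected code are the projections $(\code{C}_i)|_S$ of the associated codes. Consequently, if $\code{C}_i \circ \code{C}_i \subseteq \code{C}_{i+1}$ held for the MacDonald code, it would hold for its projection; so any failure of closure on $\code{H}^{k,0,\dots,0}$ lifts back to a failure for $\code{M}_{k,u}^{\alpha}$ and $\code{M}_{k,u}^{\beta}$.

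For $L \geq 3$ and $k \geq 2$ this finishes the core case immediately. The proof of Theorem~\ref{thm:nonlinear-gray-hadamard} already produces explicit layer vectors $\vect{u},\vect{v}$ in the associated code $\code{C}_2$ of $\code{H}^{k,0,\dots,0}$ with $\vect{u}\circ\vect{v}\notin\code{C}_3$ (equivalently $4(\vect{u}\circ\vect{v})\notin\code{H}^{k,0,\dots,0}$, via Proposition~\ref{prop:connections-hammons-etal}). I would take the two codewords of $\code{M}_{k,u}$ whose projections onto $S$ realize $\vect{u}$ and $\vect{v}$, note that their second-layer Schur product projects to $\vect{u}\circ\vect{v}$ and hence escapes $(\code{C}_3)|_S$, so it escapes $\code{C}_3$ of the MacDonald code; Corollary~\ref{coro:princ} then yields nonlinearity.

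The remaining case $L=2$, $k\geq 2$ is not covered by the Hadamard result, so I would treat it by the same projection idea but with an embedded \emph{simplex} code instead. Using the recursive structure and the surviving columns after deleting those associated with $\mathbf{0}/\mat{G}_u^{\alpha}$ (resp.\ the type-$\beta$ analogue), I would show that $\code{M}_{k,u}^{\alpha}$ restricts on a suitable coordinate block to (a code equivalent to) $\code{S}_2^{\alpha}$, and $\code{M}_{k,u}^{\beta}$ to $\code{S}_2^{\beta}$. For these, Example~\ref{ex31} and the arguments of Theorems~\ref{th31} and~\ref{th32} give first-level witnesses $\vect{u}_1,\vect{v}_1\in\code{C}_1$ with $\vect{u}_1\circ\vect{v}_1\notin\code{C}_2$, which again lift by the projection lemma.

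The main obstacle I anticipate is purely bookkeeping rather than conceptual: one must verify, \emph{uniformly over all} $u\in[1:k-1]$, that the column deletions defining $\code{M}_{k,u}$ genuinely leave the $\hat{\mat{A}}^{k,0,\dots,0}$ block (for $L\geq 3$), respectively the $\code{S}_2$ block (for $L=2$), intact, and that the needed witness layer vectors arise as projections of actual MacDonald codewords and not merely of generator rows. Clarifying exactly which coordinates constitute $S$ in the reduced form~\eqref{eq:mac-alpha-reduced}, and checking that the deleted block is disjoint from $S$, is the step I would carry out most carefully.
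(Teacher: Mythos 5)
Your treatment of the case $L\geq 3$ is correct and is essentially the paper's own argument: both reduced matrices in~\eqref{eq:mac-alpha-reduced} contain the intact block $\hat{\mat{A}}^{k,0,\dots,0}$ (the deleted columns lie in the other blocks), the projection of the MacDonald code onto those coordinates is exactly $\code{H}^{k,0,\dots,0}$, and the Schur-closure failure underlying Theorem~\ref{thm:nonlinear-gray-hadamard} lifts back via Corollary~\ref{coro:princ}. Your projection lemma is sound (binary decomposition and the coordinatewise Schur product commute with puncturing, and non-membership in a projected code implies non-membership in the original), and it usefully makes rigorous what the paper states as ``immediate''.

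The genuine gap is your handling of $L=2$. The paper does not use any embedding there: it invokes the direct computation of~\cite{ColbournGupta}, namely that for the first two rows $\vect{c},\vect{d}$ of $\mat{G}_{k,u}^{\alpha}$ (resp.\ $\mat{G}_{k,u}^{\beta}$) one has $2\left((\vect{c}\bmod 2)\circ(\vect{d}\bmod 2)\right)\notin\code{M}_{k,u}^{\alpha}$ (resp.\ $\code{M}_{k,u}^{\beta}$), and then applies Proposition~\ref{prop:connections-hammons-etal}. Your plan of projecting onto an embedded $\code{S}_2^{\alpha}$ or $\code{S}_2^{\beta}$ breaks down precisely where you dismissed the issue as ``bookkeeping'': for $u=k-1$ the claimed simplex block does not survive the deletion, and in fact cannot exist. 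Concretely, for $k=2$, $u=1$ the code $\code{M}_{2,1}^{\alpha}$ has length $12<16$ and $\code{M}_{2,1}^{\beta}$ has length $5<6$, so neither admits any coordinate projection equivalent to the corresponding simplex code; more generally, for $u=k-1$ the deleted block $\frac{\mathbf{0}}{\mat{G}_{k-1}^{\alpha}}$ removes exactly the columns needed to complete a two-row-supported column set to a full simplex, and the best available projection is $\code{M}_{2,1}^{\alpha}$ (resp.\ $\code{M}_{2,1}^{\beta}$) itself. You also cannot transfer the witness of Example~\ref{ex31} from $\code{S}_2^{\alpha}$ down to $\code{M}_{2,1}^{\alpha}$ using your lemma, because the implication runs the wrong way: $\code{M}_{2,1}^{\alpha}$ is a projection of $\code{S}_2^{\alpha}$, so a closure failure on $\code{M}_{2,1}^{\alpha}$ would imply one on $\code{S}_2^{\alpha}$, not conversely (puncturing can turn a non-codeword into a codeword of the punctured code). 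Hence for $L=2$, $u=k-1$ an independent witness computation on the MacDonald code itself is unavoidable --- it does succeed (reducing the two rows of $\mat{G}_{2,1}^{\alpha}$ modulo $2$, twice their Schur product fails to be a codeword, e.g.\ at the column $(2,1)^{T}$), but this is exactly the ingredient the paper imports from~\cite{ColbournGupta} and is not supplied by Example~\ref{ex31} or Theorems~\ref{th31} and~\ref{th32}. Only for type $\alpha$ with $u\leq k-2$ does your simplex embedding go through (one obtains $\code{S}_2^{\alpha}$ punctured at its identically-zero coordinate, which is harmless for the witness).
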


\begin{proof} For $L=2$ and any $k \geq 2$, we refer to the proof in~\cite[Th.~2, 3]{ColbournGupta}, where the authors demonstrate that if $\vect{c}, \vect{d}$ refer to the first and second rows of the generator matrix of $\code{M}_{k,u}^\alpha$ (resp. $\code{M}_{k,u}^\beta$), then $2 \left((\vect{c} \bmod 2) \circ (\vect{d} \bmod 2)\right) \notin \code{M}_{k,u}^\alpha$ (resp. $\code{M}_{k,u}^\beta$), and from Proposition~\ref{prop:connections-hammons-etal}, the $\Integers_{2^L}$-linear codes $\gray(\code{M}_{k,u}^\alpha)$ and $\gray(\code{M}_{k,u}^\beta)$ are nonlinear.

For $L \geq 3$ and $k \geq 2$, notice that both matrices in~\eqref{eq:mac-alpha-reduced} contain the block $\hat{\mat{A}}^{k,0,\dots,0}$, which coincides with the generator matrix of $\code{H}^{k,0,\dots,0}$, whose $\Integers_{2^L}$-linear code $\gray(\code{H}^{k,0,\dots,0})$ was proven not to be linear in Theorem~\ref{thm:nonlinear-gray-hadamard}. Hence, it is immediate to conclude that also $\gray(\code{M}_{k,u} ^{\alpha})$ and $\gray(\code{M}_{k,u} ^{\beta})$ are nonlinear codes, for $L\geq 2$, $k\geq 2$. 
\end{proof}

\section{Final Remarks}
\label{sec:conclusion}

In this paper, we addressed the linearity of $\Integers_{2^L}$-linear codes by utilizing binary decomposition and associated codes derived from the binary decomposition of an element in $\Integers_{2^L}$. This approach enabled a systematic method for constructing linear $\Integers_{2^L}$-linear codes from Reed-Muller and cyclic codes. We also provided additional properties for the case of $L=2$ and $3$. We explored the relationship between Lee and Hamming distances of a $\Integers_{2^L}$-additive code and its corresponding $\Integers_{2^L}$-linear code. We revisited several results in the literature concerning simplex, Hadamard, and MacDonald codes. 

In terms of future work,  extending our results to generalizations of Carlet's map using Gray isometries between modular rings in~\cite{Gupta22} or to $\Integers_{p^L}$\cite{HengYue_15} figures as a promising direction. Concerning the latter, a recent paper~\cite{Torres-MartinVillanueva22} have explored the decoding of $\Integers_{p^L}$-linear codes, for $p$ prime, for linear and nonlinear $\Integers_{p^L}$-linear codes. This is also an advantage of the nested construction, since it allows efficient multistage decoding~\cite{ForneyTrottChung00}. 

Moreover, constructions of linear $\Integers_{2^L}$-linear codes in Secs.~\ref{sec:linear-gray-codes-rm} and~\ref{sec:linear-gray-cyclic} are related to lattices. It was already pointed out in Sec.~\ref{sec:linear-gray-codes-rm} that the Reed-Muller codes are used in the multilevel Construction of Barnes-Wall lattices~\cite{HuNebe_20}. One can show that the cyclic nested codes presented in Sec.~\ref{sec:linear-gray-cyclic} yield to cyclic (integer) lattices via Construction C. Cyclic lattices have several applications in cryptography~\cite{Micciancio07,FukshanskySun14,ZhengLiuLu23}.

\section*{Acknowledgment}

The authors thank the Editor and the anonymous referees for their helpful comments and valuable suggestions, which improved the quality of the paper. We also acknowledge the support of Estonian Research Council
via grant PRG2531 and the R\&D+i project PID2021-124613OB-I00 funded by MICIU/AEI/10.13039/501100011033 and FEDER, EU. This work was done partially when MFB and \O Y were at Simula UiB.

\bibliographystyle{plain}
\bibliography{defshort1,biblioCstar}

\end{document}